\documentclass[11pt]{article}

\usepackage[margin=1in]{geometry}
\usepackage{amsmath,amssymb,amsthm,mathtools,bm,bbm}
\usepackage{enumitem}
\usepackage{booktabs}
\usepackage{array}
\usepackage{microtype}
\usepackage[round,authoryear]{natbib}
\usepackage{setspace}
\usepackage{graphicx}
\usepackage[colorlinks=true,citecolor=blue,linkcolor=blue,urlcolor=blue]{hyperref}

\hypersetup{colorlinks=true,linkcolor=black,citecolor=black,urlcolor=blue}

\newcommand{\E}{\mathbb{E}}
\newcommand{\Prob}{\mathbb{P}}
\newcommand{\R}{\mathbb{R}}

\newcommand{\argmin}{\mathop{\mathrm{arg\,min}}}

\newcommand{\supp}{\mathop{\mathrm{supp}}}
\newcommand{\diag}{\mathop{\mathrm{diag}}}
\newcommand{\Var}{\mathop{\mathrm{Var}}}

\newcommand{\norm}[1]{\left\lVert #1 \right\rVert}
\newcommand{\abs}[1]{\left\lvert #1 \right\rvert}
\newcommand{\G}{\mathcal{G}}
\newcommand{\bfPsi}{\boldsymbol{\Psi}}
\newcommand{\X}{\mathcal{X}}
\newcommand{\Z}{\mathcal{Z}}
\newcommand{\Pcal}{\mathcal{P}}
\newcommand{\Wcal}{\mathcal{W}}
\newcommand{\Acal}{\mathcal{A}}
\newcommand{\ThetaSpace}{\Theta}
\newcommand{\Mcal}{\mathcal{M}}

\newcommand{\Jcal}{\mathcal{J}}

\numberwithin{equation}{section}

\theoremstyle{plain}
\newtheorem{assumption}{Assumption}[section]
\newtheorem{lemma}[assumption]{Lemma}
\newtheorem{proposition}[assumption]{Proposition}
\newtheorem{theorem}[assumption]{Theorem}

\theoremstyle{definition}

\theoremstyle{remark}

\title{Uniformly Consistent Semi-nonparametric Demand Estimation with Micro-Data\thanks{I am grateful to Denis Chetverikov for his guidance, detailed comments, and many helpful discussions. I also thank Andres Santos, Jinyong Hahn, Zhipeng Liao, Rosa Matzkin, Martin Hackmann, and John Asker for valuable comments and discussions.}}

\author{Richard Grigorian\thanks{Department of Economics, UCLA. Email: rgrigorian007@ucla.edu.}}
\date{July 2026}

\begin{document}
\maketitle

\begin{doublespace}
\begin{abstract}
This paper develops a profiled sieve minimum-distance estimator for a semi-nonparametric differentiated-products demand model with micro-level choice data. Building on \cite{berry_nonparametric_2024}, the estimator uses within-market variation in consumer covariates to recover a flexible consumer-heterogeneity function and market-specific composite intercepts. Excluded price instruments then separate these intercepts into a flexible price-side function and structural demand shocks. The main statistical challenge is that the number of profiled market intercepts grows with the number of markets. I show that, when both the number of markets and the minimum within-market sample size grow, this profiling step is asymptotically negligible and the common structural functions are uniformly consistently estimated. Monte Carlo evidence supports the consistency result and illustrates the value of flexible price-side estimation for counterfactual demand analysis.

\end{abstract}
\end{doublespace}

\noindent\textbf{Keywords:} differentiated-products demand; micro data; sieve minimum distance; nonparametric IV; incidental parameters; uniform consistency.

\medskip
\noindent\textbf{JEL Codes:} C14, C25, C36, L13.

\clearpage

% ---------------------------------------------------------------------
\section{Introduction}\label{sec:Intro}
% ---------------------------------------------------------------------

Micro-level choice data provide a strong source of identifying variation for differentiated-product demand systems. With market-level data, researchers observe aggregate shares, prices, product characteristics, and instruments. Identification then generally relies on cross-market variation in observed market outcomes -- such as prices and market shares -- and on instruments that shift endogenous prices without shifting demand. With micro data, we can observe different consumers within markets that face the same prices, product attributes, and market-level demand shocks, but differ in their observed characteristics, thereby granting us another source of variation. \cite{berry_nonparametric_2024} show that this within-market variation can identify nonparametric demand systems using only price instruments. Their result is an identification theorem. This paper studies the subsequent estimation problem.

The object of interest is a semi-nonparametric demand system of the form:
\[ \E[Q_{it}\mid C_{it}] = \sigma\Big(g_0(Z_{it},X_t)+\psi_0(P_t,X_t)+h_{0t}\Big),\quad  \E[h_{0t} \mid W_t, X_t] = 0 \]
where $Q_{it} \in \{0,1\}^J$ is the vector of inside-good choice indicators for consumer $i$ in market $t$, $Z_{it}$ denotes consumer covariates, $P_t$ denotes the vector of prices, $W_t$ denotes excluded price instruments, $X_t$ denotes observed product and market characteristics, and $h_{0t} \in \R^J$ is a vector of product-level demand shocks common to all consumers in market $t$. Here $C_{it}$ denotes the information observed by consumer $i$ in market $t$. The link function $\sigma$ is multinomial logit, but the functions comprising the index, $g_0$ and $\psi_0$, are unknown and flexible. Thus, $g_0$ captures observed consumer heterogeneity, while $\psi_0$ captures the systematic role of prices and observed product or market characteristics. The logit link is used for tractability; the model remains semi-nonparametric because the index functions are not parametrically specified.

The main difficulty in estimation is that the within-market data do not separately identify $\psi_0(P_t,X_t)$ and $h_{0t}$. Holding market $t$ fixed, both terms are constant across consumers and enter individual choice probabilities additively. Therefore, I define the composite market intercept:
\[ a_{0t} := \psi_0(P_t,X_t)  + h_{0t}\]
similar to \cite{hahn_time-invariant_2005}. This transformation will clarify the source of identification, and it will also help handle the incidental parameters problem often generated when $T \to \infty$. Moreover, the micro data identify how choices vary with $Z_{it}$ within a market and therefore provide information about $g_0$ and $a_{0t}$. They do not, by themselves, decompose $a_{0t}$ into the price-side function $\psi_0(P_t,X_t)$ and the demand shock, $h_{0t}$. This decomposition requires excluded price instruments, $W_t$. Therefore, the estimator must combine two sources of variation: within-market variation in $Z_{it}$, which identifies the consumer-heterogeneity component and the composite intercept, and cross-market variation in $(P_t,W_t,X_t)$, which separates the composite intercept into $\psi_0(P_t,X_t)$ and $h_{0t}$.

I propose a profiled sieve minimum-distance estimator that implements this logic. The estimator is a one-step estimator; however, it can be understood in three phases. First, for each candidate heterogeneity function $g$, I profile the market-specific composite intercepts $a_t$ by matching observed and predicted within-market shares. Second, I estimate $g_0$ using micro moments formed from the profiled individual choice residuals. Third, I estimate $\psi_0$ using the price-side conditional moment:
\[ \E[a_{0t}-\psi_0(P_t,X_t)\mid W_t,X_t]=0.\]
This moment will follow from the definition of $a_{0t}$, the normalization $\E[h_{0t} \mid X_t] = 0$, and the excluded-instrument restriction $\E[h_{0t} \mid W_t,X_t] = \E[h_{0t} \mid X_t]$ -- the latter two of which come directly from \cite{berry_nonparametric_2024}. I implement the above moment condition by projecting the implied shock, $a_t-\psi(P_t,X_t)$, onto a growing basis of functions with arguments $(W_t,X_t)$. This is a sieve minimum-distance (SMD), or conditional-moment, step in the sense of \cite{ai_efficient_2003}. The price-side moment is also closely related to the nonparametric IV (NPIV) logic in \cite{newey_instrumental_2003}: completeness of the conditional distribution of prices given instruments is what converts the conditional moment into identification of the unknown function $\psi_0$.

The main result of this paper establishes uniform consistency of the profiled SMD estimator.  The asymptotic framework has both the number of markets and the within-market sample sizes growing: $T \to \infty$ and $\min_{t\le T} n_t \to \infty$\footnote{This differs from large-product asymptotic frameworks for differentiated-products demand, where the number of products grows; see, for example, \cite{freyberger_asymptotic_2015} and \cite{lu_semi-nonparametric_2023}.}. This sampling scheme is relevant because it would ordinarily imply the aforementioned incidental parameters problem: the number of market-specific intercepts grows to infinity with the number of markets. However, each intercept is to be estimated from a growing number of consumers. In conjunction with my profiling approach, the market intercepts will not cause a problem for consistency as they will not be estimated with fixed information. Under regularity, identification, sieve approximation and uniform convergence conditions,
\[ \norm{\hat g - g_0}_\infty + \norm{\hat\psi - \psi_0}_\infty = o_p(1).\]
The logic comes in three steps: first, the inner share-matching problem consistently profiles $a_t$ uniformly over candidate functions $g$. Second, the feasible profiled criterion converges uniformly to its population counterpart. And lastly, the population criterion is uniquely minimized at the true parameter values, subject to identification normalizations.

My work contributes to several literatures. First, it contributes to the econometric and industrial organization literature on differentiated-products demand. \cite{berry_estimating_1994} and \cite{berry_automobile_1995} provide the canonical framework for estimating demand with endogenous prices and market-level data. Subsequent work developed richer sources of variation, micro moments, and improved computational methods; examples include \cite{berry_differentiated_2004}, \cite{nevo_measuring_2001}, \cite{petrin_quantifying_2002}, \cite{dube_improving_2012}, \cite{reynaert_improving_2014}, and \cite{gandhi_measuring_2025}. Rather than specifying a parametric random-coefficients utility model and deriving an estimating equation through a market-share inversion, I begin from the Berry-Haile micro-data identification framework and impose a tractable semi-nonparametric link for estimation.

Second, my work is closely related to Berry and Haile's work on nonparametric identification of differentiated-products demand. \cite{berry_connected_2013} study connected substitutes and demand inversion, while \cite{berry_identification_2014} develop nonparametric identification results using market-level data and instruments. Most relevant to my work, \cite{berry_nonparametric_2024} show that micro data can identify demand using only price instruments. I use this identification result as the foundation for my estimator. In this sense, my paper is an estimation counterpart to the Berry-Haile micro-data identification result.

Third, the paper contributes to the literature on flexible and nonparametric demand estimation. \cite{compiani_market_2022}, \cite{wang_sieve_2023}, and \cite{tebaldi_nonparametric_2023} study flexible demand estimation in differentiated-products settings. \cite{monardo_measuring_2025} develops a scalable method that exploits restrictions such as permutation invariance to reduce the dimensionality of demand estimation. \cite{lu_semi-nonparametric_2023} studies semi-nonparametric estimation of random-coefficients logit demand using aggregate market-share data and a large-(J) asymptotic framework. I take a complementary approach by using micro-level choice data and exploiting within-market variation across consumers instead of using market-level data to flexibly estimate aggregate demand or the distribution of random coefficients. This difference changes both the source of identifying variation and the relevant asymptotic framework: the number of markets and the number of consumers per market grow, while the number of products is treated as fixed.

Fourth, the paper contributes to the literature on nonparametric IV and sieve minimum distance. \cite{newey_instrumental_2003} show how completeness identifies nonparametric IV models, while \cite{hall_nonparametric_2005}, \cite{horowitz_applied_2011}, and related work study the ill-posedness and regularization issues that arise in nonparametric IV problems. \cite{ai_efficient_2003} develop efficient estimation for conditional moment models with unknown functions, and \cite{chen_sieve_2015} develops inference methods for semi/nonparametric conditional moment models. The price-side of my estimator follows this logic: the residual, \(a_t-\psi(P_t,X_t)\), must have zero conditional mean given \((W_t,X_t)\), and the conditional mean is estimated by projection onto an expanding basis. This use of growing unconditional moments to approximate a conditional moment is also related to \cite{donald_empirical_2003}. The distinctive feature of my work is that the residual contains a profiled object, \(a_t\), estimated from within-market micro data.

Finally, this paper relates to work on fixed effects and incidental parameters in nonlinear models. The inner profile step estimates a vector of market-specific intercepts whose dimension grows with $T$. In classical nonlinear panel settings, such nuisance parameters create incidental-parameters bias when each fixed effect is estimated with a fixed number of observations \citep{hahn_jackknife_2004}. In my setting, each market intercept is estimated using $n_t$ within-market observations, and $\min_{t\le T} n_t \to \infty$. This is adjacent to the logic in \cite{hahn_time-invariant_2005} where the nuisance parameters do not contaminate consistency of the common structural functions when the information per nuisance parameter diverges.

The rest of the paper proceeds as follows. Section \ref{sec:Motivation} provides a brief motivation for the estimator. Section \ref{sec:Model} presents the model and defines the composite market intercept. Section \ref{sec:Identification} states the identification argument and the normalizations used to separate the additive components. Section \ref{sec:Estimator} defines the profiled sieve minimum-distance estimator. Section \ref{sec:Consistency} states the main uniform consistency theorem. Section \ref{sec:Simulations} presents Monte Carlo evidence, and Section \ref{sec:Conclusion} concludes. The appendix contains the Berry-Haile primitive assumptions, all of the formal proofs, simulation details, and supplemental discussions regarding the curse of dimensionality.

\section{Motivation}\label{sec:Motivation}
Demand estimation is often used to answer counterfactual questions regarding markets. How would prices change after a cost shock? How much demand would divert from one product to another after a price increase? What are the implied markups for different firms? How would consumer surplus change after a new product, tax, or merger? In differentiated-product markets, these answers depend on the shape of the demand curve. A demand model that fits market shares but imposes the wrong substitution patterns or the wrong price curvature can give misleading counterfactual conclusions.

In applications, researchers would use the empirical distribution of the $n_t$ observed consumers in market $t$ to compute the aggregate inside-good share:
\[     \hat S_t(p)
    :=
    \frac1{n_t}\sum_{i=1}^{n_t}
    \sigma\!\left(
    \hat g(Z_{it},X_t)+\hat\psi(p,X_t)+\hat h_t
    \right), \]
where $p$ is a counterfactual price vector. From this object, one can compute own- and cross-price elasticities, diversion ratios, markup estimates, and cost shock pass-through. These objects depend on the slope or curvature of demand; hence, flexibly recovering the demand surface should improve the accuracy of the aforementioned estimated counterfactual objects.

\clearpage

% ---------------------------------------------------------------------

\section{Model}\label{sec:Model}
% ---------------------------------------------------------------------
This section presents the structural model for which I develop an estimator. This is done in two steps. First, I describe the high-level model studied by \cite{berry_nonparametric_2024} that motivates this paper. In particular, demand is generated by an index that separates within-market consumer heterogeneity from market-level demand shocks. Second, I impose additional structure used for estimation: a multinomial-logit link, a flexible price-side function, and a composite market intercept that can be profiled from within-market micro data. The location normalizations needed to separate the additive components are imposed in Section \ref{sec:Identification}, not in the model statement itself.

\subsection{High-level Model}

Markets are indexed by $t=1,\dots,T$, and consumers in market $t$ are indexed by $i=1,\dots,n_t$. Each consumer chooses among $J$ inside goods and an outside option. Let $\Jcal=\{1,\dots,J\}$  be the set of alternatives and let $Q_{it}=(Q_{1it},\dots,Q_{Jit})'\in\{0,1\}^J$ denote the vector of inside-good choice indicators. Thus $Q_{jit}=1$ if consumer $i$ in market $t$ chooses product $j$, while the outside option is chosen when all components of $Q_{it}$ are equal to zero.

Consumer covariates are denoted by $Z_{it}\in\Z\subseteq\R^{d_z}$. Market-level prices, observed product/market characteristics, and excluded price instruments are denoted by $P_t\in\Pcal\subseteq\R^J$, $X_t\in\X\subseteq\R^{d_x}$, and $W_t\in\Wcal \subseteq \R^{d_w}$, respectively. Market-level unobserved demand shocks are collected in $\Xi_t\in\R^J$. Here $d_z$ denotes the dimension of the consumer covariates, $d_x$ denotes the dimension of the observed product/market characteristics, and $d_w$ denotes the dimension of the excluded price instruments. Throughout the paper, a prime denotes transpose. Following \cite{berry_nonparametric_2024}, the full choice environment faced by consumer $i$ in market $t$ is
\[
    C_{it}:=(Z_{it},P_t,X_t,\Xi_t).
\]
The structural choice probability vector is
\begin{equation}
\label{eq:str_choice_prob}
    s(C_{it})=(s_1(C_{it}),\dots,s_J(C_{it}))'
    :=\E[Q_{it}\mid C_{it}],
\end{equation}
with outside-good probability $s_0(C_{it})=1-\sum_{j=1}^J s_j(C_{it})$. From here, we get to a model of demand through the Berry--Haile representation -- i.e., the existence of a finite-dimensional index that enters an otherwise nonparametric demand link.

\begin{assumption}[Berry--Haile index structure]
\label{ass:BH-index-main}
There exists an unknown link function $\sigma^{BH}$ and an index $\gamma:\Z\times\X\times\R^J\to\R^J$ such that
\begin{equation}
\label{eq:BH-index}
    s(C_{it})
    =
    \sigma^{BH}\!\left(\gamma(Z_{it},X_t,\Xi_t),P_t,X_t\right).
\end{equation}
Moreover, the index is additively separable in consumer-level observables and market-level unobservables: for an unknown function $\Gamma_0:\Z\times\X\to\R^J$,
\begin{equation}
\label{eq:BH-additive-index}
    \gamma(Z_{it},X_t,\Xi_t)=\Gamma_0(Z_{it},X_t)+\Xi_t.
\end{equation}
\end{assumption}

Assumption \ref{ass:BH-index-main} is the economic foundation for using micro data. Within a market, the objects $(P_t,X_t,\Xi_t)$ are fixed across consumers, while $Z_{it}$ varies. Thus, within-market variation in $Z_{it}$ shifts the index without changing the market-level environment. This is the source of variation that Berry and Haile exploit to identify conditional demand without instruments for quantities.

For the estimator, it is convenient to rewrite the Berry--Haile index as a sum of a consumer-heterogeneity function and a market-level shock. I write this as
\begin{equation}
\label{eq:BH-gh-form}
    s(C_{it})
    =
    \sigma^{BH}\!\left(g_0(Z_{it},X_t)+h_{0t},P_t,X_t\right),
\end{equation}
where $g_0:\Z\times\X\to\R^J$ captures the component that varies with consumer covariates, and $h_{0t}\in\R^J$ captures the market-level demand shock. Specifically, $g_0(Z_{it},X_t)\equiv\Gamma_0(Z_{it},X_t)+\E[\Xi_t\mid X_t]$ and $h_{0t}\equiv\Xi_t-\E[\Xi_t\mid X_t]$, such that their sum recovers equation \eqref{eq:BH-additive-index}. At this point, the decomposition between $g_0$ and $h_{0t}$ is only defined up to location normalizations. Those normalizations are stated in Section \ref{sec:Identification}.

\subsection{Semi-nonparametric Restriction}

This paper estimates a specialization of \eqref{eq:BH-gh-form}. The fully nonparametric Berry--Haile link $\sigma^{BH}$ is replaced by a multinomial-logit link, but the systematic role of prices and observed product/market characteristics is left flexible through an unknown function, $\psi_0$.

\begin{assumption}[Logit link with flexible price-side function]
\label{ass:logit-psi-main}
Let $h_0(x,\xi)$ denote the normalized market shock evaluated at $(x,\xi)$, so that $h_{0t}=h_0(X_t,\Xi_t)$. There exists an unknown function $\psi_0:\Pcal\times\X\to\R^J$ such that, for every $(z,p,x,\xi)$ in the support of the data,
\begin{equation}
\label{eq:logit-specialization}
    \sigma^{BH}\!\left(g_0(z,x)+h_0(x,\xi),p,x\right)
    =
    \sigma\!\left(g_0(z,x)+\psi_0(p,x)+h_0(x,\xi)\right),
\end{equation}
where $\sigma:\R^J\to(0,1)^J$ is the multinomial-logit map \citep{mcfadden_measurement_1974}. For any $u=(u_1,\dots,u_J)'\in\R^J$,
\begin{equation}
\label{eq:sigma-main}
    \sigma_j(u)=\frac{\exp\{u_j\}}{1+\sum_{k=1}^J\exp\{u_k\}},\quad j\in\Jcal,
    \qquad
    \sigma_0(u)=\frac{1}{1+\sum_{k=1}^J\exp\{u_k\}}.
\end{equation}
\end{assumption}

Under Assumptions \ref{ass:BH-index-main} and \ref{ass:logit-psi-main}, the maintained estimable demand model is
\begin{equation}
\label{eq:maintained-choice-model}
    \E[Q_{it}\mid C_{it}]
    =
    \sigma\!\left(g_0(Z_{it},X_t)+\psi_0(P_t,X_t)+h_{0t}\right).
\end{equation}
This is the main modeling restriction imposed by this paper beyond the Berry--Haile identification framework. The logit link makes the estimator tractable and gives a simple share-matching problem within each market. At the same time, the model remains semi-nonparametric because both $g_0$ and $\psi_0$ are unknown functions. The function $g_0$ captures how consumer covariates shift relative tastes for inside goods, while $\psi_0$ captures the systematic role of prices and observed product/market characteristics. The vector $h_{0t}$ captures product-level demand shocks that are common to all consumers within market $t$.

The object directly learned from the within-market micro data is not $h_{0t}$ by itself. Holding market $t$ fixed, both $\psi_0(P_t,X_t)$ and $h_{0t}$ are constant across consumers and enter \eqref{eq:maintained-choice-model} only through their sum. Following the fixed-effect profiling logic in \cite{berry_automobile_1995} and \cite{hahn_time-invariant_2005}, I define the composite market intercept
\begin{equation}
\label{eq:a0-def-main}
    a_{0t}:=\psi_0(P_t,X_t)+h_{0t}.
\end{equation}
Then the model can be written as
\begin{equation}
\label{eq:choice-composite-main}
    \E[Q_{it}\mid C_{it}]
    =
    \sigma\!\left(g_0(Z_{it},X_t)+a_{0t}\right).
\end{equation}
This representation is central for both computation and asymptotics. For a candidate $g$, the inner step of the estimator will profile the market-specific vector $a_t$ from the within-market choice data. The price-side step then decomposes the profiled intercept into the systematic component $\psi_0(P_t,X_t)$ and the structural shock $h_{0t}$ using excluded price instruments. 

The market-specific objects $a_{0t}$ and $h_{0t}$ are nuisance objects. They are nevertheless economically important because, after estimating $(g_0,\psi_0)$, the structural market shock is recovered as $h_{0t}=a_{0t}-\psi_0(P_t,X_t)$.

The target structural parameter is
\begin{equation}
\label{eq:theta0-main}
    \theta_0:=(g_0,\psi_0)\in\ThetaSpace:=\G\times\bfPsi.
\end{equation}
Here $\G$ and $\bfPsi$ are function spaces. Let $\mathcal D_g:=\supp(Z_{it},X_t)$ and $\mathcal D_\psi:=\supp(P_t,X_t).$ The space $\G$ contains normalized, bounded, smooth functions $g:\mathcal D_g\to\R^J$, and the space $\bfPsi$ contains bounded, smooth functions $\psi:\mathcal D_\psi\to\R^J$. Heuristically, 
\begin{equation}
\label{eq:function-spaces-main}
\begin{aligned}
    \G
    &\subseteq
    \left\{
    g:\mathcal D_g\to\R^J:
    g_j\in C^{s_g}_{B_g}(\mathcal D_g),\ j\le J,
    \text{ $g$ satisfies the normalization in Assumption \ref{ass:normalizations-main}}
    \right\}, \\
    \bfPsi
    &\subseteq
    \left\{
    \psi:\mathcal D_\psi\to\R^J:
    \psi_j\in C^{s_\psi}_{B_\psi}(\mathcal D_\psi),\ j\le J
    \right\},
\end{aligned}
\end{equation}
where $C^s_B(D)$ denotes a bounded H\"older ball of smoothness $s$ with radius $B$ on the compact domain $D$. The smoothness and boundedness restrictions are regularity conditions used for compactness and sieve approximation. The normalization in $\G$ removes the part of $g$ that is constant across consumers within a market, while the normalization $\E[h_{0t}\mid X_t]=0$ fixes the location of $\psi_0$ relative to the structural shock. Section \ref{sec:Identification} presents and discusses these normalizations formally. Similarly, Appendix \ref{app:function-spaces} defines these function spaces more explicitly and explains why these restrictions imply compactness under the sup norm.

The consistency analysis uses a double-asymptotic sampling scheme. Markets are independent across $t$, and conditional on the market-level environment, the observations within each market are independent.

\begin{assumption}[Sampling]
\label{ass:sampling-main}
Let $\Mcal_t:=(P_t,X_t,W_t,\Xi_t)$ be a market environment so that $\{\Mcal_t\}_{t\ge1}$ are identically distributed. Further, $\left\{
        \left(\Mcal_t,\{(Q_{it},Z_{it})\}_{i=1}^{n_t}\right)
    \right\}_{t=1}^T$ are independent across $t$. Conditional on
$\Mcal_t$, the observations
$\{(Q_{it},Z_{it})\}_{i=1}^{n_t}$ are i.i.d., with a conditional
distribution that is common across markets. The sample sizes $n_t$ are
nonrandom and satisfy
\[
    T\to\infty,
    \qquad
    n_{\min}:=\min_{t\le T}n_t\to\infty.
\]
\end{assumption}

The number of market-specific intercepts grows with $T$, but each intercept is estimated using a growing number of within-market observations. This is why the market effects are nuisance parameters without creating the usual incidental-parameters problem associated with a fixed amount of information per nuisance parameter, as discussed in \cite{hahn_time-invariant_2005}. The estimator combines many markets with large within-market micro samples.

% ---------------------------------------------------------------------
\section{Identification}\label{sec:Identification}
% ---------------------------------------------------------------------
This section states the identification logic needed for the main text. In general, \cite{berry_nonparametric_2024} shows that the high-level demand system in equation (\ref{eq:BH-gh-form}) can be nonparametrically identified under the assumptions stated in Appendix \ref{app:BH-assumptions}. This paper fundamentally relies on their identification result; however, since I impose additional restrictions in equation (\ref{eq:maintained-choice-model}), I will show that the structural functions are identified, given the logit link.

Towards this, I state the main assumptions my estimator relies on and explain how they correspond to the Berry--Haile primitives. There are two objects to identify. The first is the within-market component of demand: the consumer heterogeneity function, $g_0$, and the composite market intercept, $a_{0t}$. The second is the price-side decomposition of the composite intercept $a_{0t} = \psi_0(P_t, X_t) + h_{0t}$.

The first step follows from the Berry--Haile index, support, invertibility, and normalization assumptions. The second step follows from their price-instrument and completeness assumptions.

\subsection{Within-market Identification of $g_0$ and $a_{0t}$}

The additive index contains multiple components, so location normalizations are required before identification can be stated. These normalizations do not impose economic substitution patterns; they simply assign the level of the index to particular components.

\begin{assumption}[Location normalizations]
\label{ass:normalizations-main}
The normalized parameter space imposes the following restrictions.
\begin{enumerate}[label=(\roman*),leftmargin=2em]
    \item The structural market shock satisfies
    \begin{equation}
    \label{eq:h-normalization-main}
    \E[h_{0t}\mid X_t]=0
    \qquad\text{a.s.}
    \end{equation}

    \item The function $g_0$ contains no pure $X$-only component. One convenient normalization is
    \begin{equation}
    \label{eq:g-normalization-main}
    \E[g_0(Z_{it},X_t)\mid X_t=x]=0
    \qquad\text{for all }x\in\X.
    \end{equation}
    An alternative, equivalent for identification purposes, is the baseline normalization
    \begin{equation}
    \label{eq:g-baseline-normalization-main}
        g_0(z^0(x),x)=0
        \qquad\text{for all }x\in\X,
    \end{equation}
    where $z^0(x)$ is a known baseline point in the support of $Z_{it}$ conditional on $X_t=x$.

    \item The outside option has index normalized to zero, as embedded in the denominator of the logit probabilities in \eqref{eq:sigma-main}.
\end{enumerate}
\end{assumption}

The first normalization follows from the construction of $h_{0t}$. The third is the standard outside-option normalization in multinomial logit models. The second normalization, coming directly from \cite{berry_nonparametric_2024}, fixes the location of $g_0$ relative to the market-level intercept. Without \eqref{eq:g-normalization-main} or \eqref{eq:g-baseline-normalization-main}, a component of $g_0$ that depends only on $X_t$ would be indistinguishable from the market intercept. See Appendix \ref{app:normalizations} for a full equivalence-class argument.

The logit specification of $\sigma(\cdot)$ gives a simple way to see what the within-market data identify. Let $s_j(C_{it})$ and $s_0(C_{it})$ denote the inside-good and outside-good probabilities generated by \eqref{eq:choice-composite-main}. Then, for each $j\in\Jcal$,
\begin{equation}
\label{eq:log-odds-main}
    \log\frac{s_j(C_{it})}{s_0(C_{it})}
    =
    g_{0j}(Z_{it},X_t)+a_{0jt}.
\end{equation}
Within a market, $a_{0jt}$ is constant across consumers, while $g_{0j}(Z_{it},X_t)$ varies with $Z_{it}$. Hence, after the normalization on $g_0$, within-market variation identifies the heterogeneity function and the composite market intercepts. The estimator will implement this logic by profiling $a_t$ market-by-market for each candidate $g$.

\begin{proposition}[Within-market identification]
\label{prop:within-market-id-main}
Suppose the Berry--Haile primitive assumptions stated in Appendix \ref{app:BH-assumptions} hold, and suppose the logit specialization in \eqref{eq:choice-composite-main} is correctly specified. Then, subject to the location normalizations in Assumption \ref{ass:normalizations-main}, $(g_0, a_{0t})$ are identified from the conditional choice probabilities.
\end{proposition}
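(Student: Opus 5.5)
The plan is to use the log-odds representation \eqref{eq:log-odds-main}, which inverts the logit link pointwise. Identification here means that the conditional choice probabilities $s(C_{it})$ are identified from the joint distribution of $(Q_{it}, Z_{it})$ within markets. Two admissible pairs $(g,a_t)$ and $(\tilde g,\tilde a_t)$ that generate the same probabilities must then coincide. First, I will recall from the Berry--Haile primitives in Appendix \ref{app:BH-assumptions} why $s(C_{it})$ is identified as a function of $Z_{it}$ within each market. Conditional on $\Mcal_t$, the pair $(Q_{it},Z_{it})$ has a well-defined conditional law, so $z\mapsto \E[Q_{it}\mid Z_{it}=z,\Mcal_t]$ is identified on the conditional support of $Z_{it}$. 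This conditional expectation equals $s(z,P_t,X_t,\Xi_t)$. By the logit form, every component lies in $(0,1)$ and $s_0>0$, so the map $\sigma$ is a bijection from $\R^J$ onto the open simplex interior. Its inverse is $\sigma^{-1}(s)_j=\log(s_j/s_0)$.

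Second, applying $\sigma^{-1}$ shows that the function $\lambda_t(z):=g_0(z,X_t)+a_{0t}$ is identified for every $z$ in the support of $Z_{it}$ given market $t$. Suppose $(\tilde g,\tilde a_t)$ also rationalizes the data. Then $\tilde g(z,X_t)-g_0(z,X_t)=a_{0t}-\tilde a_t=:c_t$ for all such $z$, so the difference is constant in $z$ within the market.

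Third, I will use the normalization in Assumption \ref{ass:normalizations-main}(ii) to pin down $c_t$. Under the baseline version \eqref{eq:g-baseline-normalization-main}, evaluating at $z=z^0(X_t)$ gives $c_t=0$ directly. Under the mean version \eqref{eq:g-normalization-main}, I will take conditional expectations over $Z_{it}$ given $X_t$, using the conditional distribution within the market. This gives $0-0=c_t$. Hence $\tilde g(\cdot,X_t)=g_0(\cdot,X_t)$ on the support, and then $\tilde a_t=a_{0t}$. Ranging over markets, or over $x\in\X$, identifies $g_0$ on $\mathcal D_g$ and each $a_{0t}$.

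The main obstacle is the mean normalization. The expectation in \eqref{eq:g-normalization-main} conditions only on $X_t$, whereas the within-market argument conditions on the full environment $\Mcal_t$. I need the conditional law of $Z_{it}$ given $\Mcal_t$ to have the same support as, and to be usable in place of, the law given $X_t$. I will handle this with the Berry--Haile support and independence primitive that $Z_{it}$ is independent of $(\Xi_t,P_t,W_t)$ given $X_t$. With that primitive, $c_t$ depends only on $X_t$, and averaging over $Z_{it}\mid X_t$ kills it. If only a support condition is available, I will fall back on the baseline normalization, noting the two are equivalent for identification as in Appendix \ref{app:normalizations}.
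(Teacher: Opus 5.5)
Your proposal is correct and follows essentially the same route as the paper: invert the logit through the log-odds (Lemma~\ref{lem:app-log-odds}), observe that $\tilde g(z,X_t)-g_0(z,X_t)=a_{0t}-\tilde a_t$ is constant in $z$, and remove that constant with the normalization via Lemma~\ref{lem:app-normalization-equivalence}. The conditional-independence primitive you invoke in your last paragraph is not needed and is not among the paper's assumptions. Because $\tilde g(z,x)-g_0(z,x)$ does not depend on $t$, the common-support condition in Assumption~\ref{ass:app-BH-basic} already gives $c_t=c(X_t)$, and averaging over $Z_{it}\mid X_t=x$ then yields $c(x)=0$ directly under the mean normalization. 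This is exactly the step the paper takes when it writes $\tilde g(z,x)=g_0(z,x)+c(x)$ on $\mathcal Z(x)$.
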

\begin{proof}
    See Appendix \ref{app:identification}.
\end{proof}

\subsection{Price-side Identification of $\psi_0$}

The within-market step identifies the composite object $a_{0t}$, but it does not by itself separate $\psi_0(P_t,X_t)$ from $h_{0t}$. This separation is obtained from excluded price instruments. The relevant price-side restriction is inherited from the Berry--Haile instrument assumption in Appendix \ref{app:BH-assumptions}.

\begin{assumption}[Price-side instruments and completeness]
\label{ass:price-id-main}

\hfill

\begin{enumerate}[label=(\roman*),leftmargin=2em]
    \item The excluded instruments satisfy the price-side validity condition
\begin{equation}
\label{eq:price-iv-main}
    \E[h_{0t}\mid W_t,X_t]
    =
    \E[h_{0t}\mid X_t]
    \qquad\text{a.s.}
\end{equation}

\item For the relevant class of vector-valued functions $\bm{\mathcal F}:\Pcal\times\X\to\R^J$ with finite expectation,
\[
    \E[\bm{\mathcal F}(P_t,X_t)\mid W_t, X_t]=0\quad\text{a.s.}
    \quad\Longrightarrow\quad
    \bm{\mathcal F}(P_t,X_t)=0\quad\text{a.s.}
\]
\end{enumerate}

\end{assumption}

Assumption \ref{ass:price-id-main} is a restatement of the Berry--Haile price-instrument and completeness conditions. Equation \eqref{eq:price-iv-main} says that, after conditioning on $X_t$, the excluded instruments do not predict the structural demand shock. Combining \eqref{eq:price-iv-main} with the normalization $\E[h_{0t}\mid X_t]=0$ gives
\[
    \E[h_{0t}\mid W_t,X_t]=0.
\]
Since $h_{0t}=a_{0t}-\psi_0(P_t,X_t)$, the identified composite intercept satisfies the price-side conditional moment
\begin{equation}
\label{eq:psi-CM-main}
    \E\!\left[a_{0t}-\psi_0(P_t,X_t)\mid W_t,X_t\right]=0.
\end{equation}
This NPIV-style moment condition is what is used to identify and eventually estimate $\psi_0$.

\begin{theorem}[Identification of the normalized structural functions]
\label{thm:identification-main}
Suppose the maintained demand model \eqref{eq:maintained-choice-model} holds. Suppose further that the conditions of Proposition \ref{prop:within-market-id-main} and Assumption \ref{ass:price-id-main} hold. Then the pair $(g_0,\psi_0)$ is identified in the normalized parameter space. The corresponding market shocks are identified by $h_{0t}=a_{0t}-\psi_0(P_t,X_t).$
\end{theorem}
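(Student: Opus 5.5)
The proof has two stages that mirror the two identification subsections: first the within-market stage, then the price-side stage. Throughout, identification means the following. Suppose $\tilde\theta=(\tilde g,\tilde\psi)\in\ThetaSpace$ and an intercept process $\tilde a_t$ generate the same conditional choice probabilities and satisfy the same normalizations and moment restrictions as $(\theta_0,a_{0t})$. Then $\tilde\theta=\theta_0$ almost surely on the relevant supports.

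\emph{Step 1 (within-market).} By Proposition \ref{prop:within-market-id-main}, $g_0$ and $a_{0t}$ are identified from the conditional choice probabilities under Assumption \ref{ass:normalizations-main}. So any observationally equivalent pair has $\tilde g=g_0$ on $\mathcal D_g$ and $\tilde a_t=a_{0t}$ a.s. The concrete mechanism is the log-odds representation \eqref{eq:log-odds-main}. Equivalence gives
\[
\tilde g_j(Z_{it},X_t)-g_{0j}(Z_{it},X_t)=a_{0jt}-\tilde a_{jt}.
\]
The right-hand side is constant within the market, so the difference $\tilde g_j-g_{0j}$ is a function of $X_t$ alone. Normalization \eqref{eq:g-normalization-main}, or alternatively \eqref{eq:g-baseline-normalization-main}, then forces this function to be zero. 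I will only cite the proposition here, but I will note that $a_{0t}$ is thereby a known functional of the distribution of observables in market $t$.

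\emph{Step 2 (price side).} Combine the exclusion restriction \eqref{eq:price-iv-main} with \eqref{eq:h-normalization-main} to obtain $\E[h_{0t}\mid W_t,X_t]=0$. Together with \eqref{eq:a0-def-main}, this gives the conditional moment \eqref{eq:psi-CM-main}. Now suppose $\tilde\psi\in\bfPsi$ also satisfies $\E[a_{0t}-\tilde\psi(P_t,X_t)\mid W_t,X_t]=0$. Here $a_{0t}$ can be used in place of $\tilde a_t$ because of Step 1. Subtracting the two moment equations gives
\[
\E[\bm{\mathcal F}(P_t,X_t)\mid W_t,X_t]=0 \quad\text{a.s.}, \qquad \bm{\mathcal F}:=\tilde\psi-\psi_0.
\]
Completeness, Assumption \ref{ass:price-id-main}(ii), then yields $\tilde\psi=\psi_0$ almost surely on $\mathcal D_\psi$. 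Finally, $h_{0t}=a_{0t}-\psi_0(P_t,X_t)$ is a composition of identified objects and so is identified.

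\emph{Main obstacle.} The main technical care is needed in Step 2, to make sure completeness actually applies to $\bm{\mathcal F}$. Two things must be checked.

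First, $\bm{\mathcal F}$ must belong to the class for which completeness is assumed. Since $\bfPsi$ consists of bounded H\"older functions, $\bm{\mathcal F}$ is bounded and hence integrable. I will state explicitly that this class contains all differences of elements of $\bfPsi$.

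Second, the almost-sure conclusion must be upgraded to equality as functions on $\mathcal D_\psi$. For this I will use continuity of elements of $\bfPsi$ together with the fact that $\mathcal D_\psi$ is the support, so every point of it is charged by neighborhoods.

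A secondary point is that Step 2 must not introduce a new location indeterminacy. No constant can be shifted from $\psi$ into $h$, because $h_{0t}$ is pinned down by \eqref{eq:h-normalization-main}. Any such shift would have to be a function of $X_t$ alone, and completeness already rules that out since such a function is measurable with respect to $(W_t,X_t)$.
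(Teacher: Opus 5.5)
Your proposal is correct and follows the paper's proof essentially step for step. You identify $(g_0,a_{0t})$ within markets from the log-odds and the $g$-normalization, then use the conditional moment $\E[a_{0t}-\psi(P_t,X_t)\mid W_t,X_t]=0$ with completeness to pin down $\psi_0$, and recover $h_{0t}$ as a difference. Your extra checks—that $\tilde\psi-\psi_0$ lies in the completeness class, and the upgrade from almost-sure equality to equality on $\mathcal D_\psi$ via continuity—are refinements the paper's identification proof leaves implicit; the latter appears only later, in the proof of Proposition~\ref{prop:app-population-id}.
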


\begin{proof}
    See Appendix \ref{app:identification}
\end{proof}

% ---------------------------------------------------------------------

\section{Estimation}\label{sec:Estimator}
% ---------------------------------------------------------------------

This section defines the estimator that will then be analyzed in Section \ref{sec:Consistency}. Its construction mirrors the identification argument in that the within-market data identify, for each candidate heterogeneity function, the composite market intercept that shifts all consumers in the same market. The cross-market price-side moment then separates this composite intercept into the systematic price component and the structural demand shock.

Before introducing the sample criterion, it is useful to isolate the two population restrictions that make the estimator valid. I define the structural choice residual at the truth by
\begin{equation}
\label{eq:structural-residual-main}
    \varepsilon_{it}^0
    :=
    Q_{it}-\sigma\!\left(g_0(Z_{it},X_t)+a_{0t}\right).
\end{equation}
The first restriction is the structural residual condition,
\begin{equation}
\label{eq:structural-residual-condition-main}
    \E[\varepsilon_{it}^0\mid C_{it}]=0,
\end{equation}
which follows directly from the maintained choice model. The second restriction is the price-side conditional moment,
\begin{equation}
\label{eq:price-side-pop-restriction-main}
    \E[a_{0t}-\psi_0(P_t,X_t)\mid W_t,X_t]=0,
\end{equation}
which follows from $a_{0t}=\psi_0(P_t,X_t)+h_{0t}$, the normalization $\E[h_{0t}\mid X_t]=0$, and the price-instrument condition $\E[h_{0t}\mid W_t,X_t]=\E[h_{0t}\mid X_t]$.

\begin{proposition}[Population restrictions underlying the estimator]
\label{prop:population-restrictions-main}
Under the maintained demand model, the structural residual condition \eqref{eq:structural-residual-condition-main} holds. Consequently, for any square-integrable function $\varphi(Z_{it},X_t)$,
\begin{equation}
\label{eq:generic-micro-pop-restriction-main}
    \E[\varphi(Z_{it},X_t)\otimes\varepsilon_{it}^0]=0.
\end{equation}
If, in addition, Assumptions \ref{ass:normalizations-main} and \ref{ass:price-id-main} hold, then the price-side restriction \eqref{eq:price-side-pop-restriction-main} holds.
\end{proposition}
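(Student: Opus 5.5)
The proof has three parts: the structural residual condition, the unconditional micro moments, and the price-side conditional moment.

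\emph{Structural residual condition.} By \eqref{eq:a0-def-main} and \eqref{eq:maintained-choice-model}, $\E[Q_{it}\mid C_{it}]=\sigma(g_0(Z_{it},X_t)+a_{0t})$. The argument $g_0(Z_{it},X_t)+a_{0t}=g_0(Z_{it},X_t)+\psi_0(P_t,X_t)+h_0(X_t,\Xi_t)$ is a measurable function of $C_{it}=(Z_{it},P_t,X_t,\Xi_t)$. Hence $\sigma(g_0(Z_{it},X_t)+a_{0t})$ can be pulled out of the conditional expectation, and
\[
\E[\varepsilon^0_{it}\mid C_{it}]=\E[Q_{it}\mid C_{it}]-\sigma\big(g_0(Z_{it},X_t)+a_{0t}\big)=0 .
\]
This is \eqref{eq:structural-residual-condition-main}.

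\emph{Micro moments.} Let $\varphi(Z_{it},X_t)$ be square integrable. It is a function of $C_{it}$, so the law of iterated expectations gives, componentwise,
\[
\E[\varphi(Z_{it},X_t)\otimes\varepsilon^0_{it}]=\E\big[\varphi(Z_{it},X_t)\otimes\E[\varepsilon^0_{it}\mid C_{it}]\big]=0 .
\]
This step is legitimate only if each product $\varphi_k\,\varepsilon^0_{jit}$ is integrable. That holds because $Q_{it}\in\{0,1\}^J$ and $\sigma$ maps into $(0,1)^J$, so $|\varepsilon^0_{jit}|\le 1$. Then $\E|\varphi_k\varepsilon^0_{jit}|\le\E|\varphi_k|\le(\E\varphi_k^2)^{1/2}<\infty$. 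The only point needing care is this integrability check; after it, the argument is immediate.

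\emph{Price-side restriction.} By \eqref{eq:a0-def-main}, $a_{0t}-\psi_0(P_t,X_t)=h_{0t}$. Assumption \ref{ass:price-id-main}(i) gives $\E[h_{0t}\mid W_t,X_t]=\E[h_{0t}\mid X_t]$, and Assumption \ref{ass:normalizations-main}(i) sets the right-hand side to zero almost surely. Therefore $\E[a_{0t}-\psi_0(P_t,X_t)\mid W_t,X_t]=0$ almost surely, which is \eqref{eq:price-side-pop-restriction-main}.

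The price-side step presumes that $h_{0t}$ is integrable. This is implicit in the normalization $\E[h_{0t}\mid X_t]=0$, since that conditional expectation is only defined for integrable $h_{0t}$, so nothing further needs to be shown.
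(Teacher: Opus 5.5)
Your proposal is correct and follows the paper's proof in Appendix~\ref{app:population-restrictions} essentially step for step: the residual condition comes from the maintained model, the micro moments from iterated expectations using $C_{it}$-measurability of $\varphi$, and the price-side moment from $a_{0t}-\psi_0(P_t,X_t)=h_{0t}$ combined with the instrument condition and the normalization $\E[h_{0t}\mid X_t]=0$. Your explicit integrability check ($|\varepsilon^0_{jit}|\le 1$, so $\E|\varphi_k\varepsilon^0_{jit}|\le(\E\varphi_k^2)^{1/2}$) is correct; the paper leaves this detail implicit.
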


\begin{proof}
See Appendix \ref{app:population-restrictions}.
\end{proof}

In words, the estimator works as follows. First, choose finite-dimensional sieve spaces for $g_0$ and $\psi_0$. Second, for any candidate $g$, choose the market intercepts $a_t$ that make predicted market shares match observed market shares. Third, use the resulting profiled choice residuals to estimate $g_0$ from within-market variation in $Z_{it}$. Fourth, choose $\psi$ so that the recovered shocks $a_t-\psi(P_t,X_t)$ have zero conditional mean given $(W_t,X_t)$. Fifth, after estimating $(g_0,\psi_0)$, recover the market-specific composite intercepts and structural demand shocks.

\subsection{Structural Sieves and Norms}

Let $\mathcal D_g:=\supp(Z_{it},X_t)$ and $\mathcal D_\psi:=\supp(P_t,X_t)$. For two vector-valued functions $g,\tilde g:\mathcal D_g\to\R^J$ and $\psi,\tilde\psi:\mathcal D_\psi\to\R^J$, define
\begin{equation}
\label{eq:linfty-norms-main}
    \norm{g-\tilde g}_{\infty}
    :=
    \max_{j\le J}\sup_{(z,x)\in\mathcal D_g}
    \abs{g_j(z,x)-\tilde g_j(z,x)},
    \qquad
    \norm{\psi-\tilde\psi}_{\infty}
    :=
    \max_{j\le J}\sup_{(p,x)\in\mathcal D_\psi}
    \abs{\psi_j(p,x)-\tilde\psi_j(p,x)}.
\end{equation}
The product norm on the structural parameter space is
\begin{equation}
\label{eq:theta-norm-main}
    \norm{(g,\psi)-(\tilde g,\tilde\psi)}_{\ThetaSpace}
    :=
    \norm{g-\tilde g}_{\infty}
    +
    \norm{\psi-\tilde\psi}_{\infty}.
\end{equation}
The term ``product norm'' means the distance used on the Cartesian product $\ThetaSpace=\G\times\bfPsi$: it measures the distance between two structural parameters by adding the supremum-norm distance between their $g$ components and the supremum-norm distance between their $\psi$ components. Thus convergence in $\norm{\cdot}_{\ThetaSpace}$ is equivalent to the joint statement $\norm{g-\tilde g}_\infty\to0$ and $\norm{\psi-\tilde\psi}_\infty\to0$. The subscript $\infty$ denotes the usual uniform, or $\ell_\infty$, norm: for vector-valued functions, I take the largest absolute error over products and over the relevant support. Because the objects of interest are functions, uniform consistency is the natural notion of consistency as it requires the entire estimated demand system to converge over its domain, rather than only at fixed points.

Let $b^{K_g}(z,x)=(b_1(z,x),\dots,b_{K_g}(z,x))'\in\R^{K_g}$ be a basis for the consumer-heterogeneity function, and let $r^{K_\psi}(p,x)=(r_1(p,x),\dots,r_{K_\psi}(p,x))'\in\R^{K_\psi}$ be a basis for the price-side function. For each product $j\in\Jcal$, write
\[
    g_{\beta,j}(z,x)=b^{K_g}(z,x)'\beta_j,
    \qquad
    \psi_{\vartheta,j}(p,x)=r^{K_\psi}(p,x)'\vartheta_j,
\]
where $\beta_j\in\R^{K_g}$, $\vartheta_j\in\R^{K_\psi}$, $\beta=(\beta_1',\dots,\beta_J')'\in\R^{JK_g}$, and $\vartheta=(\vartheta_1',\dots,\vartheta_J')'\in\R^{JK_\psi}$. The finite-dimensional structural sieve is
\begin{equation}
\label{eq:structural-sieve-main}
    \ThetaSpace_K:=\G_{K_g}\times\bfPsi_{K_\psi},
    \qquad
    K:=(K_g,K_\psi,K_q),
\end{equation}
where $\ThetaSpace_K$ depends on the structural dimensions $(K_g,K_\psi)$, while the full sample criterion -- defined below -- also depends on the conditional-moment dimension, $K_q$. Thus $K_g$ controls both the sieve for $g$ and the dimension of the micro moment, $K_\psi$ controls only the sieve for $\psi$, and $K_q$ controls only the series approximation to the price-side conditional mean. The component sieves are
\[
\begin{aligned}
    \G_{K_g}
    &:=
    \{g_\beta:\beta\in B_{K_g},\ g_\beta\text{ satisfies the normalization on }g\},\\
    \bfPsi_{K_\psi}
    &:=
    \{\psi_\vartheta:\vartheta\in V_{K_\psi}\},
\end{aligned}
\]
for compact coefficient sets $B_{K_g}$ and $V_{K_\psi}$, respectively.\footnote{In practice, this can be implemented either by imposing the resulting linear restrictions on $\beta$ or by using a centered basis, for example $b^{K_g}(z,x)-\E[b^{K_g}(Z_{it},x)\mid X_t=x]$ under the conditional-mean normalization, or $b^{K_g}(z,x)-b^{K_g}(z^0(x),x)$ under the baseline normalization. The point is that the estimator never treats an $X$-only term in $g$ as separately estimable from the market intercept.}

\subsection{Profiling the Composite Market Intercept}

Fix a candidate $g\in\G_{K_g}$. In market $t$, the object that is constant across consumers is not $h_{0t}$ alone, but the composite intercept $a_{0t}=\psi_0(P_t,X_t)+h_{0t}$. The estimator therefore profiles $a_t$ market-by-market. Let $\hat s_t=n_t^{-1}\sum_{i=1}^{n_t}Q_{it}$. For $a\in\Acal\subset\R^J$, define
\begin{equation}
\label{eq:inner-potential-main}
    \hat L_t(g,a)
    :=
    \frac{1}{n_t}\sum_{i=1}^{n_t}
    \log\!\left(1+\sum_{j=1}^J\exp\{g_j(Z_{it},X_t)+a_j\}\right)
    -\hat s_t'a .
\end{equation}
The profiled composite intercept is
\begin{equation}
\label{eq:ahat-profile-main}
    \hat a_t(g)\in\argmin_{a\in\Acal}\hat L_t(g,a).
\end{equation}
The log-sum-exp criterion is used only as a convex objective whose gradient equals the negative of the within-market share-matching moment. When the minimizer is interior, its first-order condition is exactly the within-market share-matching equation
\begin{equation}
\label{eq:share-matching-main}
    \frac{1}{n_t}\sum_{i=1}^{n_t}
    \left\{Q_{it}-\sigma(g(Z_{it},X_t)+\hat a_t(g))\right\}=0.
\end{equation}
Thus the inner step is not introducing a separate likelihood estimator for the structural functions; it is just a convenient way to solve the just-identified within-market moment for the composite intercept.

For later reference, let $a_t^*(g)$ denote the population counterpart of \eqref{eq:ahat-profile-main}. At the truth, $a_t^*(g_0)=a_{0t}$; Appendix \ref{app:inner-profiling} proves this claim and gives primitive conditions under which the sample profile converges uniformly to $a_t^*(g)$.

\subsection{Micro Block}

The micro block estimates $g_0$ using within-market variation in $Z_{it}$. Define the sample-profiled residual
\begin{equation}
\label{eq:profiled-residual-main}
    \hat\rho_{it}(g):=Q_{it}-\sigma(g(Z_{it},X_t)+\hat a_t(g)),
\end{equation}
and define its population-profile analogue by
\begin{equation}
\label{eq:profiled-residual-pop-main}
    \rho_{it}^*(g):=Q_{it}-\sigma(g(Z_{it},X_t)+a_t^*(g)).
\end{equation}
For each $K_g$, the feasible micro moment is
\begin{equation}
\label{eq:micro-moment-main}
    \hat m_{g,K_g}(g)
    :=
    \frac{1}{N}\sum_{t=1}^T\sum_{i=1}^{n_t}
    b^{K_g}(Z_{it},X_t)\otimes \hat\rho_{it}(g),
    \qquad
    N:=\sum_{t=1}^T n_t .
\end{equation}
Its population counterpart is
\begin{equation}
\label{eq:micro-moment-pop-main}
    m_{g,K_g}(g)
    :=
    \E\!\left[b^{K_g}(Z_{it},X_t)\otimes\rho_{it}^*(g)\right]
    \in\R^{JK_g}.
\end{equation}
The subscript $K_g$ is substantive as the dimension of the moment changes with the number of basis functions. Here $\otimes$ denotes the Kronecker product: if $u\in\R^{K_g}$ and $v\in\R^J$, then $u\otimes v\in\R^{K_gJ}$ stacks all products $u_kv_j$. This moment is centered at the truth only after the profile identity $a_t^*(g_0)=a_{0t}$ has been established.

\begin{proposition}[Validity of the profiled micro moment]
\label{prop:micro-valid-main}
Under the maintained demand model and the population profile definition,
\[
    m_{g,K_g}(g_0)=0
    \qquad\text{for every }K_g.
\]
\end{proposition}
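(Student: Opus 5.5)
The plan is to reduce the claim to the structural residual condition \eqref{eq:structural-residual-condition-main}. The bridge is the profile identity $a_t^*(g_0)=a_{0t}$, which turns the population-profiled residual at $g_0$ into the structural residual. The proof has two steps: first establish this identity, then apply the law of iterated expectations using Proposition \ref{prop:population-restrictions-main}.

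\textbf{Step 1 (profile identity).} I take the population profile $a_t^*(g)$ to be the minimizer over $\Acal$ of the conditional-on-market population potential
\[
L_t(g,a):=\E\!\left[\log\!\Big(1+\textstyle\sum_{j}\exp\{g_j(Z_{it},X_t)+a_j\}\Big)\,\Big|\,\Mcal_t\right]-s_t'a,
\qquad
s_t:=\E[Q_{it}\mid\Mcal_t].
\]
\begin{itemize}[leftmargin=2em]
\item \emph{Convexity.} $L_t(g,\cdot)$ is convex in $a$. Its Hessian is $\E[\mathrm{diag}(\sigma)-\sigma\sigma'\mid\Mcal_t]$, which is positive definite because logit probabilities lie strictly inside the simplex. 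So $L_t(g,\cdot)$ is strictly convex, and any interior critical point is the unique minimizer.
\item \emph{First-order condition.} Differentiating under the conditional expectation (justified by boundedness of $\sigma$) gives the gradient $\E[\sigma(g(Z_{it},X_t)+a)\mid\Mcal_t]-s_t$.
\item \emph{The truth is a critical point.} At $g=g_0$ and $a=a_{0t}$, the maintained model \eqref{eq:choice-composite-main} and iterated expectations over $Z_{it}$ given $\Mcal_t$ give
\[
s_t=\E\big[\E[Q_{it}\mid C_{it}]\mid\Mcal_t\big]=\E\big[\sigma(g_0(Z_{it},X_t)+a_{0t})\mid\Mcal_t\big].
\]
Here $a_{0t}$ is $\Mcal_t$-measurable, since $C_{it}$ contains $\Mcal_t$'s relevant parts. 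So the gradient vanishes at $a_{0t}$.
\item \emph{Conclusion.} Provided $a_{0t}$ lies in the interior of $\Acal$, which I would state as part of the assumption that $\Acal$ contains the support of $a_{0t}$, strict convexity gives $a_t^*(g_0)=a_{0t}$.
\end{itemize}

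\textbf{Step 2 (moment validity).} With the identity in hand, $\rho^*_{it}(g_0)=\varepsilon^0_{it}$. Then
\[
m_{g,K_g}(g_0)=\E\big[b^{K_g}(Z_{it},X_t)\otimes\varepsilon^0_{it}\big].
\]
Each component $b_k$ is a fixed function of $(Z_{it},X_t)$. Assuming the basis functions are bounded on the compact $\mathcal D_g$, hence square integrable, Proposition \ref{prop:population-restrictions-main} applied with $\varphi=b^{K_g}$ gives zero. Equivalently, condition on $C_{it}$, pull out $b^{K_g}$, and use $\E[\varepsilon^0_{it}\mid C_{it}]=0$. Nothing in this argument depends on $K_g$, so the conclusion holds for every $K_g$.

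\textbf{Main obstacle.} The delicate point is Step 1. Specifically, one must make sure that the population profile is defined conditionally on the market environment, rather than by pooling across markets, and that the minimizer is interior so the first-order condition characterizes it. I would handle interiority by assuming $\Acal$ is a compact set whose interior contains the bounded support of $a_{0t}$. Boundedness of that support follows from bounded $\psi_0$ together with a bounded-shock assumption. Uniqueness then comes from strict convexity, which is automatic for the logit link.
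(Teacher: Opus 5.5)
Your overall route is the same as the paper's: first the profile identity $a_t^*(g_0)=a_{0t}$ (the paper cites Proposition \ref{prop:app-true-profile}), then iterated expectations on $C_{it}$. Your Step 2 coincides with the paper's argument.

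The gap is in Step 1, at the equality $s_t=\E[\E[Q_{it}\mid C_{it}]\mid\Mcal_t]$. The tower property only lets you insert an inner conditional expectation on a $\sigma$-field that contains $\sigma(\Mcal_t)$. But $\Mcal_t=(P_t,X_t,W_t,\Xi_t)$ includes the excluded instruments $W_t$, which are not part of $C_{it}=(Z_{it},P_t,X_t,\Xi_t)$. Iterating over $Z_{it}$ given $\Mcal_t$ therefore gives only $s_t=\E\bigl[\E[Q_{it}\mid C_{it},W_t]\mid\Mcal_t\bigr]$.

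The maintained model pins down $\E[Q_{it}\mid C_{it}]$ and nothing more. It is silent on whether $W_t$ shifts individual choices once $C_{it}$ is held fixed. So $\nabla_aL_t(g_0,a_{0t})$ need not vanish, and $a_t^*(g_0)=a_{0t}$ does not follow.

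The measurability of $a_{0t}$ is fine, since it is a function of $(P_t,X_t,\Xi_t)$. Your justification that $C_{it}$ ``contains $\Mcal_t$'s relevant parts'' is exactly the containment that fails. You also cannot sidestep this by conditioning the profile on $(P_t,X_t,\Xi_t)$ alone. The within-market sample is drawn conditionally on the full $\Mcal_t$, so $\hat s_t$ targets $\E[Q_{it}\mid\Mcal_t]$, and that is the profile the estimator actually converges to.

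The missing ingredient is an exclusion restriction linking the structural choice probabilities to the market shares. The paper imposes it as the market-share compatibility condition $\E[\varepsilon^0_{it}\mid\Mcal_t]=0$ in Assumption \ref{ass:app-profile-primitive}(v). It notes that this condition follows from the pointwise exclusion $\E[Q_{it}\mid C_{it},W_t]=\E[Q_{it}\mid C_{it}]$, i.e.\ instruments affect choices only through prices. Proposition \ref{prop:app-true-profile} uses exactly this condition, together with uniqueness and interiority of the population minimizer, to obtain the profile identity. Once you add it, your gradient computation gives $\nabla_aL_t(g_0,a_{0t})=0$, your strict-convexity argument finishes Step 1, and the rest of your proof stands.
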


\begin{proof}
See Appendix \ref{app:inner-profiling}.
\end{proof}

With a positive definite weighting matrix $W_{g,K_g}\in\R^{JK_g\times JK_g}$, define the micro-block criterion
\begin{equation}
\label{eq:calQg-main}
    \hat{\mathcal Q}_{g,K_g}(g):=
    \hat m_{g,K_g}(g)'W_{g,K_g}\hat m_{g,K_g}(g).
\end{equation}
The same basis $b^{K_g}$ appears in the sieve for $g$ and in the micro moments. This is intentional as the moment projects the residual onto the same directions used to perturb the unknown function. In a finite-dimensional logit model these would be score equations for the coefficients of $g$; here they are used as GMM moments for a growing sieve.

\subsection{Price-side SMD Block}

For a candidate pair $(g,\psi)\in\ThetaSpace_K$, define the implied structural demand shock
\begin{equation}
\label{eq:hhat-implied-main}
    \hat h_t(g,\psi):=\hat a_t(g)-\psi(P_t,X_t).
\end{equation}
The population analogue is $h_t^*(g,\psi):=a_t^*(g)-\psi(P_t,X_t)$. At the truth, this implied shock equals $h_{0t}$ once $a_t^*(g_0)=a_{0t}$. This connects the profiled market intercept and the price-side conditional moment: the object projected onto functions of $(W_t,X_t)$ is centered at the truth.

Let $R_t:=(W_t,X_t) \in \mathcal{R}$ where $ \mathcal R:=\supp(R_t)\subseteq\Wcal\times\X$ so that $\mathcal R$ is the support of the conditioning variables in the price-side conditional moment. Let $q^{K_q}(r)=(q_1(r),\dots,q_{K_q}(r))'\in\R^{K_q}$ be a series basis for functions of $r\in\mathcal R$. Following \cite{ai_efficient_2003}, the population conditional moment for the price-side block is
\begin{equation}
\label{eq:mpsi-pop-main}
    m_\psi(r;g,\psi)
    :=
    \E[h_t^*(g,\psi)\mid R_t=r]
    =
    \E[a_t^*(g)-\psi(P_t,X_t)\mid R_t=r]
    \in\R^J.
\end{equation}
At the truth, the price-side conditional moment is zero.

\begin{proposition}[Validity of the price-side SMD moment]
\label{prop:price-smd-valid-main}
Under the maintained demand model, the profile identity $a_t^*(g_0)=a_{0t}$, and Assumptions \ref{ass:normalizations-main} and \ref{ass:price-id-main},
\[
    m_\psi(R_t;g_0,\psi_0)=0
    \qquad\text{a.s.}
\]
\end{proposition}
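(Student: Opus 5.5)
The argument is a direct computation: substitute the truth into the definition of $m_\psi$, use the profile identity to replace the profiled intercept by the structural one, and then apply the normalization and the instrument condition. It is essentially Proposition \ref{prop:population-restrictions-main} read through the profile identity.

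First, evaluate the implied population shock at $(g_0,\psi_0)$. By definition, $h_t^*(g_0,\psi_0)=a_t^*(g_0)-\psi_0(P_t,X_t)$. The maintained profile identity $a_t^*(g_0)=a_{0t}$ (established in Appendix \ref{app:inner-profiling}) turns this into $a_{0t}-\psi_0(P_t,X_t)$. The definition \eqref{eq:a0-def-main} then gives $h_t^*(g_0,\psi_0)=h_{0t}$ almost surely. Substituting into \eqref{eq:mpsi-pop-main},
\[
  m_\psi(R_t;g_0,\psi_0)=\E[h_{0t}\mid W_t,X_t]\quad\text{a.s.}
\]

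Second, show this conditional expectation is zero. Assumption \ref{ass:price-id-main}(i) gives $\E[h_{0t}\mid W_t,X_t]=\E[h_{0t}\mid X_t]$ a.s. Assumption \ref{ass:normalizations-main}(i) gives $\E[h_{0t}\mid X_t]=0$ a.s. Chaining the two yields $m_\psi(R_t;g_0,\psi_0)=0$ a.s. This is exactly the price-side restriction \eqref{eq:price-side-pop-restriction-main}, so I would simply cite Proposition \ref{prop:population-restrictions-main} at this point.

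The only step needing care is making the conditional expectations well defined and the a.s.\ equalities legitimate. I would check two things. First, $h_{0t}$ is integrable: $\psi_0\in\bfPsi$ is bounded, and $a_{0t}$ must be integrable for the moment to make sense, which I take as implicit in the model; alternatively, integrability of $\Xi_t$ suffices, since $h_{0t}=\Xi_t-\E[\Xi_t\mid X_t]$. Second, the profile identity holds almost surely in $\Mcal_t$, so it can be substituted inside the conditional expectation given $R_t$. With these in place, the result follows immediately.
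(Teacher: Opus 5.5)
Your proof is correct and follows the paper's argument: substitute the profile identity $a_t^*(g_0)=a_{0t}$ into the definition of $m_\psi$ to get $\E[a_{0t}-\psi_0(P_t,X_t)\mid W_t,X_t]$. That conditional mean vanishes by Proposition~\ref{prop:population-restrictions-main}, i.e., by chaining the instrument exclusion condition with the normalization $\E[h_{0t}\mid X_t]=0$. Your integrability remarks make explicit what the paper leaves implicit and do not change the argument.
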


\begin{proof}
See Appendix \ref{app:SMD-details}.
\end{proof}

Let $\bm Q_T\in\R^{T\times K_q}$ be the matrix with $t$-th row $q^{K_q}(R_t)'$, and assume $\bm Q_T'\bm Q_T$ is nonsingular on the event used to compute the criterion. Following \cite{ai_efficient_2003}, the series estimate of the conditional mean of the implied shock is
\begin{equation}
\label{eq:mpsi-hat-main}
    \hat m_{\psi,K_q}(r;g,\psi)
    :=
    \left[\sum_{t=1}^T\hat h_t(g,\psi)q^{K_q}(R_t)'\right]
    (\bm Q_T'\bm Q_T)^{-1}q^{K_q}(r)
    \in\R^J.
\end{equation}
In this expression, $\hat h_t(g,\psi)q^{K_q}(R_t)'$ is a $J\times K_q$ matrix, so the bracketed sum is also $J\times K_q$. The corresponding SMD criterion is
\begin{equation}
\label{eq:calQpsi-hat-main}
    \hat{\mathcal Q}_{\psi,K_q}(g,\psi)
    :=
    \frac{1}{T}\sum_{t=1}^T
    \hat m_{\psi,K_q}(R_t;g,\psi)'\hat\Sigma(R_t)^{-1}
    \hat m_{\psi,K_q}(R_t;g,\psi),
\end{equation}
where $\hat\Sigma(R_t)\in\R^{J\times J}$ is uniformly positive definite. Equivalently, when $\hat\Sigma(R_t)=I_J$, this is GMM with the increasing set of moments generated by $q^{K_q}(R_t)\otimes\{\hat a_t(g)-\psi(P_t,X_t)\}$; as $K_q$ grows, these unconditional moments approximate the conditional moment in \eqref{eq:mpsi-pop-main}, as in \cite{ai_efficient_2003} and \cite{donald_empirical_2003}.

\subsection{Full Profiled Estimator}

The full sample criterion is
\begin{equation}
\label{eq:calQhat-full-main}
    \hat{\mathcal Q}_K(g,\psi)
    :=
    \hat{\mathcal Q}_{g,K_g}(g)
    +
    \hat{\mathcal Q}_{\psi,K_q}(g,\psi).
\end{equation}
The estimator is any approximate minimizer\footnote{The use of an approximate minimizer is standard in extremum-estimation consistency arguments. It avoids requiring the numerical optimizer to attain the exact global minimum of a nonconvex finite-sample criterion; the proof only needs the optimizer to achieve the infimum up to an $o_p(1)$ error. See \cite{newey_chapter_1994}.} over the structural sieve:
\begin{equation}
\label{eq:estimator-main}
    \hat{\mathcal Q}_K(\hat g,\hat\psi)
    \le
    \inf_{(g,\psi)\in\ThetaSpace_K}\hat{\mathcal Q}_K(g,\psi)+o_p(1).
\end{equation}
The final profiled composite intercepts and recovered structural shocks are
\begin{equation}
\label{eq:ahat-final-main}
    \hat a_t:=\hat a_t(\hat g),
    \qquad
    \hat h_t:=\hat a_t-\hat\psi(P_t,X_t).
\end{equation}
The estimated demand system can then be used for counterfactual analysis. Holding the recovered shock fixed, predicted demand in market $t$ at price vector $p$ is
\begin{equation}
\label{eq:share-counterfactual-main}
    \hat S_t(p)
    :=
    \frac1{n_t}\sum_{i=1}^{n_t}
    \sigma\{\hat g(Z_{it},X_t)+\hat\psi(p,X_t)+\hat h_t\}.
\end{equation}
Differentiating \eqref{eq:share-counterfactual-main} with respect to prices gives local elasticities and diversion ratios. Combining those derivatives with an ownership matrix gives Bertrand markups and pass-through. Appendix \ref{app:io-objects} details these formulas. The present paper establishes consistency of the structural objects entering \eqref{eq:share-counterfactual-main}; inference for the resulting IO functionals is left for future work.

\clearpage

% ---------------------------------------------------------------------

\section{Consistency}\label{sec:Consistency}
% ---------------------------------------------------------------------

This section states the main consistency argument. I keep the main text at the level needed to understand why the estimator works and defer the primitive profiling and empirical-process details to Appendix \ref{app:inner-profiling}--\ref{app:main-consistency-proof}. The proof has two moving parts. First, each market-specific intercept is consistently profiled because $n_{\min}:=\min_{t\le T}n_t\to\infty$. Second, after replacing the population profile by this estimated profile, the outer criterion is a moving-sieve minimum-distance criterion. The word ``moving'' is important as the micro moment contains the basis vector $b^{K_g}$, and therefore the population micro criterion changes with $K_g$.

For each sieve dimension $K=(K_g,K_\psi,K_q)$, I define the population micro criterion
\begin{equation}
\label{eq:population-micro-criterion-main}
    \mathcal Q_{g,K_g}(g)
    :=
    m_{g,K_g}(g)'W_{g,K_g}m_{g,K_g}(g).
\end{equation}
The population price-side criterion is the full conditional-mean criterion
\begin{equation}
\label{eq:population-price-criterion-main}
    \mathcal Q_\psi(g,\psi)
    :=
    \E\!\left[
    m_\psi(R_t;g,\psi)'\Sigma(R_t)^{-1}m_\psi(R_t;g,\psi)
    \right].
\end{equation}
Thus the population criterion associated with the finite-sample criterion is
\begin{equation}
\label{eq:population-criterion-main}
    \mathcal Q_K(g,\psi)
    :=
    \mathcal Q_{g,K_g}(g)+\mathcal Q_\psi(g,\psi).
\end{equation}
The price-side sample criterion depends on $K_q$ through the series regression used to estimate the conditional mean, but its population target is the full conditional moment in \eqref{eq:population-price-criterion-main}. The micro population criterion, by contrast, is indexed by $K_g$ because the moment vector itself has dimension $JK_g$.

At the truth, both terms are zero for every $K$. The micro term is zero by Proposition \ref{prop:micro-valid-main}, and the price-side term is zero by Proposition \ref{prop:price-smd-valid-main}. Therefore
\[
    \mathcal Q_K(g_0,\psi_0)=0
    \qquad\text{for every }K.
\]

\begin{assumption}[Sieve approximation, compactness, and criterion preservation]
\label{ass:sieve-approx-main}
The normalized parameter space $\ThetaSpace=\G\times\bfPsi$ is compact under the product sup norm in \eqref{eq:theta-norm-main}. The true pair $\theta_0=(g_0,\psi_0)$ belongs to $\ThetaSpace$. Moreover, there exist sieve approximants
\[
    \theta_K^0:=(g_{0,K_g},\psi_{0,K_\psi})
    \in
    \ThetaSpace_K
\]
such that as $(K_g, K_\psi) \to \infty$ along the maintained sequence of sieve dimensions,
\begin{equation}
\label{eq:sieve-approx-main}
    \norm{\theta_K^0-\theta_0}_{\ThetaSpace}
    =
    \norm{g_{0,K_g}-g_0}_{\infty}
    +
    \norm{\psi_{0,K_\psi}-\psi_0}_{\infty}
    \to0,
\end{equation}
and the same sieve approximation is criterion-preserving: as $K \to \infty$,
\begin{equation}
\label{eq:criterion-preservation-main}
    \mathcal Q_K(\theta_K^0)\to0 .
\end{equation}
\end{assumption}

This assumption plays the usual role in sieve extremum estimation, with one additional requirement needed because the criterion moves with $K$. Compactness prevents the minimization problem from escaping to irregular functions with no convergent subsequence, while sieve approximation says that the finite-dimensional spaces eventually approach the true structural functions. The criterion-preservation condition says that these approximants are not merely close in sup norm; they are also close in the metric actually used by the moving population objective. This is automatic under suitable basis-envelope and weighting restrictions, but it is not implied by ordinary pointwise continuity of a fixed criterion because $\mathcal Q_K$ changes with $K$. Proposition
\ref{prop:app-sieve-compact-approx} in the appendix provides primitive sufficient conditions under which Assumption \ref{ass:sieve-approx-main} holds.

\begin{assumption}[Regularity of the profile map]
\label{ass:profile-main}
Let $\G_{K_g}^0:=\G_{K_g}\cup\{g_0\}$. For every
$g\in\G_{K_g}^0$ and every market $t$, $a_t^*(g)$ is the unique
interior minimizer of the population inner criterion, $L_t(g,a)$. The
empirical inner criteria converge uniformly to their population
counterparts over $(g,a,t)\in\G_{K_g}\times\Acal\times\{1,\ldots,T\}$,
and the population inner criterion is uniformly separated away from
$a_t^*(g)$ over $g\in\G_{K_g}^0$. Finally, the profile map is uniformly
Lipschitz on $\G_{K_g}^0$: for some $C<\infty$,
\[
    \max_{t\le T}\norm{a_t^*(g)-a_t^*(\tilde g)}
    \le C\norm{g-\tilde g}_{\infty}
    \qquad\text{for all }g,\tilde g\in\G_{K_g}^0.
\]
\end{assumption}

This assumption is the formal statement that the inner step is well posed. Uniqueness and separation ensure that the population intercept $a_t^*(g)$ is not ambiguous. The uniform law of large numbers is what lets each empirical market problem approximate its population counterpart, uniformly across candidate $g$ and markets. The Lipschitz condition says that a small error in $g$ cannot generate a large error in the population profile. For the logit link, these restrictions follow from bounded indices, compactness of $\Acal$, and the strong convexity of the log-sum-exp objective in $a$; see Appendix Proposition \ref{prop:app-profile-lipschitz}.

The next result is the first connection between the infeasible population problem and the feasible estimator: uniformly over candidate heterogeneity functions and markets, the empirical profile behaves like the population profile.

\begin{proposition}[Uniform consistency of the profiled intercept]
\label{prop:profile-consistency-main}
Under Assumptions \ref{ass:sampling-main} and \ref{ass:profile-main},
\[
    \sup_{g\in\G_{K_g}}\max_{t\le T}
    \norm{\hat a_t(g)-a_t^*(g)}=o_p(1).
\]
\end{proposition}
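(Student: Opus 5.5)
The plan is the standard argmin-consistency argument for extremum estimators, applied uniformly over the index set $(g,t)\in\G_{K_g}\times\{1,\dots,T\}$. Write $L_t(g,a)$ for the population inner criterion, so that $a_t^*(g)$ is its unique interior minimizer over $\Acal$. Define
\[
    \Delta_T:=\sup_{g\in\G_{K_g}}\max_{t\le T}\sup_{a\in\Acal}\abs{\hat L_t(g,a)-L_t(g,a)}.
\]
By the uniform convergence part of Assumption \ref{ass:profile-main}, which is meaningful because $n_{\min}\to\infty$ under Assumption \ref{ass:sampling-main}, we have $\Delta_T=o_p(1)$.

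Next I will use the separation condition. Fix $\epsilon>0$ and set
\[
    \eta(\epsilon):=\inf_{g\in\G_{K_g}^0}\min_{t\le T}\ \inf_{a\in\Acal:\norm{a-a_t^*(g)}\ge\epsilon}\{L_t(g,a)-L_t(g,a_t^*(g))\}.
\]
The separation condition in Assumption \ref{ass:profile-main} says $\eta(\epsilon)>0$ uniformly in $T$. For the logit case this should come from strong convexity of log-sum-exp in $a$ on the compact set $\Acal$ with bounded indices, which gives $\eta(\epsilon)\ge c\epsilon^2$.

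The core step is a deterministic inclusion. Since $\hat a_t(g)$ minimizes $\hat L_t(g,\cdot)$ over $\Acal$, for every $(g,t)$ we have
\[
    L_t(g,\hat a_t(g))\le \hat L_t(g,\hat a_t(g))+\Delta_T\le \hat L_t(g,a_t^*(g))+\Delta_T\le L_t(g,a_t^*(g))+2\Delta_T .
\]
Hence if $\norm{\hat a_t(g)-a_t^*(g)}\ge\epsilon$ for some $(g,t)$, then $\eta(\epsilon)\le 2\Delta_T$. This gives
\[
    \Prob\Big(\sup_{g}\max_{t\le T}\norm{\hat a_t(g)-a_t^*(g)}\ge\epsilon\Big)\le\Prob\big(2\Delta_T\ge\eta(\epsilon)\big)\to0,
\]
which is the claim. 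The argument covers approximate inner minimizers as well, up to an added $o_p(1)$ slack, and it never uses the Lipschitz condition. That condition is only needed later, to pass from $\G_{K_g}$ to $g_0$.

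The main obstacle is justifying $\Delta_T=o_p(1)$ from primitives, because the supremum runs over a growing number $T$ of markets and a moving sieve class. If I had to prove it rather than assume it, I would use a conditional-on-$\Mcal_t$ empirical-process bound for each market. The class $\{(z,x)\mapsto\log(1+\sum_j e^{g_j(z,x)+a_j})\}$ is uniformly bounded and Lipschitz in $(g,a)$, so its covering numbers are controlled by those of $\G\times\Acal$, which is compact in sup norm. A Hoeffding bound on a finite net plus a union bound over $t\le T$ then gives
\[
    \Delta_T=O_p\Big(\sqrt{\log(T\,N(\delta))/n_{\min}}\Big)+O(\delta)
\]
for a net of mesh $\delta$ with $N(\delta)$ elements. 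The term $\hat s_t'a$ is handled the same way, since $\hat s_t-s_t=O_p(\sqrt{\log T/n_{\min}})$ uniformly in $t$ and $\Acal$ is bounded. This makes explicit a rate condition, $\log T/n_{\min}\to0$, that is implicit in the uniform convergence assumption. Here, though, I take that uniform convergence as given by Assumption \ref{ass:profile-main}.
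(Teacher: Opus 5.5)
Your argument is correct and is essentially the paper's proof. The paper likewise combines the uniform separation bound $\eta(\epsilon)=(c_\Lambda/2)\epsilon^2$, which comes from uniform strong convexity of the logit inner criterion, with uniform convergence of $\hat L_t$ to $L_t$; like you, it takes that uniform law as assumed and does not use the Lipschitz condition, and it differs only in phrasing the step as a contradiction on the event where the uniform deviation is at most $\eta(\epsilon)/3$ rather than through your chain $L_t(g,\hat a_t(g))\le L_t(g,a_t^*(g))+2\Delta_T$.
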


\begin{proof}
See Appendix \ref{app:inner-profiling}.
\end{proof}

\begin{assumption}[Moving population identification and separation]
\label{ass:population-id-main}
The moving population criterion is separated from zero away from the truth. That is, for every $\epsilon>0$,
\begin{equation}
\label{eq:moving-separation-main}
    \liminf_{K\to\infty}
    \inf_{\theta\in\ThetaSpace_K:
    \norm{\theta-\theta_0}_{\ThetaSpace}\ge\epsilon}
    \mathcal Q_K(\theta)
    >0 .
\end{equation}
\end{assumption}

This assumption is the argmin identification condition appropriate for a moving-sieve criterion. Here $K\to\infty$ is shorthand for taking the sieve and series dimensions $(K_g,K_\psi,K_q)$ along the same admissible sequence used by the estimator as the sample grows. Since the sample grows through both $T\to\infty$ and $n_{\min}\to\infty$, the dimension sequence may depend on both $T$ and $n_{\min}$; the growth restrictions in Assumption \ref{ass:criterion-UC-main} require it to grow slowly enough relative to both sources of sampling information. The liminf is needed because $\mathcal Q_K$ is not a single fixed population objective; rather, its micro component changes when the dimension of $b^{K_g}$ changes. The condition says that, eventually along this admissible sequence, the population criterion remains uniformly bounded away from zero outside every sup-norm neighborhood of the truth. Proposition \ref{prop:app-population-id} provides primitive sufficient conditions for this separation requirement.

\begin{assumption}[Uniform convergence of the profiled moving criterion]
\label{ass:criterion-UC-main}
The micro block satisfies a uniform law of large numbers, the feasible and oracle micro moments are uniformly asymptotically equivalent, the price-side SMD basis $q^{K_q}$ satisfies the usual Ai--Chen series conditions, the weighting matrix satisfies $\sup_r\norm{\hat\Sigma(r)-\Sigma(r)}=o_p(1)$\footnote{The specific matrix norm here is the \emph{operator norm}: $\norm{A} = \sup_v \norm{Av}_2$ for matrix $A$ and unit vector $v$. For symmetric matrices, this is the largest absolute eigenvalue.} with $\Sigma(r)$ uniformly positive definite\footnote{Uniform positive definiteness: for some $c>0$, $ \inf_{r\in\mathcal R}\lambda_{\min}\{\Sigma(r)\}\ge c$ where $\lambda_{\min}$ is the minimum eigenvalue of the matrix $\Sigma(r)$.}, and the dimensions $(K_g,K_\psi,K_q)$ grow slowly enough relative to $(T,n_{\min})$ that the profile error in Proposition \ref{prop:profile-consistency-main} is negligible in both the micro and price-side blocks.
\end{assumption}

This is the high-level stochastic equicontinuity condition for the full estimator; Appendix \ref{app:consistency-details} details the primitive sufficient conditions for each part. The micro block is a sieve GMM moment based on individual observations. Because it uses the generated profile $\hat a_t(g)$, Appendix \ref{app:SMD-details} first compares it with an oracle micro moment using $a_t^*(g)$. The price-side block is an Ai--Chen conditional-moment estimator implemented by a growing basis in $(W_t,X_t)$. The relevant high-level conditions are standard in sieve conditional-moment estimation; see \cite{ai_efficient_2003}, \cite{chen_large_2007}, and \cite{chen_sieve_2015}. Its generated-profile error is controlled in the empirical $L_2$ norm induced by the realized market values $R_1,\ldots,R_T$.

The next result is the second connection between the infeasible population problem and the feasible estimator. It says that, after profiling the market intercepts, the feasible moving criterion is uniformly close to the moving population criterion that identifies the truth.

\begin{proposition}[Uniform convergence of the full moving criterion]
\label{prop:full-UC-main}
Under Assumption \ref{ass:criterion-UC-main},
\[
    \sup_{(g,\psi)\in\ThetaSpace_K}
    \abs{\hat{\mathcal Q}_K(g,\psi)-\mathcal Q_K(g,\psi)}=o_p(1).
\]
\end{proposition}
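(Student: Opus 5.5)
The plan is to prove the statement block by block. Since $\hat{\mathcal Q}_K=\hat{\mathcal Q}_{g,K_g}+\hat{\mathcal Q}_{\psi,K_q}$ and $\mathcal Q_K=\mathcal Q_{g,K_g}+\mathcal Q_\psi$, the triangle inequality gives
\[
\sup_{\ThetaSpace_K}\abs{\hat{\mathcal Q}_K-\mathcal Q_K}\le \sup_{g\in\G_{K_g}}\abs{\hat{\mathcal Q}_{g,K_g}(g)-\mathcal Q_{g,K_g}(g)}+\sup_{(g,\psi)\in\ThetaSpace_K}\abs{\hat{\mathcal Q}_{\psi,K_q}(g,\psi)-\mathcal Q_\psi(g,\psi)}.
\]
It then suffices to show that each term is $o_p(1)$. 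In each block I would introduce an oracle criterion, in which $\hat a_t(g)$ is replaced by $a_t^*(g)$. This splits each difference into a \emph{profile error}, controlled by Proposition \ref{prop:profile-consistency-main}, and a \emph{sampling error}, controlled by the ULLN and Ai--Chen conditions in Assumption \ref{ass:criterion-UC-main}.

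\textbf{Micro block.} Let $\tilde m_{g,K_g}(g)$ denote the sample moment built from $\rho^*_{it}(g)$. The quadratic-form identity $\hat m'W\hat m-m'Wm=(\hat m-m)'W(\hat m-m)+2(\hat m-m)'Wm$ gives
\[
\abs{\hat{\mathcal Q}_{g,K_g}-\mathcal Q_{g,K_g}}\le \norm{W_{g,K_g}}\big(\norm{\hat m-m}^2+2\norm{\hat m-m}\norm{m}\big).
\]
Here $\norm{m}$ is bounded uniformly because $\sigma$ is bounded and the basis satisfies the envelope condition. I would then split $\hat m-m=(\hat m-\tilde m)+(\tilde m-m)$.

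For the feasible--oracle gap, $\sigma$ is Lipschitz, so $\norm{\hat\rho_{it}(g)-\rho^*_{it}(g)}\le C\max_t\norm{\hat a_t(g)-a_t^*(g)}$. Hence
\[
\sup_g\norm{\hat m-\tilde m}\le C\,\Big(\sup_{z,x}\norm{b^{K_g}(z,x)}\Big)\sup_g\max_t\norm{\hat a_t(g)-a_t^*(g)}.
\]
This is $o_p(1)$ under the growth restriction in Assumption \ref{ass:criterion-UC-main}: the basis envelope times the profile rate must vanish. The oracle--population gap $\sup_g\norm{\tilde m-m}=o_p(1)$ is exactly the assumed micro ULLN. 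One subtlety is that $\tilde m$ averages over consumers within dependent markets. Independence across $t$ (Assumption \ref{ass:sampling-main}) still gives a LLN at the market level, so I would simply cite the assumed ULLN here.

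\textbf{Price-side block.} I would define the oracle series estimate $\tilde m_{\psi,K_q}(r;g,\psi)$ using $h_t^*(g,\psi)$. Since the series projection is linear, $\hat m_{\psi,K_q}-\tilde m_{\psi,K_q}$ is the projection of the errors $\hat a_t(g)-a_t^*(g)$ onto $q^{K_q}$. The projection is a contraction in the empirical $L_2(R_1,\dots,R_T)$ norm, so
\[
\frac1T\sum_t\norm{\hat m_{\psi,K_q}(R_t)-\tilde m_{\psi,K_q}(R_t)}^2\le \max_t\norm{\hat a_t(g)-a_t^*(g)}^2=o_p(1)
\]
uniformly over $(g,\psi)$. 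I would write $\hat{\mathcal Q}_{\psi,K_q}$ as a weighted empirical $L_2$ norm. Uniform positive definiteness together with $\sup_r\norm{\hat\Sigma-\Sigma}=o_p(1)$ lets me swap $\hat\Sigma^{-1}$ for $\Sigma^{-1}$ at $o_p(1)$ cost, given that the conditional means are bounded on the bounded classes $\G,\bfPsi$ and on compact $\Acal$. The reverse triangle inequality for the norm then transfers the profile error into the criterion. What remains is the oracle statement
\[
\sup_{\ThetaSpace_K}\Big|\frac1T\sum_t \tilde m_{\psi,K_q}'\Sigma^{-1}\tilde m_{\psi,K_q}-\E[m_\psi'\Sigma^{-1}m_\psi]\Big|=o_p(1),
\]
which is the standard Ai--Chen series result that I would invoke directly. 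Its ingredients are the eigenvalue bounds on $\E[qq']$, $\zeta(K_q)^2K_q/T\to0$, and vanishing projection bias uniformly over the compact class, the latter using Lipschitz continuity of $g\mapsto a^*(g)$ from Assumption \ref{ass:profile-main} and compactness.

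\textbf{Main obstacle.} The hard step is the interaction between the profile error and the growing dimensions in the micro block. The factor $\sup\norm{b^{K_g}}$ grows with $K_g$, so Proposition \ref{prop:profile-consistency-main}'s $o_p(1)$ alone is not enough; a rate is needed. My first attempt would be to avoid the envelope factor by using the Cauchy–Schwarz bound $\norm{\hat m-\tilde m}\le (N^{-1}\sum\norm{b^{K_g}}^2)^{1/2}\max\norm{\hat\rho-\rho^*}$ together with $\E\norm{b^{K_g}}^2\asymp K_g$. Under orthonormalization, this reduces the requirement to $K_g^{1/2}\sup_g\max_t\norm{\hat a_t-a_t^*}=o_p(1)$, which is the form in which I would read the growth condition in Assumption \ref{ass:criterion-UC-main}.
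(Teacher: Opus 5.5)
Your proposal follows essentially the paper's route. The shared steps are:
\begin{itemize}
\item the same split into micro and price-side blocks;
\item an oracle criterion built from $a_t^*(g)$ in each block;
\item the Lipschitz-logit plus Cauchy--Schwarz bound for the micro feasible--oracle gap (your refined version is exactly the first display of Assumption \ref{ass:app-micro-profile-negligible});
\item the projection-contraction bound $T^{-1}\norm{P_{q,T}E_g}_F^2\le e_T^2$ on the price side;
\item Ai--Chen conditions for the oracle price-side criterion.
\end{itemize}
For the last step, the paper derives the result from the split $\hat M^{\circ}-M=P_{q,T}\eta+(P_{q,T}M-M)$ and Assumption \ref{ass:app-AC-primitive}(iv), (v), (vii) rather than citing it as a black box. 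Your bound on the fitted values in empirical $L_2$, via projection contraction and boundedness of $\Acal$ and $\bfPsi$, is a slightly more direct substitute for the paper's footnote argument.

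The one step that fails as justified is $\sup_g\norm{m_{g,K_g}(g)}=O(1)$ ``because $\sigma$ is bounded and the basis satisfies the envelope condition.'' An envelope argument only gives $\norm{m_{g,K_g}(g)}\le C\sup_{z,x}\norm{b^{K_g}(z,x)}$, or $C\{\E\norm{b^{K_g}(Z_{it},X_t)}^2\}^{1/2}\asymp\sqrt{K_g}$. Both bounds diverge, so $\sup_g\norm{\hat m-m}=o_p(1)$ alone does not kill the cross term $2\norm{W_{g,K_g}}\norm{\hat m-m}\norm{m}$.

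The claim can be rescued by a projection (Bessel) argument. Let $f_j:=\E[\rho^*_{j,it}(g)\mid Z_{it},X_t]$, which satisfies $\abs{f_j}\le 1$, and let $G_K:=\E[b^{K_g}(Z_{it},X_t)\,b^{K_g}(Z_{it},X_t)']$. Then $\norm{m_{g,K_g}(g)}^2\le J\,\lambda_{\max}(G_K)$. So you need to add the standard condition that $\lambda_{\max}(G_K)$ is bounded uniformly in $K_g$. You also need $\lambda_{\max}(W_{g,K_g})=O(1)$, which you use implicitly; the paper takes it from Assumption \ref{ass:app-population-id-primitive}(i).

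The paper instead deliberately avoids bounding the growing-dimensional moment and assumes the relevant products vanish directly (second displays of Assumptions \ref{ass:app-micro-UC} and \ref{ass:app-micro-profile-negligible}). Because it expands feasible-around-oracle and oracle-around-population separately, it needs two such product conditions, one involving the random $\hat m^{\circ}_{g,K_g}$. Your single expansion around the deterministic $m_{g,K_g}$ is cleaner once the Bessel bound is in place, and under that bound both of the paper's product conditions follow automatically.
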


\begin{proof}
See Appendix \ref{app:consistency-details}.
\end{proof}

The preceding assumptions separate the consistency proof into three claims: the profile step is uniformly well behaved, the feasible moving criterion converges uniformly to its population analogue, and the moving population criterion separates the truth from all other normalized sieve parameters. Combining these claims gives the main result.

\begin{theorem}[Consistency of the profiled SMD estimator]
\label{thm:consistency-main}
Suppose the maintained demand model holds. Suppose further that the identification conditions in Section \ref{sec:Identification}, the sampling condition in Assumption \ref{ass:sampling-main}, Assumptions \ref{ass:sieve-approx-main}--\ref{ass:profile-main} and Assumptions \ref{ass:population-id-main}--\ref{ass:criterion-UC-main} hold. Let $(\hat g,\hat\psi)$ satisfy \eqref{eq:estimator-main}. Then
\[
    \norm{\hat g-g_0}_{\infty}
    +
    \norm{\hat\psi-\psi_0}_{\infty}
    =o_p(1).
\]
Moreover, the profiled composite intercepts and recovered structural shocks satisfy
\[
    \max_{t\le T}\norm{\hat a_t-a_{0t}}=o_p(1),
    \qquad
    \max_{t\le T}\norm{\hat h_t-h_{0t}}=o_p(1).
\]
Consequently,
\[
    \frac1T\sum_{t=1}^T\norm{\hat a_t-a_{0t}}^2=o_p(1),
    \qquad
    \frac1T\sum_{t=1}^T\norm{\hat h_t-h_{0t}}^2=o_p(1).
\]
\end{theorem}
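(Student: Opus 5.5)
The plan is the standard sieve extremum-estimator consistency argument (Chen 2007, Thm 3.1), adapted to the moving criterion $\mathcal Q_K$. The first step is to show that the sample criterion at the estimator is asymptotically zero. By \eqref{eq:estimator-main}, and because $\theta_K^0\in\ThetaSpace_K$,
\[
\hat{\mathcal Q}_K(\hat g,\hat\psi)\le \hat{\mathcal Q}_K(\theta_K^0)+o_p(1).
\]
Proposition \ref{prop:full-UC-main} then gives
\[
\hat{\mathcal Q}_K(\theta_K^0)\le \mathcal Q_K(\theta_K^0)+o_p(1),
\]
and the criterion-preservation condition \eqref{eq:criterion-preservation-main} gives $\mathcal Q_K(\theta_K^0)\to 0$. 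Applying Proposition \ref{prop:full-UC-main} once more, now at $\hat\theta:=(\hat g,\hat\psi)\in\ThetaSpace_K$, yields $\mathcal Q_K(\hat\theta)=o_p(1)$.

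The second step converts this into a sup-norm statement. Fix $\epsilon>0$. By Assumption \ref{ass:population-id-main}, there exist $\delta>0$ and $\bar K$ such that for all $K\ge\bar K$ along the admissible sequence,
\[
\inf_{\theta\in\ThetaSpace_K:\norm{\theta-\theta_0}_{\ThetaSpace}\ge\epsilon}\mathcal Q_K(\theta)\ge\delta.
\]
Hence
\[
\Prob(\norm{\hat\theta-\theta_0}_{\ThetaSpace}\ge\epsilon)\le\Prob(\mathcal Q_K(\hat\theta)\ge\delta)\to0,
\]
which proves $\norm{\hat g-g_0}_\infty+\norm{\hat\psi-\psi_0}_\infty=o_p(1)$. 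Compactness of $\ThetaSpace$ is not used directly here, since it is already encoded in the separation assumption.

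The intercept and shock statements follow from a triangle inequality:
\[
\max_t\norm{\hat a_t(\hat g)-a_{0t}}\le \sup_{g\in\G_{K_g}}\max_t\norm{\hat a_t(g)-a_t^*(g)}+\max_t\norm{a_t^*(\hat g)-a_t^*(g_0)},
\]
where the identity $a_t^*(g_0)=a_{0t}$ from Appendix \ref{app:inner-profiling} is used on the right. The first term is $o_p(1)$ by Proposition \ref{prop:profile-consistency-main}, which applies because $\hat g\in\G_{K_g}$. The second term is at most $C\norm{\hat g-g_0}_\infty=o_p(1)$ by the uniform Lipschitz property in Assumption \ref{ass:profile-main}, valid on $\G^0_{K_g}\ni\hat g,g_0$.

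For the shocks,
\[
\hat h_t-h_{0t}=(\hat a_t-a_{0t})-(\hat\psi(P_t,X_t)-\psi_0(P_t,X_t)),
\]
so
\[
\max_t\norm{\hat h_t-h_{0t}}\le\max_t\norm{\hat a_t-a_{0t}}+\sqrt J\,\norm{\hat\psi-\psi_0}_\infty=o_p(1).
\]
This uses $(P_t,X_t)\in\mathcal D_\psi$ almost surely. The averaged squared statements are immediate, since each average is bounded by the square of the corresponding maximum.

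The main obstacle is the first step, because the criterion moves with $K$: ordinary continuity of a fixed limit criterion is unavailable. That is why the argument routes through $\theta_K^0$ and criterion preservation rather than through $\theta_0$ itself, which need not lie in $\ThetaSpace_K$. A second delicate point is that the uniform profile bound in Proposition \ref{prop:profile-consistency-main} is taken over $\G_{K_g}$, a data-independent but moving set. I would check that $\hat g$ lies in it by construction, so that no additional uniformity over the random index $K$ is needed beyond what that proposition already provides.
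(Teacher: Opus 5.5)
Your proposal is correct and follows the paper's proof essentially step for step: you bound the approximate minimizer's criterion through $\theta_K^0$ using uniform convergence and criterion preservation, convert via the moving separation condition, and use a triangle inequality combining uniform profile consistency, the Lipschitz profile map, and the identity $a_t^*(g_0)=a_{0t}$. Your $\sqrt J$ factor in the shock bound is slightly more careful than the paper, which bounds the Euclidean norm directly by $\norm{\hat\psi-\psi_0}_\infty$; the difference is harmless because $J$ is fixed.
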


\begin{proof}
See Appendix \ref{app:main-consistency-proof}.
\end{proof}

% ---------------------------------------------------------------------
\section{Monte Carlo Simulation}\label{sec:Simulations}
% ---------------------------------------------------------------------

This section reports two Monte Carlo exercises. The first studies the consistency logic developed in Theorem \ref{thm:consistency-main}: as the number of markets and the number of consumers per market increase, the feasible profiled estimator should recover the structural functions and the generated market-level objects. The second compares the proposed flexible price-side estimator with standard differentiated-products demand benchmarks in counterfactual exercises of the kind used in empirical industrial organization. The point of the second exercise is to isolate a setting in which the price side of demand is nonlinear and to ask whether imposing a standard linear-in-price random-coefficients logit structure distorts counterfactual predictions. The comparison deliberately favors existing methods by using a heterogeneity structure that a random-coefficients logit model can represent well. The main difference across estimators is therefore the flexibility of the systematic price-side function.

The main text focuses on demonstrating uniform consistency of the estimator and comparing it to existing alternatives. Appendix \ref{app:simulation-details} details the data-generating process, the numerical implementation, the exact norm definitions, and the additional component diagnostics.

\subsection{Consistency Exercise: Design}

The baseline design has two inside goods, a scalar consumer covariate $Z_{it}$, a scalar observed market characteristic $X_t$, and two excluded instruments $W_t=(W_{1t},W_{2t})'$. Markets are independent. In each market, the structural composite intercept is
\[
    a_{0t}=\psi_0(P_t,X_t)+h_{0t},
\]
and individual choices are drawn from the multinomial-logit demand system
\[
    \Pr(Q_{itj}=1\mid Z_{it},P_t,X_t,h_{0t})
    =
    \sigma_j\{g_0(Z_{it},X_t)+a_{0t}\},
    \qquad j=1,2.
\]
The price equation is constructed so that prices are endogenous: prices depend directly on the unobserved demand shock $h_{0t}$ as well as on the excluded instruments. The estimator observes $(Q_{it},Z_{it})$ within each market and observes $(P_t,X_t,W_t)$ at the market level. It does not observe $a_{0t}$, $h_{0t}$, $g_0$, or $\psi_0$.

The full-estimator simulation uses the correctly specified finite-sieve design. The heterogeneity function, $g_0$, lies in the same finite basis used by the estimator, and the price-side function, $\psi_0$, lies in the same finite basis used by the price block. Thus, any remaining error is sampling and optimization error rather than approximation error. The full estimator is run on the grid
\[
    T\in\{50,100,200\},
    \qquad
    n_t\in\{100,250,500\},
\]
with 50 Monte Carlo replications at each design point. The reported estimator is the feasible estimator: for every candidate $g$, the market intercepts are profiled from the simulated micro data, the micro moments are formed from the profiled residuals, and the price-side moment is evaluated using the generated shocks $\hat a_t(g)-\psi(P_t,X_t)$.

\subsection{Consistency Exercise: Results}

Figure \ref{fig:mc-full-consistency} presents the main simulation. Each panel reports the median error across 50 replications at a given pair $(T,n_t)$. The main pattern is consistent with the theorem: the sup norm errors for $g_0$ and $\psi_0$, and the average errors for $a_{0t}$ and $h_{0t}$, all decline as the amount of market-level and within-market information increases.

\begin{figure}[!t] \centering \includegraphics[scale=0.95]{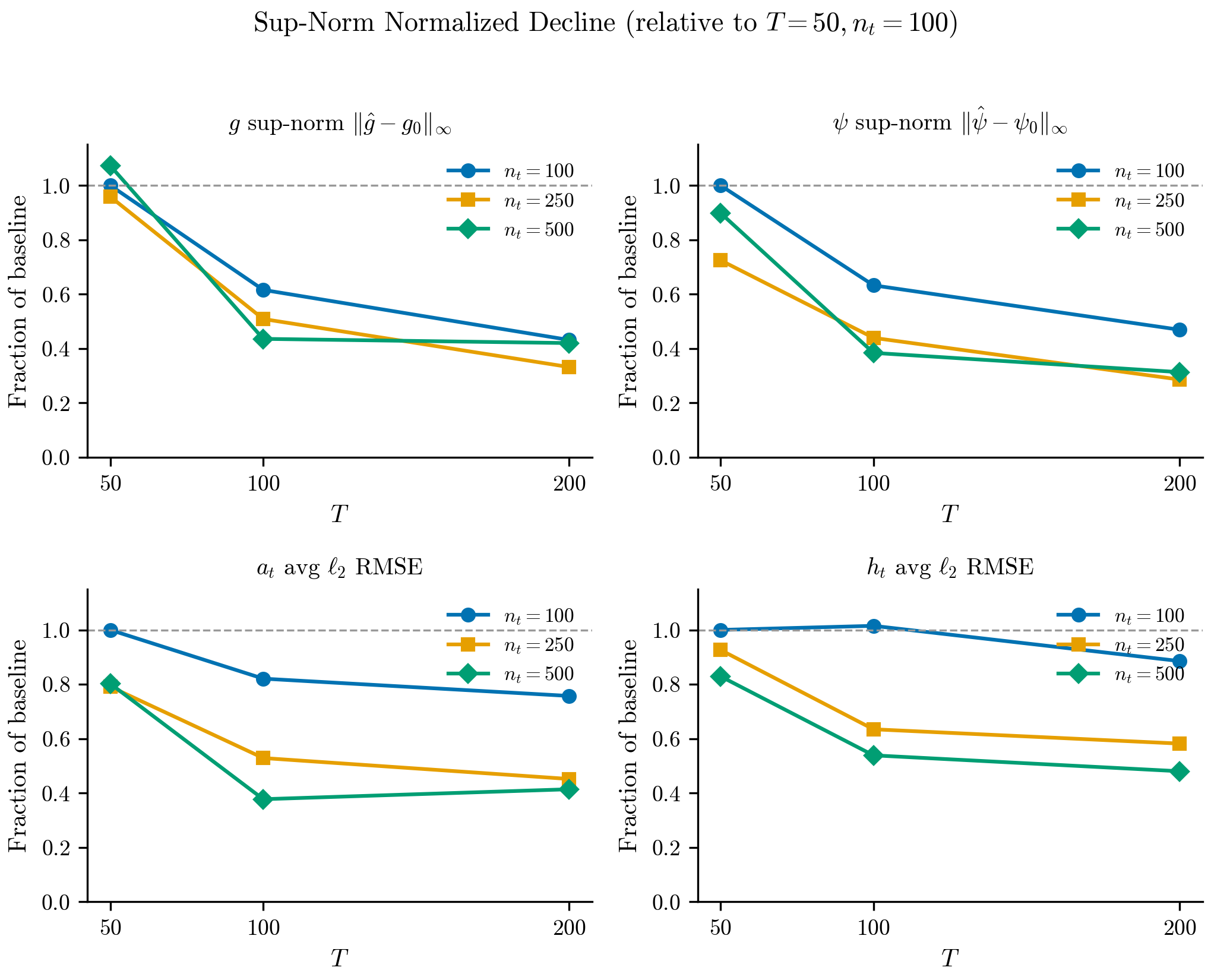} \caption{Normalized Monte Carlo errors for the full profiled estimator.} \label{fig:mc-full-consistency} \vspace{0.5em} \begin{minipage}{0.95\textwidth} \footnotesize \emph{Notes:} The figure reports normalized Monte Carlo errors for the full profiled estimator over the sample-size grid \(T\in\{50,100,200\}\) and \(n_t\in\{100,250,500\}\). Each marker is the median across Monte Carlo replications for the corresponding design. Each error is divided by the value of the same error metric in the baseline design \((T,n_t)=(50,100)\), so the dashed horizontal line at one represents the baseline and values below one indicate improvement. The top panels report grid sup-norm errors for the common structural functions \(g_0\) and \(\psi_0\).
The bottom panels report average market-level \(\ell_2\) errors for the generated objects \(a_{0t}\) and \(h_{0t}\).
Thus, the top panels correspond to the uniform consistency conclusion for the common structural functions, while the bottom panels summarize recovery of the profiled composite intercepts and recovered structural shocks. \end{minipage} \end{figure}

The sup-norm results are deliberately stringent. Unlike an integrated or root-mean-squared functional error, the sup norm records the largest absolute error over products and over the evaluation grid. It is, therefore, sensitive to local estimation error at tail grid points and should not be expected to be as smooth as an average error metric in a finite Monte Carlo design. Nevertheless, the same qualitative pattern emerges. The common structural functions are recovered more accurately in the high-information region of the design, especially as the number of markets increases. This is consistent with the role of cross-market variation in the outer problem: more markets improve the micro moment for $g_0$ and the price-side moment for $\psi_0$. 

The market-level objects behave in the way predicted by the profiling argument. The profiled composite intercept $\hat a_t$ improves as the within-market sample size increases, because each $a_{0t}$ is learned from the consumers inside market $t$. Moving from $(T,n_t)=(50,100)$ to $(200,500)$ reduces the median $a_t$ error from 0.707 to 0.293. The recovered shock $\hat h_t=\hat a_t-\hat\psi(P_t,X_t)$ also improves over the grid, falling from 0.532 at $(50,100)$ to 0.255 at $(200,500)$. This decline reflects both components of the estimator: larger $n_t$ improves the profiled intercepts, while larger $T$ improves the price-side decomposition of $a_{0t}$ into $\psi_0(P_t,X_t)$ and $h_{0t}$. 

The finite-sample pattern is not perfectly monotone cell-by-cell. This is especially true for the sup-norm errors for $g_0$ and $\psi_0$, because a single grid point can determine the reported value. This should not be a cause for major concern. The purpose of the simulation is not to estimate a rate or to claim monotonicity in every finite-sample cell; rather, it is to verify the qualitative consistency logic of the estimator. The relevant diagnostic is that the largest-errors-over-the-grid for the common functions, and the average errors for the generated market-level objects, are smaller in the high-information designs than in the low-information designs. 

Taken together, the simulation supports the main consistency theorem. The inner profile behaves as a well-posed market-by-market share-matching step, the price-side SMD/GMM block recovers the systematic price component using excluded instruments, and the feasible estimator jointly recovers $g_0$, $\psi_0$, $a_{0t}$, and $h_{0t}$ as the amount of within-market and cross-market information increases.

\subsection{Counterfactual Comparison with BLP Benchmarks}

The second exercise studies whether the flexibility of the price-side function matters for economically relevant counterfactuals. I compare the proposed estimator with three standard benchmarks: aggregate Logit-IV, BLP random-coefficients logit, and Micro-BLP. The counterfactual DGP has three inside goods and an outside option. Individual utility is
\[
    U_{ijt}
    =
    g z_i+\psi_{0j}(P_t,X_t)+h_{jt}+\varepsilon_{ijt},
    \qquad j=1,2,3,
    \qquad
    U_{i0t}=\varepsilon_{i0t},
\]
where $\varepsilon_{ijt}$ is Type-I extreme value, $z_i\sim N(0,1)$, and the heterogeneity loading is $g=0.3$. Thus the heterogeneity component is deliberately simple: the scalar term $g z_i$ enters each inside good symmetrically. Equivalently,
\[
    g_0(z_i,X_t)=g z_i\cdot \mathbf 1_J .
\]
This specification creates observed heterogeneity in consumers' inside-good tastes relative to the outside option, but it does not create product-specific heterogeneity across inside goods.

This design is intentionally favorable to the BLP benchmarks. A conventional random-coefficients logit model can represent this source of heterogeneity as a random coefficient on the inside-good constant. Moreover, Micro-BLP is given additional micro information, described below, that is directly informative about this heterogeneity. Hence, the simulation does not maliciously disadvantage BLP by giving the proposed estimator a rich nonparametric heterogeneity function that BLP cannot approximate. Instead, the main source of misspecification for the benchmarks is isolated on the price side: the true function $\psi_0(P_t,X_t)$ is nonlinear in prices, while aggregate Logit-IV, BLP, and Micro-BLP impose the standard linear price term in mean utility. The exercise therefore asks whether flexible recovery of the price-side function matters for counterfactual predictions, even in a setting where the heterogeneity structure is chosen to be relatively favorable to random-coefficients logit.

Prices are endogenous in the DGP. They depend both on excluded cost shifters and on the unobserved demand shocks. The excluded cost shifters are used as instruments. The simulation also draws first and second choices for every consumer. First choices generate the market shares used by all estimators. Second choices are used to construct a micro moment for Micro-BLP: the share of inside-good buyers whose second choice is the outside option. This moment is informative about the inside-versus-outside margin and therefore about the random-coefficient heterogeneity in $g z_i$. In this sense, Micro-BLP is a strong benchmark as it is given micro information that is well aligned with the particular dimension of heterogeneity present in the DGP.

The exercise considers two regimes. In the oracle regime, all estimators are given the true counterfactual demand shocks. This isolates the consequences of functional-form misspecification. In the full regime, each estimator must recover its own demand shocks from the simulated data. The main text reports the full regime because it is the relevant empirical case. Each simulation uses $T=300$ markets, $n_t=500$ consumers per market, and 40 Monte Carlo replications. The sieve order for the proposed estimator is selected by cross-validation and is not chosen using knowledge of the truth.

Figure \ref{fig:mc-counterfactual-full} reports three counterfactual objects in the full regime. Panel (a) reports the error in predicted counterfactual share changes after a price shock to the first inside good. Panel (b) reports the error in diversion ratios. Panel (c) reports the error in welfare changes, measured as the change in expected inclusive value. In all three panels, displayed lines are medians across replications and shaded regions are 10th--90th percentile bands.

\begin{figure}[!t]
    \centering
    \begin{minipage}[t]{0.48\textwidth}
        \centering
        (a) Share-change RMSE\\[-0.1em]
        \includegraphics[width=\textwidth]{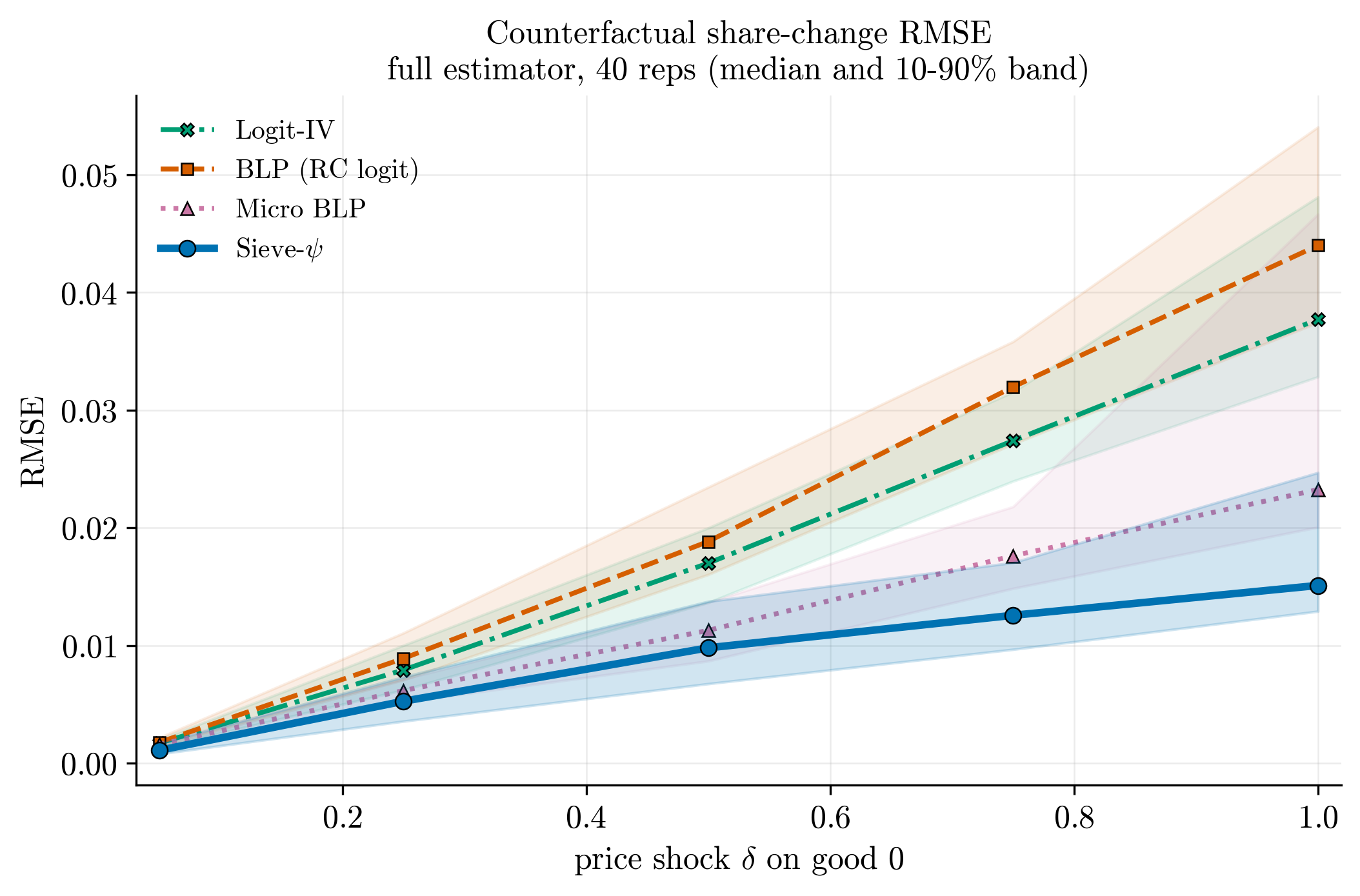}
    \end{minipage}\hfill
    \begin{minipage}[t]{0.48\textwidth}
        \centering
        (b) Diversion-ratio error\\[-0.25em]
        \includegraphics[width=\textwidth]{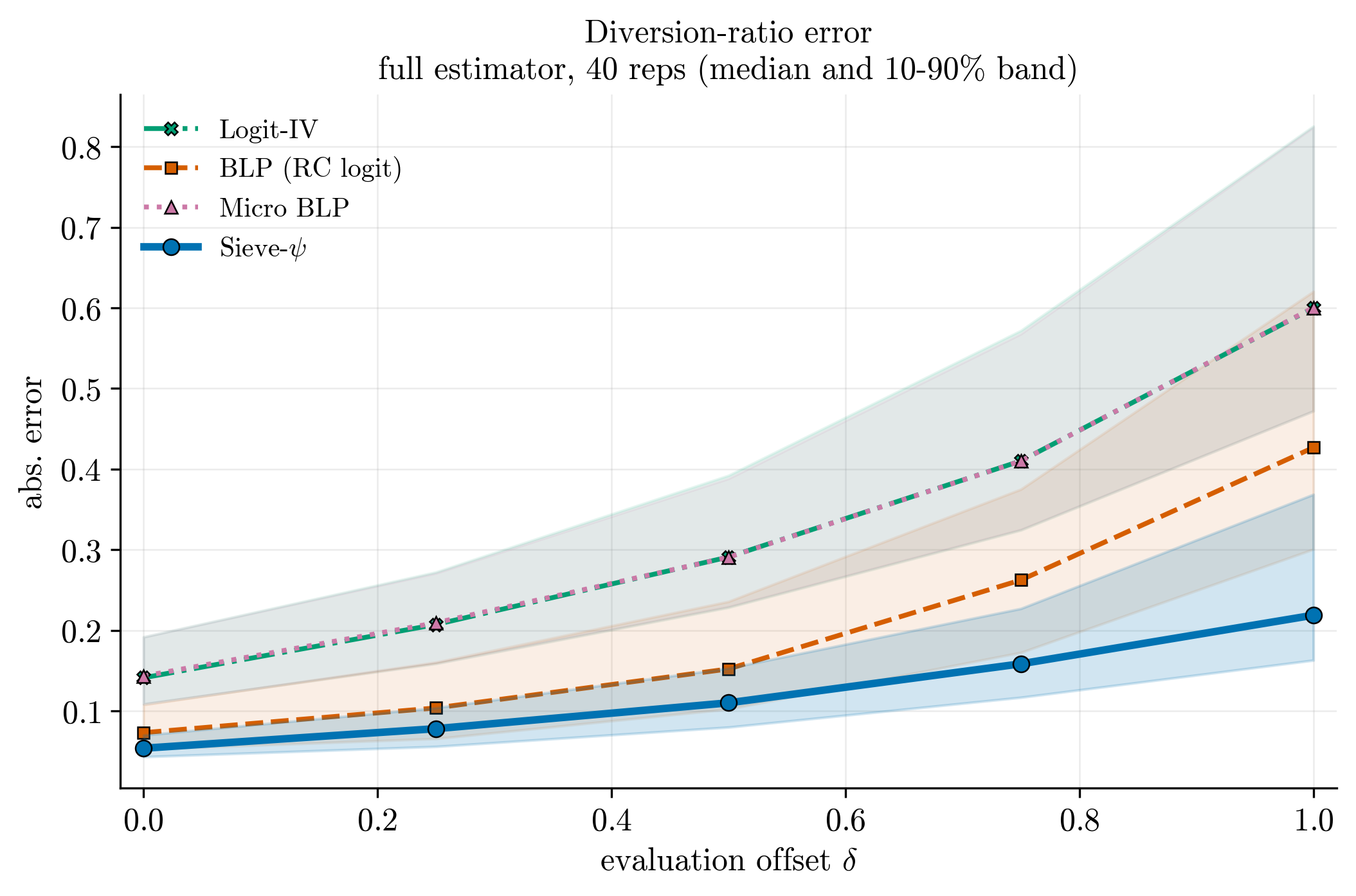}
    \end{minipage}

    \vspace{0.75em}

    \begin{minipage}[t]{0.58\textwidth}
        \centering
        (c) Welfare error\\[-0.25em]
        \includegraphics[width=\textwidth]{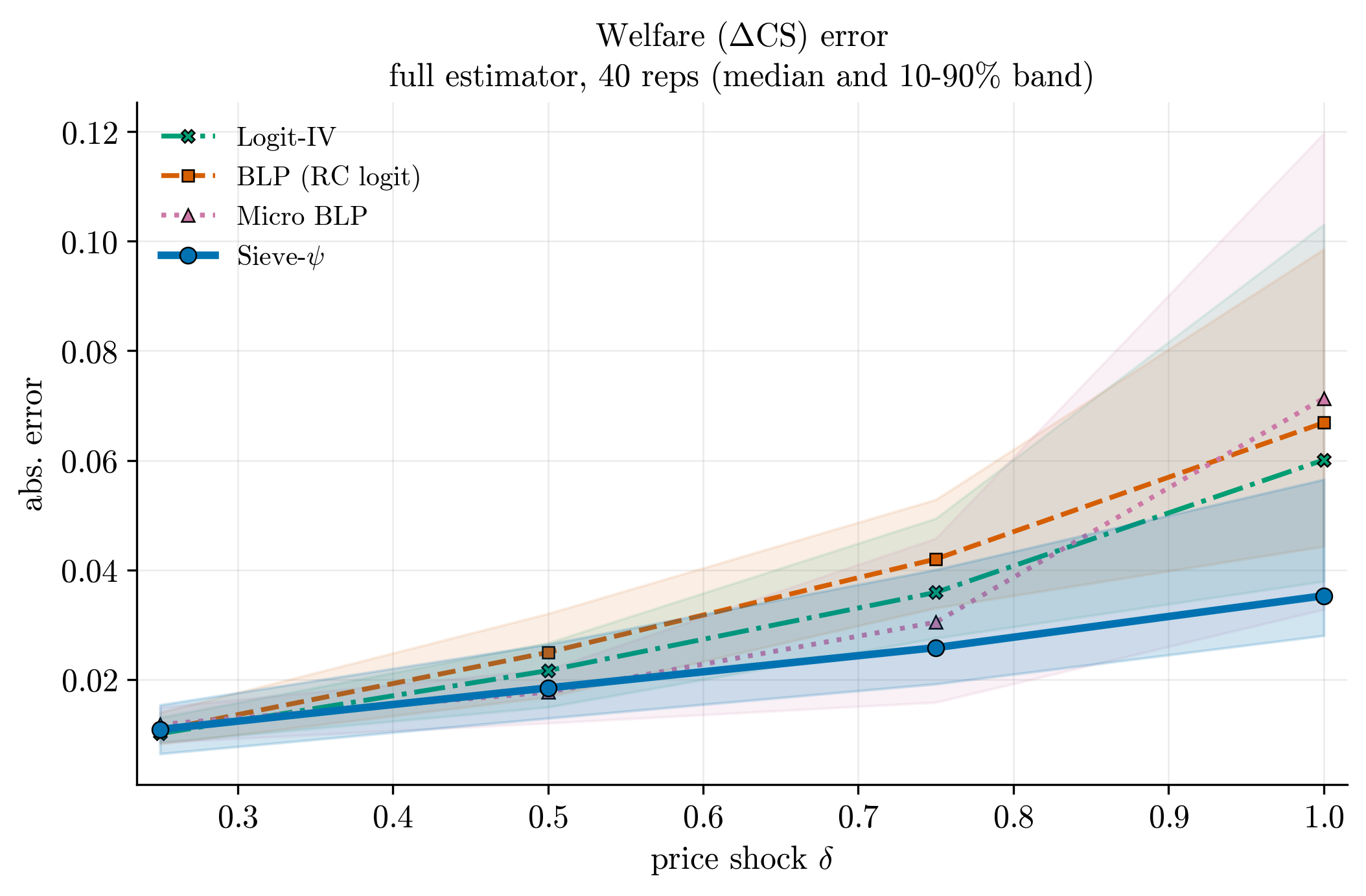}
    \end{minipage}

    \caption{Counterfactual performance against Logit-IV, BLP, and Micro-BLP.}
    \label{fig:mc-counterfactual-full}

    \vspace{0.5em}
    \begin{minipage}{0.95\textwidth}
    \footnotesize
    \emph{Notes:}
    The figure reports full-regime counterfactual errors across 40 Monte Carlo replications. The full regime means that each estimator recovers its own demand shock from the simulated data rather than being given the true shock. Lines report medians and shaded regions report 10--90 percent bands. Panel (a) reports counterfactual share-change RMSE after a price shock to good 0, the first inside good in the simulation. Panel (b) reports absolute diversion-ratio error at different evaluation offsets. Panel (c) reports absolute error in the welfare change $\Delta CS$, measured in utility units. The four estimators are aggregate Logit-IV, BLP random-coefficients logit, Micro-BLP, and the proposed flexible sieve estimator. In panel (b), the Logit-IV and Micro-BLP median curves are nearly coincident over the plotted range, so the Logit-IV curve is largely hidden by the Micro-BLP curve. Exact metric definitions and additional results are given in Appendix \ref{app:counterfactual-details}.
    \end{minipage}
\end{figure}

The flexible estimator performs best on all three reported counterfactual objects. For share changes, the error of the sieve estimator increases much more slowly as the counterfactual price shock becomes larger. This is the object most directly tied to recovery of the demand surface away from observed prices. If the nonlinear price function is approximated by a linear price index, share predictions can deteriorate quickly as the counterfactual moves farther from the observed price distribution. The proposed estimator is designed to avoid this restriction by estimating $\psi_0(P_t,X_t)$ flexibly.

The same pattern appears for diversion ratios. Diversion is a local substitution object, so errors in the slope and curvature of the demand surface are amplified. The BLP and Micro-BLP benchmarks improve on aggregate Logit-IV by allowing consumer heterogeneity on the inside-versus-outside margin. However, that is not the main source of counterfactual difficulty in this DGP. The difficulty is that the price-side component of utility is nonlinear. Because the BLP benchmarks retain a linear price term, their random-coefficients structure cannot fully correct the misspecification in the price surface. The proposed estimator remains below the Logit-IV, BLP, and Micro-BLP benchmarks throughout the plotted range.

The welfare results are smoother, as expected. Welfare changes aggregate over the demand system and are therefore less sensitive than diversion ratios to local errors at a particular price point. Nevertheless, the same ranking appears: the flexible estimator delivers the lowest median welfare error across the counterfactual grid. This suggests that the gains from flexible price-side estimation are not limited to pointwise share predictions. They also matter for welfare calculations, even though welfare is a more aggregate object.

These results should be interpreted narrowly. The simulation is not designed to show that BLP performs poorly in general. On the contrary, the heterogeneity structure is chosen to be relatively favorable to BLP and Micro-BLP. The point is instead more specific: even when the random-coefficients component is easy for BLP-style estimators to represent, counterfactual performance can still suffer if the systematic price-side function is misspecified. The proposed estimator performs well in this exercise because it targets precisely this object. The simulation therefore illustrates the economic value of the estimator developed in this paper; that is, flexible recovery of $\psi_0(P_t,X_t)$ can matter substantially for the counterfactual objects that motivate IO demand estimation, including share changes, diversion ratios, and welfare.

% ---------------------------------------------------------------------
\section{Conclusion}\label{sec:Conclusion}
% ---------------------------------------------------------------------

This paper develops a profiled sieve minimum-distance estimator for a semi-nonparametric differentiated products demand model with micro data. It is motivated by the observation that micro-level choice data contain within-market variation that is absent from market-level demand systems. Consumers in the same market face the same prices, product characteristics, and market-level demand shocks, but differ in observed characteristics. This variation identifies the consumer-heterogeneity component of demand and the composite market intercept. Excluded price instruments then separate the composite intercept into a systematic price-side function and a structural demand shock. The estimator formalizes this logic by combining a within-market profiling step with a cross-market conditional-moment step.

The main result establishes uniform consistency of the common structural functions. The profiled composite intercepts and recovered structural shocks are also consistently recovered. The Monte Carlo evidence supports this logic: the common functions and the generated market-level objects are recovered more accurately as both the number of markets and the number of consumers per market increase.

The paper is intentionally limited in scope. Its purpose is to show that the Berry--Haile micro-data identification argument can be turned into a feasible estimator and that the resulting profiled estimator is uniformly consistent. Several important issues are left open. First, the maintained model uses the multinomial-logit map as a tractable link function. This makes the profile step convex and transparent, but it also imposes the familiar substitution restrictions associated with logit demand. A natural next step is to relax this restriction and move closer to the fully nonparametric demand link in the Berry--Haile framework. Second, the paper studies consistency rather than rates, limiting distributions, or confidence sets. The assumptions are therefore designed to control the profile step and the moving SMD criterion at the level needed for consistency, not to drive an asymptotic distribution result. Third, the price-side block inherits the usual difficulties of nonparametric IV and conditional-moment estimation. Completeness is a strong identification condition, and finite-sample performance will depend on the strength of the instruments, the choice of basis, the amount of regularization, and the growth rates of the sieve dimensions.

Future work should therefore develop the asymptotic distribution theory for the estimator. This requires understanding the joint contribution of the within-market profile error, the growing micro moment, and the price-side SMD projection. In the current estimator, the profiled intercepts are treated in the way needed for consistency, but they are not bias-corrected. This is likely not enough for inference. The profile step generates an additional estimation effect, which may contribute to first-order bias or otherwise complicate the limiting distribution of smooth functionals even when it vanishes asymptotically. A promising direction is to construct a Hahn--Newey-style debiased version of the profiled estimator, following the logic of bias correction for nonlinear models with many nuisance parameters \citep{hahn_jackknife_2004}. Such a correction would aim to remove the leading effect of estimating the market intercepts and could provide the asymptotic linearity needed for rates of convergence, asymptotic normality, and feasible inference. This would connect the present consistency result to the broader sieve-inference literature for semi/nonparametric conditional moment models, including \cite{chen_sieve_2015}.

One major limitation of the method, even if the above advancements are made, is the curse of dimensionality; see Appendix \ref{app:coefficient-counts}. Sieve-based nonparametric estimators require extremely large amounts of data once the covariate space or the number of products begins to grow -- even to relatively modest degrees. Methods that impose shape restrictions or truncate the sieve-approximation are viable, but either run at odds with the spirit of flexible demand estimation or may not be guaranteed to approximate the unknown function. Hence, future work might consider targeting lower dimensional counterfactual functionals directly rather than estimating a globally flexible demand system. 

More broadly, the estimator developed here should be viewed as a first step toward a flexible micro-data approach to demand estimation. The paper shows that within-market consumer heterogeneity and cross-market price instruments can be combined in a single profiled SMD framework, and that this framework consistently recovers the structural functions of interest. The next step would be to turn this consistency result into an asymptotic distribution result and to study the counterfactual objects that motivate flexible demand estimation in industrial organization: elasticities, diversion ratios, pass-through, markups, and welfare. Those objects depend on the same structural functions estimated here. Establishing valid inference for them makes the estimator practical.

\clearpage

% =====================================================================

\appendix

% =====================================================================
\section{Berry--Haile Identification Primitives}
\label{app:BH-assumptions}

This appendix collects the Berry--Haile-style assumptions that motivate the model. The main text uses only their core implications: an index structure, additive separability between consumer heterogeneity and market shocks, invertibility of the demand link, and price-side completeness. The remaining conditions are kept here so that the main text can focus on the estimator.

\begin{assumption}[BH primitive structure]
\label{ass:app-BH-basic}
Latent market-level heterogeneity can be represented by a $J$-vector $\Xi_t$. Conditional on $X_t$, the support of $Z_{it}$ is the same in all markets. The consumer-level covariates have dimension at least $J$, so $d_z\ge J$.
\end{assumption}

\begin{assumption}[BH index and separability]
\label{ass:app-BH-index}
The choice probability vector satisfies
\[
    s(C_{it})=\sigma^{BH}(\gamma(Z_{it},X_t,\Xi_t),P_t,X_t),
\]
where $\gamma(Z_{it},X_t,\Xi_t)\in\R^J$ and
\[
    \gamma(Z_{it},X_t,\Xi_t)=\Gamma_0(Z_{it},X_t)+\Xi_t.
\]
\end{assumption}

\begin{assumption}[BH invertibility of demand]
\label{ass:app-BH-demand-invertible}
For each $(p,x)$ in the support of $(P_t,X_t)$, the map $\gamma\mapsto\sigma^{BH}(\gamma,p,x)$ is injective on the relevant support of the index.
\end{assumption}

\begin{assumption}[BH injectivity of the index]
\label{ass:app-BH-index-injective}
For each $(x,\xi)$ in the support of $(X_t,\Xi_t)$, the map $z\mapsto\gamma(z,x,\xi)$ is injective on the support of $Z_{it}$ conditional on $X_t=x$.
\end{assumption}

\begin{assumption}[BH support and nondegeneracy]
\label{ass:app-BH-support}
For each $x\in\X$, the support of $Z_{it}$ conditional on $X_t=x$ is connected and sufficiently rich. The support of the market shock conditional on relevant values of $(P_t,X_t)$ is also sufficiently rich to identify the conditional demand system.
\end{assumption}

\begin{assumption}[BH smoothness and rank]
\label{ass:app-BH-smooth-rank}
The functions $\Gamma_0(\cdot,x)$ and $\sigma^{BH}(\cdot,p,x)$ are continuously differentiable on the relevant supports. The Jacobians with respect to $z$ and $\gamma$ satisfy the nonsingularity conditions required for the Berry--Haile identification argument.
\end{assumption}

\begin{assumption}[BH price instruments and completeness]
\label{ass:app-BH-IV-completeness}
The excluded price instruments satisfy
\[
    \E[h_{0j}(X_t,\Xi_{t})\mid X_t,W_t]
    =
    \E[h_{0j}(X_t,\Xi_{t})\mid X_t]
    \qquad\text{a.s. for each }j\in\Jcal.
\]
In addition, in the relevant class of functions $\bm{\mathcal F}(P_t,X_t)$ with finite expectation,
\[
    \E[\bm{\mathcal F}(P_t,X_t)\mid X_t,W_t]=0\quad\text{a.s.}
    \quad\Longrightarrow\quad
    \bm{\mathcal F}(P_t,X_t)=0\quad\text{a.s.}
\]
\end{assumption}

\begin{assumption}[BH normalizations]
\label{ass:app-BH-normalizations}
A location normalization is imposed on the additive components. A representative Berry--Haile normalization sets $\E[\Xi_t]=0$ and chooses, for each $x\in\X$, a baseline $z^0(x)$ such that $g(z^0(x),x)=0$ and $\partial_z g(z,x)|_{z=z^0(x)}=I_{J\times J}$ whenever the relevant derivative exists. In the main text, this role is played by Assumption \ref{ass:normalizations-main}.
\end{assumption}

% =====================================================================
\section{Population Restrictions Underlying the Estimator}
\label{app:population-restrictions}

This appendix proves the two population restrictions stated before the estimator is defined. These restrictions are conceptually distinct. The micro restriction comes from the structural choice model conditional on the full choice environment. The price-side restriction comes from the excluded price instruments and the normalization of the structural demand shock.

\begin{proposition}[Population restrictions underlying the estimator, restated]
\label{prop:app-population-restrictions-restated}
Under the maintained demand model, the structural residual
\[
    \varepsilon_{it}^0
    :=
    Q_{it}-\sigma(g_0(Z_{it},X_t)+a_{0t})
\]
satisfies
\[
    \E[\varepsilon_{it}^0\mid C_{it}]=0.
\]
Consequently, for any square-integrable function $\varphi(Z_{it},X_t)$,
\[
    \E[\varphi(Z_{it},X_t)\otimes\varepsilon_{it}^0]=0.
\]
If, in addition, Assumptions \ref{ass:normalizations-main} and \ref{ass:price-id-main} hold, then
\[
    \E[a_{0t}-\psi_0(P_t,X_t)\mid W_t,X_t]=0.
\]
\end{proposition}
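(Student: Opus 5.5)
The proof has three parts, one for each claim in Proposition \ref{prop:app-population-restrictions-restated}.

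\emph{Structural residual.} The maintained model \eqref{eq:maintained-choice-model} states that $\E[Q_{it}\mid C_{it}]=\sigma(g_0(Z_{it},X_t)+\psi_0(P_t,X_t)+h_{0t})$. By the definition \eqref{eq:a0-def-main}, this equals $\sigma(g_0(Z_{it},X_t)+a_{0t})$. The term $a_{0t}=\psi_0(P_t,X_t)+h_0(X_t,\Xi_t)$ is a measurable function of $C_{it}=(Z_{it},P_t,X_t,\Xi_t)$, and so is $g_0(Z_{it},X_t)$. Hence $\sigma(g_0(Z_{it},X_t)+a_{0t})$ is $C_{it}$-measurable and can be pulled out of the conditional expectation. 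This gives
\[
\E[\varepsilon^0_{it}\mid C_{it}]=\E[Q_{it}\mid C_{it}]-\sigma(g_0(Z_{it},X_t)+a_{0t})=0 .
\]

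\emph{Unconditional micro moment.} Take a square-integrable $\varphi(Z_{it},X_t)$. Since $(Z_{it},X_t)$ is a subvector of $C_{it}$, $\varphi(Z_{it},X_t)$ is $C_{it}$-measurable. Each entry of $\varphi\otimes\varepsilon^0_{it}$ has the form $\varphi_k\varepsilon^0_{jit}$. It is integrable because $Q_{it}\in\{0,1\}^J$ and $\sigma\in(0,1)^J$, so $|\varepsilon^0_{jit}|\le 1$, and $\varphi_k\in L^2\subset L^1$. The law of iterated expectations then gives
\[
\E[\varphi_k\varepsilon^0_{jit}]=\E\bigl[\varphi_k\,\E[\varepsilon^0_{jit}\mid C_{it}]\bigr]=0 .
\]
The only point needing care is this integrability check, and the bound $|\varepsilon^0_{jit}|\le 1$ settles it.

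\emph{Price-side restriction.} By \eqref{eq:a0-def-main}, $a_{0t}-\psi_0(P_t,X_t)=h_{0t}$, so it suffices to show $\E[h_{0t}\mid W_t,X_t]=0$. Assumption \ref{ass:price-id-main}(i) gives $\E[h_{0t}\mid W_t,X_t]=\E[h_{0t}\mid X_t]$ a.s. Assumption \ref{ass:normalizations-main}(i) gives $\E[h_{0t}\mid X_t]=0$ a.s. Combining the two yields the claim. The only implicit requirement is that $h_{0t}$ be integrable. This holds because $h_{0t}=a_{0t}-\psi_0(P_t,X_t)$ with $\psi_0$ bounded, or directly because the assumed conditional expectations are taken to exist.

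No step here is a genuine obstacle. The argument is measurability plus iterated expectations, and the only subtlety is recognizing that $a_{0t}$ is measurable with respect to $C_{it}$.
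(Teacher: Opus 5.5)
Your proposal is correct and follows the same route as the paper. Like the paper, you substitute $a_{0t}=\psi_0(P_t,X_t)+h_{0t}$ into the maintained model, then use the $C_{it}$-measurability of $\varphi(Z_{it},X_t)$ with iterated expectations for the micro moment, and then chain the instrument condition with the normalization $\E[h_{0t}\mid X_t]=0$ for the price-side restriction. Your extra checks (that $a_{0t}$ is $C_{it}$-measurable, and that $\abs{\varepsilon^0_{jit}}\le 1$ gives integrability) are details the paper leaves implicit here and spells out only in the proof of the micro-moment validity result.
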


\begin{proof}
By the maintained demand model and the definition $a_{0t}=\psi_0(P_t,X_t)+h_{0t}$,
\[
    \E[Q_{it}\mid C_{it}]
    =
    \sigma(g_0(Z_{it},X_t)+a_{0t}).
\]
Subtracting the right-hand side from both sides gives
\[
    \E[\varepsilon_{it}^0\mid C_{it}]=0.
\]
Because $\varphi(Z_{it},X_t)$ is a function only of variables contained in
$C_{it}=(Z_{it},P_t,X_t,\Xi_t)$, it is measurable with respect to the
sigma-algebra generated by $C_{it}$. Hence it can be taken outside the
conditional expectation given $C_{it}$. Therefore, by
the law of iterated expectations,
\[
    \E[\varphi(Z_{it},X_t)\otimes\varepsilon_{it}^0]
    =
    \E\!\left[\varphi(Z_{it},X_t)\otimes\E[\varepsilon_{it}^0\mid C_{it}]\right]
    =0.
\]
This proves the structural residual restriction and its unconditional implication. For the price-side restriction, use $a_{0t}=\psi_0(P_t,X_t)+h_{0t}$ to write
\[
    a_{0t}-\psi_0(P_t,X_t)=h_{0t}.
\]
Assumption \ref{ass:price-id-main} gives $\E[h_{0t}\mid W_t,X_t]=\E[h_{0t}\mid X_t]$, and Assumption \ref{ass:normalizations-main} gives $\E[h_{0t}\mid X_t]=0$. Hence
\[
    \E[a_{0t}-\psi_0(P_t,X_t)\mid W_t,X_t]
    =
    \E[h_{0t}\mid W_t,X_t]
    =0.
\]
\end{proof}

% =====================================================================
\section{Normalizations and Identification Proofs}
\label{app:identification}
\label{app:normalizations}

This appendix supplies the proofs for the identification results in Section \ref{sec:Identification}. The role of the normalizations is to select one representative from an observational equivalence class: without them, $X$-only terms can be moved across the additive components without changing choice probabilities.

\begin{lemma}[An $X$-only component cannot be assigned uniquely without normalization]
\label{lem:app-g-normalization}
Let $c:\X\to\R^J$ be any measurable function. Define
\[
    \tilde g(z,x):=g(z,x)+c(x),
    \qquad
    \tilde\psi(p,x):=\psi(p,x)-c(x).
\]
Then, for all $(z,p,x,h)$,
\[
    \tilde g(z,x)+\tilde\psi(p,x)+h
    =
    g(z,x)+\psi(p,x)+h.
\]
Thus the choice probabilities are unchanged. Therefore the decomposition between $g$ and $\psi$ requires a normalization, such as the conditional mean-zero or baseline normalization stated in Section \ref{sec:Identification}.
\end{lemma}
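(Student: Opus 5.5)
The plan is to verify the displayed identity by direct algebra and then read off the observational-equivalence conclusion from the maintained model \eqref{eq:maintained-choice-model}. First, I substitute the definitions of $\tilde g$ and $\tilde\psi$ into the left-hand side:
\[
    \tilde g(z,x)+\tilde\psi(p,x)+h
    =
    g(z,x)+c(x)+\psi(p,x)-c(x)+h
    =
    g(z,x)+\psi(p,x)+h .
\]
This holds pointwise for every $(z,p,x,h)$, because $c$ depends only on $x$, and $x$ enters both $g$ and $\psi$. No measurability beyond that of $c$ is needed for the identity itself. Measurability of $c$ is used only so that $\tilde g$ and $\tilde\psi$ are legitimate candidate functions.

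Second, I apply the multinomial-logit map $\sigma$ from \eqref{eq:sigma-main} to both sides. Since $\sigma$ is a function of the index alone, the choice probabilities coincide:
\[
    \sigma\big(\tilde g(Z_{it},X_t)+\tilde\psi(P_t,X_t)+h_{0t}\big)
    =
    \sigma\big(g(Z_{it},X_t)+\psi(P_t,X_t)+h_{0t}\big).
\]
Evaluating at the realized $C_{it}$ shows that the conditional choice probabilities $\E[Q_{it}\mid C_{it}]$ generated by $(\tilde g,\tilde\psi,h_{0t})$ and by $(g,\psi,h_{0t})$ are identical. The shock $h_{0t}$ is unchanged, so the price-side restriction is also unaffected. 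Indeed, $a_t$ shifts to $\tilde a_t=a_t-c(X_t)$, and $\tilde a_t-\tilde\psi=a_t-\psi$, so \eqref{eq:psi-CM-main} still holds. Hence no feature of the data distribution distinguishes the two parametrizations.

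Third, I conclude the necessity of a normalization. Take any nonzero $c$, for example $c(x)\equiv c_0\neq 0$. Then $(g,\psi)\neq(\tilde g,\tilde\psi)$, yet both generate the same observables. Identification of the decomposition therefore requires restricting the parameter space so that at most one member of each class $\{(g+c,\psi-c)\}$ is admissible.

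To tie this to Assumption \ref{ass:normalizations-main}(ii), I would check that the normalization pins down $c$. If both $g$ and $g+c$ satisfy $\E[g(Z_{it},X_t)\mid X_t=x]=0$, then $c(x)=0$ for all $x$. Likewise, under the baseline normalization, evaluating at $z^0(x)$ gives $c(x)=0$.

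There is no substantive obstacle here. The only point needing care is this last remark: the normalizations select exactly one representative rather than merely ruling out some shifts, and each is a one-line computation.
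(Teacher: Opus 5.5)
Your proposal is correct and follows the paper's proof: substitute the definitions so that $c(x)$ cancels, then note that the logit map depends on $(g,\psi,h)$ only through the summed index, so choice probabilities are unchanged. Your added checks are also correct. The price-side moment is unaffected because $h$ is unchanged, and the normalizations force $c\equiv 0$, though the paper proves the latter separately in Lemma~\ref{lem:app-normalization-equivalence} rather than inside this lemma.
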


\begin{proof}
The equality follows immediately by substitution. Since the logit map depends on $(g,\psi,h)$ only through the sum of the indices, the choice probabilities are unchanged.
\end{proof}

\begin{lemma}[Re-centering and uniqueness of the normalized representative]
\label{lem:app-normalization-equivalence}
Fix an equivalence class of decompositions that all generate the same index $g(Z_{it},X_t)+a_t$. For any representative $\bar g$ in this class and any measurable function $d:\X\to\R^J$, define
\[
    g^d(z,x):=\bar g(z,x)-d(x),
    \qquad
    a_t^d:=a_t+d(X_t).
\]
Then
\[
    g^d(Z_{it},X_t)+a_t^d
    =
    \bar g(Z_{it},X_t)+a_t .
\]
Thus an $X$-only component can be moved between $g$ and the market intercept without changing the choice index. Two useful choices of $d$ are
\[
    d_m(x):=\E[\bar g(Z_{it},x)\mid X_t=x],
    \qquad
    d_b(x):=\bar g(z^0(x),x).
\]
The first gives the conditional-mean normalization
\[
    \E[g^{d_m}(Z_{it},X_t)\mid X_t=x]=0,
\]
while the second gives the baseline normalization
\[
    g^{d_b}(z^0(x),x)=0.
\]
Moreover, either normalization selects a unique representative from the equivalence class: if $g$ and $g+c$ both satisfy the same normalization for some measurable $c:\X\to\R^J$, then $c(x)=0$ for all $x$ in the relevant support.
\end{lemma}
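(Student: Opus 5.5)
The lemma makes three claims, and I will take them in order: the invariance identity, the two normalizations produced by the specific choices $d_m$ and $d_b$, and uniqueness of the normalized representative. The identity is immediate. Substituting the definitions gives
\[
g^d(Z_{it},X_t)+a_t^d=\bar g(Z_{it},X_t)-d(X_t)+a_t+d(X_t)=\bar g(Z_{it},X_t)+a_t .
\]
Since the logit map in \eqref{eq:sigma-main} depends on the parameters only through this index, the choice probabilities in \eqref{eq:choice-composite-main} are unchanged. This mirrors Lemma \ref{lem:app-g-normalization}, with the market intercept now playing the role of $\psi$.

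For the conditional-mean normalization, I need $d_m(x)$ to be well defined and finite. This follows because functions in $\G$ are bounded (boundedness is part of the definition of $\G$), so the conditional expectation exists. Measurability of $d_m$ follows from its being a version of a conditional expectation. Then $d_m(x)$ is a function of $x$ alone, so it can be taken outside the conditional expectation given $X_t=x$:
\[
\E[g^{d_m}(Z_{it},X_t)\mid X_t=x]=\E[\bar g(Z_{it},x)\mid X_t=x]-d_m(x)=0 .
\]
For the baseline normalization, direct evaluation gives $g^{d_b}(z^0(x),x)=\bar g(z^0(x),x)-\bar g(z^0(x),x)=0$. This requires $z^0(x)$ to lie in the conditional support, which Assumption \ref{ass:normalizations-main}(ii) provides. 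It also requires $x\mapsto \bar g(z^0(x),x)$ to be measurable, which I will simply assume, taking $z^0$ measurable.

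For uniqueness, suppose $g$ and $g+c$ both satisfy the same normalization. In the conditional-mean case, taking the conditional expectation of the difference gives $\E[c(X_t)\mid X_t=x]=c(x)$, which must equal $0-0=0$. In the baseline case, evaluating the difference at $z^0(x)$ gives $c(x)=0$. Both conclusions hold on the relevant support of $X_t$. The one step needing care is the conditional-mean case: conditional expectations are only defined almost surely, so uniqueness there holds for $P_X$-almost every $x$. I would therefore interpret ``for all $x$ in the relevant support'' as either a.s. or under a chosen regular version. Under continuity of $g$, which follows from membership in a Hölder ball, together with continuity of the conditional distribution, the a.s. statement upgrades to every $x$ in the support. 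This measure-theoretic qualification is the only real obstacle; the rest is substitution.
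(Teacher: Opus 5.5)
Your proof is correct and follows the paper's argument essentially line by line: substitution for the index identity, pulling the $x$-only term $d(x)$ out of the conditional expectation (or evaluating at $z^0(x)$) for the two normalizations, and the same one-line computation giving $c(x)=0$ for uniqueness. Your measure-theoretic caveats are extra care the paper omits. The one small slip is that an unnormalized representative $\bar g$ need not lie in $\G$, so integrability of $\bar g(Z_{it},x)$ given $X_t=x$ should come from $\bar g$ being bounded (e.g.\ in the unnormalized H\"older ball, or an $X$-only shift of a bounded function), not from membership in $\G$.
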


\begin{proof}
The index invariance follows by substitution:
\[
    g^d(Z_{it},X_t)+a_t^d
    =
    \bar g(Z_{it},X_t)-d(X_t)+a_t+d(X_t)
    =
    \bar g(Z_{it},X_t)+a_t.
\]
For $d=d_m$,
\[
\begin{aligned}
    \E[g^{d_m}(Z_{it},X_t)\mid X_t=x]
    &=
    \E[\bar g(Z_{it},x)-d_m(x)\mid X_t=x]  \\
    &=
    \E[\bar g(Z_{it},x)\mid X_t=x]
    -
    d_m(x)
    =
    0.
\end{aligned}
\]
For $d=d_b$,
\[
    g^{d_b}(z^0(x),x)
    =
    \bar g(z^0(x),x)-d_b(x)
    =
    0.
\]
Finally, suppose $g$ and $g+c$ both satisfy the conditional-mean normalization. Then
\[
    0
    =
    \E[g(Z_{it},X_t)+c(X_t)\mid X_t=x]
    =
    \E[g(Z_{it},X_t)\mid X_t=x]+c(x)
    =
    c(x).
\]
Similarly, if $g$ and $g+c$ both satisfy the baseline normalization, then
\[
    0
    =
    (g+c)(z^0(x),x)
    =
    g(z^0(x),x)+c(x)
    =
    c(x).
\]
Thus either normalization rules out nonzero $X$-only shifts and therefore selects a unique representative from the equivalence class.
\end{proof}

\begin{lemma}[Log-odds representation]
\label{lem:app-log-odds}
Under the maintained logit specification, for every $j\in\Jcal$,
\[
    \log\frac{s_j(C_{it})}{s_0(C_{it})}
    =
    g_{0j}(Z_{it},X_t)+a_{0jt}.
\]
\end{lemma}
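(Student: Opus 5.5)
The plan is to compute the log-odds ratio directly from the explicit form of the multinomial-logit map in \eqref{eq:sigma-main}, after first writing the choice probabilities in their composite-intercept form. The first step is to invoke the maintained model \eqref{eq:choice-composite-main}. This model was obtained from \eqref{eq:maintained-choice-model} using the definition $a_{0t}=\psi_0(P_t,X_t)+h_{0t}$ in \eqref{eq:a0-def-main}. It gives
\[
    s_j(C_{it})=\sigma_j\bigl(u_{it}\bigr),
    \qquad
    u_{it}:=g_0(Z_{it},X_t)+a_{0t}\in\R^J.
\]
The outside-good probability is then
\[
    s_0(C_{it})=1-\sum_{k=1}^J s_k(C_{it}).
\]

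The second step is to check that this outside-good probability coincides with $\sigma_0(u_{it})$ as defined in \eqref{eq:sigma-main}. Summing the inside-good logit probabilities gives
\[
    \sum_{k}\sigma_k(u)=\frac{\sum_k e^{u_k}}{1+\sum_k e^{u_k}}.
\]
Therefore $1-\sum_k\sigma_k(u)=\bigl(1+\sum_k e^{u_k}\bigr)^{-1}=\sigma_0(u)$. Both $s_j$ and $s_0$ are strictly positive, because $\sigma$ maps into $(0,1)^J$ and $\sigma_0(u)>0$. Hence the ratio and its logarithm are well defined.

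The third step is to take the ratio. The common denominator $1+\sum_k e^{u_{k,it}}$ cancels, which yields $s_j(C_{it})/s_0(C_{it})=\exp\{u_{j,it}\}$. Taking logarithms gives
\[
    \log\frac{s_j(C_{it})}{s_0(C_{it})}=u_{j,it}=g_{0j}(Z_{it},X_t)+a_{0jt}
\]
for every $j\in\Jcal$.

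There is no substantive obstacle; the only point requiring care is the identification of $s_0$. One must confirm that $s_0:=1-\sum_j s_j$, as defined after \eqref{eq:str_choice_prob}, agrees with the logit outside-option probability. This is where the outside-option normalization in Assumption \ref{ass:normalizations-main}(iii) enters: it ensures that the index of the outside good is zero, so that its exponential contributes the $1$ in the denominator.
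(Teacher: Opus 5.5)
Your proposal is correct and follows the paper's proof. Both write $s_j$ and $s_0$ in their explicit multinomial-logit form, cancel the common denominator $1+\sum_k e^{u_k}$, and take logarithms. Your extra check that $1-\sum_k s_k(C_{it})$ coincides with $\sigma_0(u_{it})$ is a small point of care that the paper leaves implicit.
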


\begin{proof}
By the definition of the multinomial-logit map,
\[
    s_j(C_{it})
    =
    \frac{\exp\{g_{0j}(Z_{it},X_t)+a_{0jt}\}}
    {1+\sum_{k=1}^J\exp\{g_{0k}(Z_{it},X_t)+a_{0kt}\}},
    \qquad
    s_0(C_{it})
    =
    \frac{1}{1+\sum_{k=1}^J\exp\{g_{0k}(Z_{it},X_t)+a_{0kt}\}}.
\]
Taking the ratio and applying the logarithm gives the result.
\end{proof}

\begin{proposition}[Within-market identification, restated]
\label{prop:app-within-market-id-restated}
Suppose the Berry--Haile primitive assumptions stated in Appendix \ref{app:BH-assumptions} hold, and suppose the logit specialization in \eqref{eq:choice-composite-main} is correctly specified. Then, subject to the location normalizations in Assumption \ref{ass:normalizations-main}, $(g_0,a_{0t})$ are identified from the conditional choice probabilities.
\end{proposition}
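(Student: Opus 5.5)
The plan is to use the log-odds representation in Lemma \ref{lem:app-log-odds} to turn the identification question into a statement about an additive decomposition, and then apply the normalization lemma. The conditional choice probabilities $s(C_{it})$ are identified from the data (as $\E[Q_{it}\mid C_{it}]$ within markets). Hence, for each market $t$ and each $j$, the log-odds function
\[
    \ell_{jt}(z):=\log\frac{s_j(z,P_t,X_t,\Xi_t)}{s_0(z,P_t,X_t,\Xi_t)}
\]
is identified as a function of $z$ on the conditional support of $Z_{it}$ given $X_t$. Logit probabilities lie in $(0,1)$, so the logarithm is well defined. This is the only place where the Berry--Haile invertibility condition (Assumption \ref{ass:app-BH-demand-invertible}) is used: the logit link is explicitly invertible, with inverse given by the log-odds. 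By Lemma \ref{lem:app-log-odds}, $\ell_{jt}(z)=g_{0j}(z,X_t)+a_{0jt}$.

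Next I would argue uniqueness of the decomposition. Suppose $(\tilde g,\tilde a_t)$ in the normalized space generate the same probabilities. Then, for all $z$ in the conditional support and all markets $t$,
\[
    \tilde g(z,X_t)+\tilde a_t = g_0(z,X_t)+a_{0t}.
\]
Hence $\tilde g(z,X_t)-g_0(z,X_t)=a_{0t}-\tilde a_t$ does not depend on $z$. Because the support of $Z_{it}$ given $X_t$ is common across markets (Assumption \ref{ass:app-BH-basic}), the difference $\tilde g(\cdot,x)-g_0(\cdot,x)$ is a function $c(x)$ of $x$ alone. For this I need that, for each $x$ in the support, some market with $X_t=x$ realizes the whole $z$-support; this holds almost surely, and I would phrase the conclusion as holding a.s.\ on the support of $X_t$. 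Thus $\tilde g=g_0+c$ with $c:\X\to\R^J$. Since both functions satisfy the same normalization in Assumption \ref{ass:normalizations-main}(ii), Lemma \ref{lem:app-normalization-equivalence} gives $c\equiv 0$. So $\tilde g=g_0$, and then $\tilde a_t=a_{0t}$ for every $t$. Constructively, under the conditional-mean normalization, $a_{0t}=\E[\ell_t(Z_{it})\mid X_t,\text{market }t]$ and $g_0=\ell_t-a_{0t}$. Under the baseline normalization, $a_{0t}=\ell_t(z^0(X_t))$.

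The main obstacle is a measurability and support subtlety. The identified object is a function of $z$ for each market, and markets are indexed by unobserved $\Xi_t$. So I must make sure that "$g_0$ does not vary with $\Xi_t$" is used correctly. The key point is that $g_0(z,x)$ is a common function across markets, so the $z$-variation pins it down up to an $x$-only shift even though $a_{0t}$ differs across markets with the same $x$. I would handle this by working market-by-market, conditioning on $\mathcal M_t$, and invoking the common conditional support assumption. The outside-option normalization (iii) is already embedded in the use of $s_0$ as the denominator, so it requires no separate step.
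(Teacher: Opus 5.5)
Your proposal is correct and follows the paper's proof essentially step for step. You identify the log-odds $\ell_{jt}(z)=g_{0j}(z,X_t)+a_{0jt}$, show that $\tilde g-g_0$ is constant in $z$ within each market and so is an $x$-only shift $c(x)$, remove $c$ with Lemma~\ref{lem:app-normalization-equivalence}, and back-substitute to get $\tilde a_t=a_{0t}$.

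Your explicit use of the common conditional support of $Z_{it}$ given $X_t$, so that each market's constancy holds on all of $\mathcal Z(x)$, fills in a step the paper leaves implicit. One caveat on your constructive conditional-mean formula: it should integrate $\ell_t$ against $F_{Z\mid X=X_t}$. That coincides with the within-market distribution of $Z_{it}$ only if $Z_{it}$ is independent of $(P_t,W_t,\Xi_t)$ given $X_t$.
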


\begin{proof}
Let $\mathcal Z(x):=\supp(Z_{it}\mid X_t=x)$ denote the conditional support of consumer covariates. By the Berry--Haile primitive conditions, the conditional choice-probability vector $s(C_{it})$ is identified on the relevant support. Since the maintained logit model assigns strictly positive probability to every inside good and to the outside option, the log-odds functions
\[
    \ell_{jt}(z)
    :=
    \log \frac{s_j(z,P_t,X_t,\Xi_t)}{s_0(z,P_t,X_t,\Xi_t)},
    \qquad j=1,\dots,J,
\]
are identified for all $z\in \mathcal Z(X_t)$. By Lemma \ref{lem:app-log-odds}, the true pair satisfies
\begin{equation}
\label{eq:app-id-log-odds-truth-new}
    \ell_{jt}(z)=g_{0j}(z,X_t)+a_{0jt}
    \qquad \text{for all }z\in \mathcal Z(X_t),\ j\le J.
\end{equation}
Now let $(\tilde g,\tilde a_t)$ be any other pair in the normalized parameter space that generates the same conditional choice probabilities. Equality of the identified log odds implies
\begin{equation}
\label{eq:app-id-log-odds-alt-new}
    \tilde g_j(z,X_t)+\tilde a_{jt}
    =g_{0j}(z,X_t)+a_{0jt}
    \qquad \text{for all }z\in \mathcal Z(X_t),\ j\le J.
\end{equation}
Hence
\begin{equation}
\label{eq:app-id-delta-new}
    \tilde g_j(z,X_t)-g_{0j}(z,X_t)
    =a_{0jt}-\tilde a_{jt},
\end{equation}
and the right-hand side is constant in $z$. Therefore, for each $x$ in the support of $X_t$, there is a vector $c(x)\in\R^J$ such that
\[
    \tilde g(z,x)=g_0(z,x)+c(x)
    \qquad \text{for all }z\in\mathcal Z(x).
\]
From here, Lemma \ref{lem:app-normalization-equivalence} gives us that $c(x) = 0$ under either the conditional-mean or baseline normalization. Thus $\tilde g=g_0$ on the relevant support. Substituting this equality back into \eqref{eq:app-id-log-odds-alt-new} yields $\tilde a_{jt}=a_{0jt}$ for every $j$ and $t$. Hence no distinct normalized pair can generate the same conditional choice probabilities, so $(g_0,a_{0t})$ are identified.
\end{proof}

\begin{proposition}[Identification of $\psi_0$ from the composite intercept]
\label{prop:app-psi-id}
Suppose $a_{0t}$ is identified and Assumptions \ref{ass:normalizations-main} and \ref{ass:price-id-main} holds. Then $\psi_0(P_t,X_t)$ is identified almost surely. Consequently $h_{0t}=a_{0t}-\psi_0(P_t,X_t)$ is identified almost surely.
\end{proposition}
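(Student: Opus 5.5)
The argument is the standard NPIV uniqueness argument applied to the identified composite intercept. Suppose $\tilde\psi$ is any candidate price-side function in $\bfPsi$ that is observationally equivalent to $\psi_0$ given the identified intercept. Observational equivalence here means that $\tilde h_t := a_{0t}-\tilde\psi(P_t,X_t)$ also satisfies the maintained restrictions: the normalization $\E[\tilde h_t\mid X_t]=0$ from Assumption \ref{ass:normalizations-main}(i) and the exclusion restriction \eqref{eq:price-iv-main}. The goal is to show $\tilde\psi(P_t,X_t)=\psi_0(P_t,X_t)$ almost surely. The conclusion for $h_{0t}$ is then immediate, since $h_{0t}=a_{0t}-\psi_0(P_t,X_t)$ is a known function of identified objects.

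The steps, in order, are as follows.

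First, exactly as in the proof of Proposition \ref{prop:app-population-restrictions-restated}, combine \eqref{eq:price-iv-main} with $\E[h_{0t}\mid X_t]=0$. This gives the conditional moment
\[
\E[a_{0t}-\psi_0(P_t,X_t)\mid W_t,X_t]=0 \quad\text{a.s.},
\]
and the identical argument gives the same moment with $\tilde\psi$ in place of $\psi_0$.

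Second, subtract the two moments. Since $a_{0t}$ is identified and has finite expectation, it cancels, leaving
\[
\E[\bm{\mathcal F}(P_t,X_t)\mid W_t,X_t]=0 \quad\text{a.s.},\qquad \bm{\mathcal F}:=\tilde\psi-\psi_0 .
\]
Here $\bm{\mathcal F}$ is vector-valued from $\Pcal\times\X$ to $\R^J$. Because both $\tilde\psi$ and $\psi_0$ lie in the bounded H\"older balls defining $\bfPsi$, $\bm{\mathcal F}$ is bounded and so has finite expectation. The linearity of conditional expectation used in the subtraction is justified by this integrability, together with that of $a_{0t}$.

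Third, invoke the completeness condition in Assumption \ref{ass:price-id-main}(ii). This yields $\bm{\mathcal F}(P_t,X_t)=0$ almost surely, that is, $\tilde\psi(P_t,X_t)=\psi_0(P_t,X_t)$ almost surely.

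A subtle point is that no further location normalization on $\psi$ is needed. Completeness already rules out any $X$-only shift $c(X_t)$: such a shift would satisfy $\E[c(X_t)\mid W_t,X_t]=c(X_t)$, so the moment in the second step forces $c=0$. This shows that Assumption \ref{ass:normalizations-main}(i) is precisely what pins down the level of $\psi_0$ relative to $h_{0t}$. Proposition \ref{prop:app-within-market-id-restated} and Lemma \ref{lem:app-normalization-equivalence} have already removed any $X$-only component from $g_0$, so $a_{0t}$ itself is uniquely defined and the subtraction in the second step is well posed.

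The main obstacle is verifying that $\bm{\mathcal F}$ belongs to the "relevant class" on which completeness is assumed. If completeness were only assumed for, say, bounded functions, I would first check that differences of elements of $\bfPsi$ stay in that class. Boundedness of the H\"older balls gives this directly.

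Finally, identification is only almost sure on $\supp(P_t,X_t)$. Extending it to every point of $\mathcal D_\psi$, as the sup-norm results later require, would use continuity of elements of $\bfPsi$: two continuous functions that agree almost surely agree on the support.
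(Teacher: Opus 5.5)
Your proposal is correct and follows the paper's proof. You derive $\E[a_{0t}-\psi_0(P_t,X_t)\mid W_t,X_t]=0$ from the normalization and the exclusion restriction, subtract the analogous moment for a competing candidate, apply completeness, and then recover $h_{0t}=a_{0t}-\psi_0(P_t,X_t)$; your extra remarks on integrability and on the continuity extension to $\mathcal D_\psi$ match how the paper later uses the result.
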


\begin{proof}
At the truth, $a_{0t}=\psi_0(P_t,X_t)+h_{0t}$. By Assumptions \ref{ass:normalizations-main} and \ref{ass:price-id-main}, $\E[h_{0t}\mid W_t,X_t]=0$. Therefore
\[
    \E[a_{0t}-\psi_0(P_t,X_t)\mid W_t,X_t]=0.
\]
Let $\psi$ be any other candidate satisfying the same conditional moment. Subtracting the two conditional moments gives
\[
    \E[\psi(P_t,X_t)-\psi_0(P_t,X_t)\mid W_t,X_t]=0.
\]
By completeness, $\psi(P_t,X_t)=\psi_0(P_t,X_t)$ almost surely. Since $a_{0t}$ is identified, $h_{0t}=a_{0t}-\psi_0(P_t,X_t)$ is identified as well.
\end{proof}

\begin{theorem}[Identification of the normalized structural functions, restated]
\label{thm:app-identification-restated}
Suppose the maintained demand model \eqref{eq:maintained-choice-model} holds. Suppose further that the conditions of Proposition \ref{prop:within-market-id-main} and Assumption \ref{ass:price-id-main} hold. Then the pair $(g_0,\psi_0)$ is identified in the normalized parameter space. The corresponding market shocks are identified by $h_{0t}=a_{0t}-\psi_0(P_t,X_t)$.
\end{theorem}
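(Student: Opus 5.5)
The proof chains the two identification results already established. It uses within-market identification to pin down $(g_0,a_{0t})$, and then price-side identification to pin down $\psi_0$ given $a_{0t}$. The result is then assembled in the normalized parameter space $\ThetaSpace=\G\times\bfPsi$.

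First I invoke Proposition \ref{prop:app-within-market-id-restated}. Its hypotheses are exactly the conditions of Proposition \ref{prop:within-market-id-main} together with correct specification of the maintained model \eqref{eq:maintained-choice-model}. Under \eqref{eq:a0-def-main} the maintained model can be rewritten as \eqref{eq:choice-composite-main}, so it delivers that $g_0$ is identified on $\mathcal D_g$ and that the composite intercept $a_{0t}$ is identified market by market. Concretely, suppose an alternative normalized pair $(\tilde g,\tilde\psi)$, together with some shock $\tilde h_t$, generates the same conditional choice probabilities. Then its composite intercept $\tilde a_t:=\tilde\psi(P_t,X_t)+\tilde h_t$ satisfies $\tilde g=g_0$ and $\tilde a_t=a_{0t}$. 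This is where the normalizations of Assumption \ref{ass:normalizations-main} (ii)--(iii) enter: via Lemma \ref{lem:app-normalization-equivalence}, they rule out $X$-only shifts between $g$ and the intercept, as in Lemma \ref{lem:app-g-normalization}.

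Second, I decompose $a_{0t}$. The alternative model, being in the normalized space, must also satisfy normalization (i) and the instrument condition \eqref{eq:price-iv-main} for its own shock $\tilde h_t$. Hence $\E[a_{0t}-\tilde\psi(P_t,X_t)\mid W_t,X_t]=0$, which is the same conditional moment \eqref{eq:psi-CM-main} satisfied by $\psi_0$. Proposition \ref{prop:app-psi-id} then applies. Differencing the two moments gives $\E[\tilde\psi-\psi_0\mid W_t,X_t]=0$, and completeness in Assumption \ref{ass:price-id-main}(ii), applied to $\bm{\mathcal F}=\tilde\psi-\psi_0$, yields $\tilde\psi(P_t,X_t)=\psi_0(P_t,X_t)$ almost surely. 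This $\bm{\mathcal F}$ has finite expectation because $\bfPsi$ consists of bounded functions, so it lies in the relevant class. Finally, $h_{0t}=a_{0t}-\psi_0(P_t,X_t)$ is a known function of identified objects, so it is identified as well.

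The main obstacle is bookkeeping about what ``identified'' means for $\psi_0$. Completeness gives equality only $P$-a.s. on $\mathcal D_\psi$, that is, as elements of $\bfPsi$ modulo null sets. I would either state the result in that sense or use continuity of Hölder-ball elements, together with $\mathcal D_\psi$ being the support, to upgrade a.s. equality to equality everywhere on $\mathcal D_\psi$. The same care is needed when concluding $\tilde g=g_0$ on all of $\mathcal D_g$ from equality on each conditional support $\mathcal Z(x)$.
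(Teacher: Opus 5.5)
Your proposal is correct and follows the paper's proof. Proposition~\ref{prop:app-within-market-id-restated} pins down $(g_0,a_{0t})$ under the normalizations; Proposition~\ref{prop:app-psi-id} then applies completeness to the differenced conditional moment to pin down $\psi_0$, and hence $h_{0t}=a_{0t}-\psi_0(P_t,X_t)$. Your extra step of upgrading almost-sure equality to equality on $\mathcal D_\psi$ via continuity of H\"older-ball elements is a useful refinement that the paper's short proof leaves implicit (the paper uses the same continuity argument only later, in the proof of Proposition~\ref{prop:app-population-id}).
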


\begin{proof}
Proposition \ref{prop:app-within-market-id-restated} identifies $g_0$ and $a_{0t}$. Proposition \ref{prop:app-psi-id} identifies $\psi_0(P_t,X_t)$ and then $h_{0t}=a_{0t}-\psi_0(P_t,X_t)$. The normalizations in Assumption \ref{ass:normalizations-main} select a unique representative of the observational equivalence class. Hence $(g_0,\psi_0)$ is identified in the normalized parameter space.
\end{proof}

% =====================================================================
\section{Inner Profiling Details}
\label{app:inner-profiling}

This appendix gives primitive conditions under which the profile map $g\mapsto a_t^*(g)$ is well defined, the empirical profile $\hat a_t(g)$ is uniformly consistent, and the true profile satisfies $a_t^*(g_0)=a_{0t}$. Let
\[
    \Mcal_t:=(P_t,X_t,W_t,\Xi_t).
\]
For a candidate $g$, define the population inner criterion
\begin{equation}
\label{eq:inner-potential-pop-app}
    L_t(g,a)
    :=
    \E\!\left[
    \log\!\left(1+\sum_{j=1}^J\exp\{g_j(Z_{it},X_t)+a_j\}\right)
    \Bigg | \  \Mcal_t
    \right]-s_t'a,
\end{equation}
where $s_t:=\E[Q_{it}\mid \Mcal_t]$. Let $a_t^*(g)\in\argmin_{a\in\Acal}L_t(g,a)$. Define
\[
    \G_{K_g}^0:=\G_{K_g}\cup\{g_0\}.
\]
The inclusion of $g_0$ is needed only for population statements such as
the true-profile identity and the Lipschitz comparison
$a_t^*(\hat g)-a_t^*(g_0)$. The estimator itself still minimizes over
the finite-dimensional sieve $\G_{K_g}$.

\begin{assumption}[Primitive profile conditions]
\label{ass:app-profile-primitive}
The following conditions hold.
\begin{enumerate}[label=(\roman*),leftmargin=2em]
    \item $\Acal\subset\R^J$ is compact and convex, and $a_t^*(g)$ is
    unique and interior for all $g\in\G_{K_g}^0$ and all $t\le T$.

    \item There exists $B_u<\infty$ such that
    $\abs{g_j(Z_{it},X_t)+a_j}\le B_u$ for all
    $g\in\G_{K_g}^0$, $a\in\Acal$, $j\le J$, and $t\le T$.

    \item The empirical inner criterion satisfies
    \[
        \sup_{g\in\G_{K_g}}\sup_{a\in\Acal}\max_{t\le T}
        \abs{\hat L_t(g,a)-L_t(g,a)}=o_p(1).
    \]

    \item There exists $C<\infty$ such that, uniformly in $t\le T$ and
    $a\in\Acal$,
    \[
        \norm{\nabla_a L_t(g,a)-\nabla_a L_t(\tilde g,a)}
        \le C\norm{g-\tilde g}_\infty
        \qquad\text{for all }g,\tilde g\in\G_{K_g}^0.
    \]
    
    \item Let $\varepsilon_{it}^0
        :=
        Q_{it}
        -
        \sigma\!\left(g_0(Z_{it},X_t)+a_{0t}\right).$
    The structural choice residual is mean zero after conditioning on the
    market environment:
    \begin{equation}
    \label{eq:app-market-share-compatibility}
        \E[\varepsilon_{it}^0\mid\Mcal_t]=0
        \qquad\text{a.s.}
    \end{equation}
\end{enumerate}
\end{assumption}

Condition (i) makes the population profile a function rather than a set-valued correspondence. It can be justified from a more primitive assumption that the population market shares are not so extreme that the required intercept is infinite. Condition (ii) keeps all logit probabilities bounded away from zero and one, which gives uniform curvature of the inner problem. This can be justified by the fact that the intercept set $\mathcal{A}$ is assumed compact and that the sieve space $\mathcal{G}_{K_g}$ is defined as a bounded H\"older ball. Condition (iii) is the within-market uniform law of large numbers; it is justified by conditional independence within markets and standard empirical-process arguments for finite-dimensional sieve classes whose dimensions grow slowly relative to $n_{\min}$; see \cite{andrews_chapter_1994}, \cite{van_der_vaart_weak_1996}, \cite{newey_convergence_1997}, and \cite{chen_large_2007}. Condition (iv) is a smoothness condition on the profile map and follows from the bounded derivative of the logit map and the sup-norm control of $g$. The population conditions are imposed on $\G_{K_g}^0$ so that the same profile map controls both sieve candidates and the true function $g_0$. Condition (v) is a market-share compatibility condition between the
structural choice probabilities and the population profile problem. It
requires the structural probabilities, averaged over within-market consumer
heterogeneity, to reproduce the conditional market shares. The condition is
implied by the familiar pointwise exclusion
\[
    \E[Q_{it}\mid C_{it},W_t]
    =
    \E[Q_{it}\mid C_{it}],
\]
which holds, for example, when \(W_t\) is an excluded cost shifter or some other standard price instrument. The stated condition is weaker because it imposes this equality
only after averaging over \(Z_{it}\).

\begin{lemma}[Gradient and Hessian of the inner objective]
\label{lem:app-inner-gradient}
For every $g$ and $a$,
\[
    \nabla_a\hat L_t(g,a)
    =
    \frac{1}{n_t}\sum_{i=1}^{n_t}\sigma(g(Z_{it},X_t)+a)-\hat s_t,
\]
and
\[
    \nabla_a L_t(g,a)
    =
    \E[\sigma(g(Z_{it},X_t)+a)\mid \Mcal_t]-s_t.
\]
Moreover,
\[
    \nabla_{aa}L_t(g,a)
    =
    \E[\Lambda(g(Z_{it},X_t)+a)\mid \Mcal_t],
    \qquad
    \Lambda(u):=\diag(\sigma(u))-\sigma(u)\sigma(u)'.
\]
\end{lemma}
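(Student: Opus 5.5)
The plan is a direct calculation: first for the log-sum-exp function, then the sample and population objectives follow from linearity. Define
\[
F(u):=\log\Bigl(1+\sum_{k=1}^J\exp\{u_k\}\Bigr),\qquad u\in\R^J.
\]
For the partial derivatives, $\partial F/\partial u_j=\exp\{u_j\}/(1+\sum_k\exp\{u_k\})=\sigma_j(u)$ by \eqref{eq:sigma-main}, so $\nabla F=\sigma$. Differentiating $\sigma_j$ once more gives
\[
\partial\sigma_j/\partial u_l=\sigma_j(u)\bigl(\mathbbm{1}\{j=l\}-\sigma_l(u)\bigr),
\]
which in matrix form is $\nabla^2F(u)=\Lambda(u)=\diag(\sigma(u))-\sigma(u)\sigma(u)'$.

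For the sample objective, note that $a\mapsto g(Z_{it},X_t)+a$ has identity Jacobian. The chain rule then gives $\nabla_a F(g(Z_{it},X_t)+a)=\sigma(g(Z_{it},X_t)+a)$. Since $\hat L_t$ in \eqref{eq:inner-potential-main} is a finite average of such terms minus the linear term $\hat s_t'a$, whose gradient is $\hat s_t$, the first display follows immediately.

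For the population objective \eqref{eq:inner-potential-pop-app}, the one substantive step is justifying the interchange of $\nabla_a$ and $\E[\cdot\mid\Mcal_t]$, for both the gradient and the Hessian. I will use conditional dominated convergence (differentiation under the integral sign applied to regular conditional distributions given $\Mcal_t$). This needs integrable envelopes, uniformly over $a$ in a neighborhood, for the first and second derivatives. These are free: every component of $\sigma(u)$ lies in $(0,1)$, so $\norm{\sigma(u)}\le\sqrt J$, and every entry of $\Lambda(u)$ is bounded by $1$ in absolute value, so constants serve as envelopes. The integrand itself is integrable as well, since $0\le F(u)\le\log(1+J)+\max_k\abs{u_k}$, and Assumption \ref{ass:app-profile-primitive}(ii) bounds the index when needed. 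If the argument is preferred without (ii), a mean-value argument with the bounded gradient still gives integrable difference quotients.

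Given the interchange, $\nabla_aL_t(g,a)=\E[\sigma(g(Z_{it},X_t)+a)\mid\Mcal_t]-s_t$, because $s_t$ is $\Mcal_t$-measurable and constant in $a$. Differentiating once more, with the same dominated convergence applied to $\sigma$ and its bounded Jacobian $\Lambda$, yields $\nabla_{aa}L_t=\E[\Lambda(g(Z_{it},X_t)+a)\mid\Mcal_t]$.

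The only non-mechanical point is this differentiation under the conditional expectation, and the boundedness of $\sigma$ and $\Lambda$ settles it.
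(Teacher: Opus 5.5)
Your proposal is correct and follows the same direct calculation as the paper: differentiate the log-sum-exp term to get $\sigma$, then differentiate $\sigma$ to get the logit Jacobian $\Lambda$. The one addition is that you explicitly justify exchanging $\nabla_a$ and $\E[\cdot\mid\Mcal_t]$, using regular conditional distributions and the constant envelopes from the boundedness of $\sigma$ and $\Lambda$; the paper takes this interchange for granted.
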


\begin{proof}
Differentiate the log-sum-exp term in \eqref{eq:inner-potential-main}. Its derivative with respect to $a_j$ is $\sigma_j(g(Z_{it},X_t)+a)$. The above gradient formulas follow immediately. Differentiating the logit probability vector gives the multinomial-logit Jacobian $\Lambda(u)$, which yields the Hessian after taking the conditional expectation.
\end{proof}

\begin{lemma}[Uniform local separation]
\label{lem:app-inner-separation}
Suppose Assumption \ref{ass:app-profile-primitive} holds. Then there exists
$c_\Lambda>0$ such that, for every $g\in\G_{K_g}^0$, every $t\le T$, and
every $a\in\Acal$,
\[
    L_t(g,a)-L_t(g,a_t^*(g))
    \ge
    \frac{c_\Lambda}{2}\norm{a-a_t^*(g)}^2 .
\]
Consequently, for every $\epsilon>0$,
\[
    \inf_{g\in\G_{K_g}^0}\min_{t\le T}
    \inf_{a\in\Acal:\norm{a-a_t^*(g)}\ge\epsilon}
    \{L_t(g,a)-L_t(g,a_t^*(g))\}
    \ge
    \eta(\epsilon),
\]
where $\eta(\epsilon):=(c_\Lambda/2)\epsilon^2>0$.
\end{lemma}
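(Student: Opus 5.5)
The argument is a strong-convexity argument for the population inner criterion $L_t(g,\cdot)$ on the compact convex set $\Acal$, combined with the first-order condition at the interior minimizer $a_t^*(g)$.

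\textbf{Step 1: uniform lower bound on the Hessian.} By Lemma \ref{lem:app-inner-gradient}, $\nabla_{aa}L_t(g,a)=\E[\Lambda(g(Z_{it},X_t)+a)\mid\Mcal_t]$, with $\Lambda(u)=\diag(\sigma(u))-\sigma(u)\sigma(u)'$. I will show that $\lambda_{\min}(\Lambda(u))\ge c_\Lambda>0$ for every $u\in[-B_u,B_u]^J$. For $v\in\R^J$,
\[
    v'\Lambda(u)v=\sum_j\sigma_jv_j^2-\Big(\sum_j\sigma_jv_j\Big)^2 .
\]
This is the variance of the random variable taking value $v_j$ with probability $\sigma_j$ and value $0$ with probability $\sigma_0$, for the full distribution $(\sigma_0,\sigma_1,\dots,\sigma_J)$. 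This variance is zero only if $v=0$, because $\sigma_0>0$ places mass on the value $0$ and every $\sigma_j>0$. Hence $\Lambda(u)$ is positive definite. Its minimum eigenvalue is continuous in $u$, so it attains a positive minimum $c_\Lambda$ on the compact cube $[-B_u,B_u]^J$.

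Assumption \ref{ass:app-profile-primitive}(ii) keeps $g(Z_{it},X_t)+a$ in this cube for every $g\in\G_{K_g}^0$, $a\in\Acal$ and $t\le T$. Since the bound $c_\Lambda$ holds pointwise, taking the conditional expectation preserves it, so $\nabla_{aa}L_t(g,a)\succeq c_\Lambda I_J$ uniformly in $(g,a,t)$. The constant depends only on $B_u$ and $J$, not on $g$, $t$ or $K_g$, and this is what makes the bound uniform.

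\textbf{Step 2: quadratic growth.} Fix $g$, $t$ and $a\in\Acal$, and write $a^*=a_t^*(g)$. Because $\Acal$ is convex, the segment from $a^*$ to $a$ lies in $\Acal$. A second-order Taylor expansion with integral remainder along this segment gives
\[
    L_t(g,a)-L_t(g,a^*)\ge\nabla_aL_t(g,a^*)'(a-a^*)+\tfrac{c_\Lambda}{2}\norm{a-a^*}^2 .
\]
Since $a^*$ is interior by Assumption \ref{ass:app-profile-primitive}(i), the first-order condition gives $\nabla_aL_t(g,a^*)=0$. This yields the first display.

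Interiority is not actually essential: for a minimizer over a convex set, the variational inequality $\nabla_aL_t(g,a^*)'(a-a^*)\ge0$ already suffices. Either way, the gradient term is nonnegative.

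\textbf{Step 3: the separation statement.} For every $a$ with $\norm{a-a_t^*(g)}\ge\epsilon$, the bound from Step 2 is at least $(c_\Lambda/2)\epsilon^2$. That bound is uniform in $g$, $t$ and $a$, so taking the infimum over these gives $\eta(\epsilon)$.

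\textbf{Technical point.} The only step needing care is differentiating under the conditional expectation. It is justified because the log-sum-exp integrand has derivatives that are bounded uniformly under (ii): $\sigma$ and $\Lambda$ are bounded, so dominated convergence applies. The substantive step is the uniform Hessian bound in Step 1.
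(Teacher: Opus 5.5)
Your proposal is correct and follows essentially the same route as the paper. Both arguments bound the eigenvalues of the logit Jacobian $\Lambda(u)$ from below through its variance representation, pass that bound through the conditional expectation to the Hessian of $L_t(g,\cdot)$, and then apply Taylor's theorem with integral remainder at the interior minimizer. The only difference is that the paper gets an explicit constant $c_\Lambda=\kappa^2$, with $\kappa=e^{-B_u}/(1+Je^{B_u})$, from the pairwise-variance identity, whereas your continuity-and-compactness argument on $[-B_u,B_u]^J$ gives a non-explicit but equally valid constant; your variational-inequality remark also correctly shows that interiority is not needed for this lemma.
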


\begin{proof}
By Assumption \ref{ass:app-profile-primitive}(ii), the inside-good logit probabilities satisfy
\[
    \sigma_j(u)
    =
    \frac{\exp(u_j)}{1+\sum_{k=1}^J\exp(u_k)}
    \ge
    \frac{e^{-B_u}}{1+J e^{B_u}}
    =:\kappa,
    \qquad j=1,\dots,J.
\]
Similarly, the outside-good probability satisfies $
    \sigma_0(u)
    =
    \frac{1}{1+\sum_{k=1}^J\exp(u_k)}
    \ge
    \frac{1}{1+J e^{B_u}}
    \ge
    \kappa$.
Thus all inside and outside probabilities are uniformly bounded below by
$\kappa>0$ on the admissible index set.

We next show that the logit Hessian is uniformly positive definite.
For any $v\in\R^J$,
\[
    v'\Lambda(u)v
    =
    \sum_{j=1}^J \sigma_j(u)v_j^2
    -
    \left(\sum_{j=1}^J \sigma_j(u)v_j\right)^2
\]
where $\Lambda(u):=\diag(\sigma(u))-\sigma(u)\sigma(u)'$. Then, write $p_j=\sigma_j(u)$ for $j=1,\dots,J$ and $p_0=\sigma_0(u)$.
Let $Y$ be the random variable that takes value $0$ with probability
$p_0$ and value $v_j$ with probability $p_j$ for $j=1,\dots,J$. Then
\[
    v'\Lambda(u)v=\Var(Y).
\]
Using the pairwise-variance identity,
\[
    \Var(Y) := \sum_{r=0}^J p_r(y_r - \E[Y])^2
    =
    \sum_{0\le r<s\le J}p_rp_s(y_r-y_s)^2,
\]
where $y_0=0$ and $y_j=v_j$, we obtain
\[
    v'\Lambda(u)v
    \ge
    \sum_{j=1}^J p_0p_j(v_j-0)^2
    \ge
    \kappa^2\sum_{j=1}^Jv_j^2
    =
    \kappa^2\norm{v}^2 .
\]
Therefore the smallest eigenvalue of $\Lambda(u)$ is bounded below
uniformly by  $c_\Lambda:=\kappa^2>0$.

By Lemma \ref{lem:app-inner-gradient},
\[
    \nabla_{aa}L_t(g,a)
    =
    \E[\Lambda(g(Z_{it},X_t)+a)\mid \Mcal_t].
\]
Since each matrix inside the conditional expectation has smallest
eigenvalue at least $c_\Lambda$, the conditional expectation also has
smallest eigenvalue at least $c_\Lambda$. Hence, for every
$v\in\R^J$,
\[
    v'\nabla_{aa}L_t(g,a)v\ge c_\Lambda\norm{v}^2
\]
uniformly over $g\in\G_{K_g}^0$, $a\in\Acal$, and $t\le T$. So, $a \mapsto L_t(g,a)$ is uniformly strongly convex.

Fix $g$ and $t$, and write $a^*=a_t^*(g)$. By Assumption
\ref{ass:app-profile-primitive}(i), $a^*$ is an interior minimizer, so
\[
    \nabla_a L_t(g,a^*)=0.
\]
For any $a\in\Acal$, Taylor's theorem with integral remainder gives
\[
\begin{aligned}
    L_t(g,a)-L_t(g,a^*)
    &=
    \int_0^1(1-s)
    (a-a^*)'
    \nabla_{aa}L_t(g,a^*+s(a-a^*))
    (a-a^*)\,ds \\
    &\ge
    \int_0^1(1-s)c_\Lambda\norm{a-a^*}^2\,ds \\
    &=
    \frac{c_\Lambda}{2}\norm{a-a^*}^2 .
\end{aligned}
\]
This bound is uniform in $g$ and $t$. Therefore, if
$\norm{a-a_t^*(g)}\ge\epsilon$, then
\[
    L_t(g,a)-L_t(g,a_t^*(g))
    \ge
    \frac{c_\Lambda}{2}\epsilon^2 .
\]
Taking $\eta(\epsilon):=\frac{c_\Lambda}{2}\epsilon^2$
proves the separation claim.
\end{proof}

\begin{proposition}[Uniform consistency of the profiled intercept]
\label{prop:app-profile-consistency-restated}
Suppose Assumption \ref{ass:app-profile-primitive} holds. Then
\[
    \sup_{g\in\G_{K_g}}\max_{t\le T}
    \norm{\hat a_t(g)-a_t^*(g)}=o_p(1).
\]
\end{proposition}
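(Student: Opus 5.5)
The plan is the standard argmin-consistency argument, applied uniformly over the index set $(g,t)\in\G_{K_g}\times\{1,\dots,T\}$. The two inputs are the uniform deterministic separation of Lemma \ref{lem:app-inner-separation} and the uniform law of large numbers in Assumption \ref{ass:app-profile-primitive}(iii). Write
\[
\Delta_T:=\sup_{g\in\G_{K_g}}\sup_{a\in\Acal}\max_{t\le T}\abs{\hat L_t(g,a)-L_t(g,a)},
\]
so that $\Delta_T=o_p(1)$.

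First, I will fix $g\in\G_{K_g}$ and $t\le T$, and set $a^*=a_t^*(g)$ and $\hat a=\hat a_t(g)$. Both points lie in $\Acal$: $\hat a$ because it is a minimizer over $\Acal$, and $a^*$ by Assumption \ref{ass:app-profile-primitive}(i). Since $\hat a$ minimizes $\hat L_t(g,\cdot)$ over $\Acal$, we have $\hat L_t(g,\hat a)\le\hat L_t(g,a^*)$. Adding and subtracting the population criterion gives the chain
\[
L_t(g,\hat a)-L_t(g,a^*)\le [L_t(g,\hat a)-\hat L_t(g,\hat a)]+[\hat L_t(g,a^*)-L_t(g,a^*)]\le 2\Delta_T .
\]
This bound holds for every $(g,t)$ simultaneously, because $\Delta_T$ is a supremum over both.

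Second, I will apply the quadratic lower bound of Lemma \ref{lem:app-inner-separation} at the point $a=\hat a$:
\[
\frac{c_\Lambda}{2}\norm{\hat a-a^*}^2\le L_t(g,\hat a)-L_t(g,a^*)\le 2\Delta_T .
\]
The constant $c_\Lambda=\kappa^2$ does not depend on $g$, $t$, or $T$. Taking the supremum over $g$ and the maximum over $t$ then yields
\[
\sup_{g}\max_{t}\norm{\hat a_t(g)-a_t^*(g)}\le (4\Delta_T/c_\Lambda)^{1/2}=o_p(1).
\]
This argument gives an explicit rate. Alternatively, the $\epsilon$--$\eta(\epsilon)$ form can be used: the event $\{\sup\max\norm{\hat a-a^*}\ge\epsilon\}$ is contained in $\{2\Delta_T\ge\eta(\epsilon)\}$, whose probability tends to zero.

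The main points to check are measurability and existence rather than the inequalities themselves.
\begin{enumerate}[label=(\alph*),leftmargin=2em]
\item \emph{Existence of $\hat a_t(g)$.} The sample criterion $\hat L_t(g,\cdot)$ is continuous and convex, and $\Acal$ is compact, so a minimizer exists. If the minimizer is not unique, the argument above applies to any selection.
\item \emph{Measurability of the supremum.} The supremum over $\G_{K_g}$ may need to be interpreted with outer probability. This causes no difficulty, since the bound holds pointwise in the sample.
\item \emph{Where strong convexity is used.} The uniform strong convexity in Lemma \ref{lem:app-inner-separation} relies on the bounded-index condition (ii) only over the admissible set $\Acal$, and $\hat a$ lies in that set.
\end{enumerate}
I do not need interiority of $\hat a$, the Lipschitz condition (iv), or condition (v) for this result. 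Those conditions serve the later identity $a_t^*(g_0)=a_{0t}$ and the comparison of profiles across different $g$.
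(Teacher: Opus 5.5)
Your proposal is correct and is essentially the paper's argument: it combines the minimizing property of $\hat a_t(g)$ with the uniform law in Assumption \ref{ass:app-profile-primitive}(iii) and the uniform separation of Lemma \ref{lem:app-inner-separation}. The only difference is presentational: you plug $\hat a_t(g)$ directly into the quadratic lower bound to get the explicit bound $(4\Delta_T/c_\Lambda)^{1/2}$, whereas the paper fixes $\epsilon$ and argues by contradiction on the event where the uniform criterion error is at most $\eta(\epsilon)/3$.
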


\begin{proof}
Fix $\epsilon>0$ and let $\eta(\epsilon)$ be the separation constant from
Lemma \ref{lem:app-inner-separation}. Define the event
\[
    \mathcal E_T(\epsilon)
    :=
    \left\{
    \sup_{g\in\G_{K_g}}\sup_{a\in\Acal}\max_{t\le T}
    \abs{\hat L_t(g,a)-L_t(g,a)}
    \le
    \eta(\epsilon)/3
    \right\}.
\]
By Assumption \ref{ass:app-profile-primitive}(iii), $\Prob(\mathcal E_T(\epsilon))\to1.$ We show that, on $\mathcal E_T(\epsilon)$,
\[
    \sup_{g\in\G_{K_g}}\max_{t\le T}
    \norm{\hat a_t(g)-a_t^*(g)}
    <\epsilon .
\]
Suppose not. Then for some $g\in\G_{K_g}$ and some $t\le T$,
\[
    \norm{\hat a_t(g)-a_t^*(g)}\ge\epsilon.
\]
By the separation lemma,
\[
    L_t(g,\hat a_t(g))
    \ge
    L_t(g,a_t^*(g))+\eta(\epsilon).
\]
On the event $\mathcal E_T(\epsilon)$, uniform convergence of the inner
criteria implies
\[
    \hat L_t(g,\hat a_t(g))
    \ge
    L_t(g,\hat a_t(g))-\eta(\epsilon)/3
    \ge
    L_t(g,a_t^*(g))+2\eta(\epsilon)/3.
\]
Again using $\mathcal E_T(\epsilon)$,
\[
    \hat L_t(g,a_t^*(g))
    \le
    L_t(g,a_t^*(g))+\eta(\epsilon)/3.
\]
Combining the last two lines gives
\[
    \hat L_t(g,\hat a_t(g))
    >
    \hat L_t(g,a_t^*(g)).
\]
This contradicts the definition of $\hat a_t(g)$ as a minimizer of
$a\mapsto \hat L_t(g,a)$ over $\Acal$. Hence, on
$\mathcal E_T(\epsilon)$,
\[
    \sup_{g\in\G_{K_g}}\max_{t\le T}
    \norm{\hat a_t(g)-a_t^*(g)}
    <\epsilon .
\]
Since $\Prob(\mathcal E_T(\epsilon))\to1$ and $\epsilon>0$ was arbitrary,
the displayed supremum is $o_p(1)$.
\end{proof}

\begin{proposition}[The true intercept solves the true profile problem]
\label{prop:app-true-profile}
At the truth, $a_t^*(g_0)=a_{0t}$ for every $t$.
\end{proposition}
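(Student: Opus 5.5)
The approach is to show that $a_{0t}$ satisfies the first-order condition of the population inner problem at $g=g_0$. Strict convexity then forces it to be the unique minimizer, so $a_t^*(g_0)=a_{0t}$.

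\textbf{Step 1 (the gradient vanishes at $a_{0t}$).} By Lemma \ref{lem:app-inner-gradient},
\[
    \nabla_a L_t(g_0,a)=\E[\sigma(g_0(Z_{it},X_t)+a)\mid\Mcal_t]-s_t,
    \qquad s_t=\E[Q_{it}\mid\Mcal_t].
\]
The intercept $a_{0t}=\psi_0(P_t,X_t)+h_0(X_t,\Xi_t)$ is $\Mcal_t$-measurable, because $\Mcal_t=(P_t,X_t,W_t,\Xi_t)$. We may therefore evaluate the gradient at $a=a_{0t}$ inside the conditional expectation. Doing so gives
\[
    \nabla_a L_t(g_0,a_{0t})
    =\E\!\left[\sigma(g_0(Z_{it},X_t)+a_{0t})-Q_{it}\mid\Mcal_t\right]
    =-\E[\varepsilon_{it}^0\mid\Mcal_t]=0 .
\]
The last equality is exactly the market-share compatibility condition in Assumption \ref{ass:app-profile-primitive}(v). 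Note that Proposition \ref{prop:app-population-restrictions-restated} alone gives only $\E[\varepsilon^0_{it}\mid C_{it}]=0$, and $C_{it}$ does not contain $W_t$; this is why condition (v) is needed rather than a plain tower argument.

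\textbf{Step 2 (first-order condition implies minimizer).} $L_t(g_0,\cdot)$ is convex on $\R^J$, and by the proof of Lemma \ref{lem:app-inner-separation} it is in fact strongly convex on $\Acal$. Its Hessian $\E[\Lambda(\cdot)\mid\Mcal_t]$ is positive definite. Given convexity, a point where the gradient vanishes is a global minimizer over any set containing it.

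\textbf{Step 3 (membership and uniqueness).} To conclude that the minimizer over $\Acal$ is $a_{0t}$, we need $a_{0t}\in\Acal$. I expect this to be the main obstacle, since it is not stated verbatim. The alternative route avoids assuming membership directly. By Assumption \ref{ass:app-profile-primitive}(i), $a_t^*(g_0)$ is an interior minimizer of a differentiable convex function, so $\nabla_a L_t(g_0,a_t^*(g_0))=0$. Strict convexity of $L_t(g_0,\cdot)$ on all of $\R^J$ follows from $\Lambda(u)$ being positive definite for every finite $u$, by the pairwise-variance identity in Lemma \ref{lem:app-inner-separation}. A strictly convex function has at most one stationary point. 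Hence $a_t^*(g_0)=a_{0t}$, and this holds for every $t$.
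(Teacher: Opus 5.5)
Your proof is correct and follows the paper's argument: the gradient of $L_t(g_0,\cdot)$ vanishes at $a_{0t}$ by Lemma \ref{lem:app-inner-gradient} together with the market-share compatibility condition (v), and uniqueness and interiority of the population minimizer then force $a_t^*(g_0)=a_{0t}$. Your Step 3 is a worthwhile tightening, because the paper's one-line conclusion tacitly needs $a_{0t}\in\Acal$; your observation that $L_t(g_0,\cdot)$ is strictly convex on all of $\R^J$, so it has at most one stationary point and the interior minimizer $a_t^*(g_0)$ is one, closes that gap without any membership assumption.
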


\begin{proof}
By \eqref{eq:app-market-share-compatibility},
\[
    \E[Q_{it}\mid \Mcal_t]
    =
    \E[\sigma(g_0(Z_{it},X_t)+a_{0t})\mid \Mcal_t].
\]
Therefore $\nabla_a L_t(g_0,a_{0t})=0$ by Lemma \ref{lem:app-inner-gradient}. Since the population minimizer is unique and interior, $a_t^*(g_0)=a_{0t}$.
\end{proof}

\begin{proposition}[Validity of the profiled micro moment, restated]
\label{prop:app-micro-valid-restated}
Under the maintained demand model and the population profile definition,
\[
    m_{g,K_g}(g_0)=0
    \qquad\text{for every }K_g.
\]
\end{proposition}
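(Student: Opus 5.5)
The plan is to reduce the claim to the structural residual restriction of Proposition \ref{prop:app-population-restrictions-restated}, using the true-profile identity of Proposition \ref{prop:app-true-profile} to replace the population profile at the truth by the true intercept.

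First, invoke Proposition \ref{prop:app-true-profile}, which gives $a_t^*(g_0)=a_{0t}$ for every $t$; it uses the market-share compatibility condition \eqref{eq:app-market-share-compatibility} together with uniqueness and interiority of the population profile. Substituting into \eqref{eq:profiled-residual-pop-main} gives
\[
    \rho_{it}^*(g_0)=Q_{it}-\sigma\!\left(g_0(Z_{it},X_t)+a_{0t}\right)=\varepsilon_{it}^0
\]
pointwise. Consequently $m_{g,K_g}(g_0)=\E[b^{K_g}(Z_{it},X_t)\otimes\varepsilon_{it}^0]$.

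Second, apply Proposition \ref{prop:app-population-restrictions-restated} with $\varphi=b^{K_g}$, componentwise. This requires each basis function $b_k(Z_{it},X_t)$ to be square integrable (or merely integrable, since $\varepsilon_{it}^0$ is bounded in $[-1,1]^J$). That holds because the sieve basis is bounded on the compact domain $\mathcal D_g$, as is implicit in the sieve construction. Since $b_k(Z_{it},X_t)$ is $C_{it}$-measurable and $\E[\varepsilon_{it}^0\mid C_{it}]=0$, the law of iterated expectations gives $\E[b_k(Z_{it},X_t)\varepsilon^0_{jit}]=0$ for every $k\le K_g$ and $j\le J$. Stacking these equalities in the Kronecker ordering yields $m_{g,K_g}(g_0)=0$.

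The argument never uses the value of $K_g$, so the conclusion holds for every $K_g$. The only delicate step is the profile identity. There one must check that $a_{0t}\in\Acal$ and that $g_0\in\G^0_{K_g}$, so that uniqueness of the population minimizer in Assumption \ref{ass:app-profile-primitive}(i) applies. This is guaranteed because $\G^0_{K_g}$ explicitly includes $g_0$ and the profile is assumed interior and unique there. Everything else is a direct measurability argument.
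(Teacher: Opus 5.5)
Your proposal is correct and follows the paper's proof essentially step for step: the profile identity $a_t^*(g_0)=a_{0t}$ from Proposition~\ref{prop:app-true-profile} turns $\rho^*_{it}(g_0)$ into $\varepsilon^0_{it}$, and then iterated expectations conditional on $C_{it}$, using an integrable basis, makes every component of $b^{K_g}(Z_{it},X_t)\otimes\varepsilon^0_{it}$ mean zero. The only difference is that you cite Proposition~\ref{prop:app-population-restrictions-restated} with $\varphi=b^{K_g}$ instead of redoing the measurability and conditioning argument inline, which is purely cosmetic.
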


\begin{proof}
Let
\[
    B_{it,K_g}:=b^{K_g}(Z_{it},X_t),
    \qquad
    \varepsilon_{it}^0:=Q_{it}-\sigma(g_0(Z_{it},X_t)+a_{0t}).
\]
By Proposition \ref{prop:app-true-profile}, $a_t^*(g_0)=a_{0t}$ for every market $t$. Substituting this identity into the population profiled residual gives
\[
    \rho_{it}^*(g_0)
    =
    Q_{it}-\sigma(g_0(Z_{it},X_t)+a_t^*(g_0))
    =
    Q_{it}-\sigma(g_0(Z_{it},X_t)+a_{0t})
    =
    \varepsilon_{it}^0 .
\]
Therefore
\begin{equation}
\label{eq:app-mg-truth-proof}
    m_{g,K_g}(g_0)
    =
    \E[B_{it,K_g}\otimes \varepsilon_{it}^0].
\end{equation}
The vector $B_{it,K_g}$ is measurable with respect to the sigma-algebra generated by $C_{it}$ because it is a function only of $(Z_{it},X_t)$. Moreover, $a_{0t}=\psi_0(P_t,X_t)+h_{0t}$ is also measurable with respect to this sigma-algebra, since $h_{0t}=h_0(X_t,\Xi_t)$ and $C_{it}=(Z_{it},P_t,X_t,\Xi_t)$. By the maintained demand model,
\[
    \E[Q_{it}\mid C_{it}]
    =
    \sigma(g_0(Z_{it},X_t)+a_{0t}),
\]
so
\[
    \E[\varepsilon_{it}^0\mid C_{it}]=0.
\]
Assuming the basis functions are square-integrable\footnote{If the basis functions are polynomials, splines, Fourier terms, etc. on compact supports, then each basis function is bounded for each fixed $K_g$, hence square-integrable automatically.}, each component of
$B_{it,K_g}\otimes \varepsilon_{it}^0$ is integrable. Applying iterated expectations componentwise to \eqref{eq:app-mg-truth-proof} yields
\[
    m_{g,K_g}(g_0)
    =
    \E\!\left[
    B_{it,K_g}\otimes\E[\varepsilon_{it}^0\mid C_{it}]
    \right]
    =
    0.
\]
Thus the profiled micro moment is centered at the truth for each $K_g$.
\end{proof}

\begin{proposition}[Lipschitz continuity of the population profile]
\label{prop:app-profile-lipschitz}
Under Assumption \ref{ass:app-profile-primitive}, there exists
$C_a<\infty$ such that
\[
    \max_{t\le T}\norm{a_t^*(g)-a_t^*(\tilde g)}
    \le C_a\norm{g-\tilde g}_{\infty}
\]
for all $g,\tilde g\in\G_{K_g}^0$.
\end{proposition}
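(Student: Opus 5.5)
The plan is to combine the first-order conditions with strong convexity, using Lemma \ref{lem:app-inner-gradient} and Lemma \ref{lem:app-inner-separation}. Fix $t\le T$ and $g,\tilde g\in\G_{K_g}^0$, and write $a^*=a_t^*(g)$ and $\tilde a^*=a_t^*(\tilde g)$. By Assumption \ref{ass:app-profile-primitive}(i), both are interior minimizers of strictly convex functions on the convex set $\Acal$. Hence $\nabla_a L_t(g,a^*)=0$ and $\nabla_a L_t(\tilde g,\tilde a^*)=0$.

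The key inequality is strong monotonicity of the gradient map $a\mapsto\nabla_a L_t(g,a)$. From the proof of Lemma \ref{lem:app-inner-separation}, the Hessian $\nabla_{aa}L_t(g,a)=\E[\Lambda(g(Z_{it},X_t)+a)\mid\Mcal_t]$ has smallest eigenvalue at least $c_\Lambda=\kappa^2$, uniformly over $g\in\G_{K_g}^0$, $a\in\Acal$ and $t$. This relies on the bounded-index condition (ii), which is why the statement is restricted to $\G_{K_g}^0$. Since $\Acal$ is convex, the segment between $a^*$ and $\tilde a^*$ stays in $\Acal$. Integrating the Hessian along that segment therefore gives
\[
    \bigl(\nabla_a L_t(g,\tilde a^*)-\nabla_a L_t(g,a^*)\bigr)'(\tilde a^*-a^*)
    \ge c_\Lambda\norm{\tilde a^*-a^*}^2 .
\]

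Next I substitute the two first-order conditions. Because $\nabla_a L_t(g,a^*)=0$ and $\nabla_a L_t(\tilde g,\tilde a^*)=0$, the left side equals $\bigl(\nabla_a L_t(g,\tilde a^*)-\nabla_a L_t(\tilde g,\tilde a^*)\bigr)'(\tilde a^*-a^*)$. Cauchy--Schwarz and condition (iv), evaluated at the point $a=\tilde a^*\in\Acal$, bound this by $C\norm{g-\tilde g}_\infty\norm{\tilde a^*-a^*}$. Dividing by $\norm{\tilde a^*-a^*}$ (the case where it is zero is trivial) gives $\norm{a_t^*(g)-a_t^*(\tilde g)}\le (C/c_\Lambda)\norm{g-\tilde g}_\infty$. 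Neither constant depends on $t$, so taking the maximum over $t\le T$ yields the claim with $C_a:=C/c_\Lambda$.

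The main obstacle is making sure the curvature bound and condition (iv) hold at the points where they are used. Condition (iv) must be applied at $\tilde a^*$, which lies in $\Acal$ because it is a minimizer over $\Acal$. The bounded-index condition (ii) must hold for $g$ paired with every point on the segment between the two profiles, which is guaranteed because that segment lies in $\Acal$ by convexity. If one prefers not to rely on (iv) as a primitive, it can be derived directly. The gradient difference is $\E[\sigma(g+a)-\sigma(\tilde g+a)\mid\Mcal_t]$, and the logit Jacobian $\Lambda$ has operator norm at most $1$ (its entries are bounded by probabilities), so the mean value theorem gives the bound with $C=\sqrt J$ in the Euclidean norm.
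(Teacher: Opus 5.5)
Your proposal is correct and follows the paper's proof essentially step for step. Both arguments use the zero gradients at the two interior profiles and strong monotonicity of $a\mapsto\nabla_a L_t(g,a)$ from the curvature bound $c_\Lambda$. Both then substitute the first-order conditions, apply Cauchy--Schwarz with condition (iv) at $\tilde a^*$, and obtain $C_a=C/c_\Lambda$.

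One small fix to your optional derivation of (iv): the bound $\norm{\Lambda(u)}\le 1$ does not follow from entrywise bounds alone. It follows from $0\preceq\Lambda(u)\preceq\diag(\sigma(u))$.
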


\begin{proof}
Fix $g,\tilde g\in\G_{K_g}^0$ and $t\le T$, and write
\[
    a:=a_t^*(g),
    \qquad
    \tilde a:=a_t^*(\tilde g).
\]
By Assumption \ref{ass:app-profile-primitive}(i), both profile values
are interior minimizers. Hence
\[
    \nabla_aL_t(g,a)=0,
    \qquad
    \nabla_aL_t(\tilde g,\tilde a)=0.
\]
Adding and subtracting $\nabla_aL_t(g,\tilde a)$ gives
\[
    \nabla_aL_t(g,a)-\nabla_aL_t(g,\tilde a)
    =
    \nabla_aL_t(\tilde g,\tilde a)-\nabla_aL_t(g,\tilde a).
\]
Take the inner product of both sides with $a-\tilde a$. By the uniform
strong convexity established in Lemma \ref{lem:app-inner-separation},
\[
    (a-\tilde a)'
    \{\nabla_aL_t(g,a)-\nabla_aL_t(g,\tilde a)\}
    \ge
    c_\Lambda\norm{a-\tilde a}^2.
\]
For the right-hand side, Cauchy--Schwarz and Assumption
\ref{ass:app-profile-primitive}(iv) imply
\[
\begin{aligned}
    \left|
    (a-\tilde a)'
    \{\nabla_aL_t(\tilde g,\tilde a)-\nabla_aL_t(g,\tilde a)\}
    \right|
    &\le
    \norm{a-\tilde a}
    \norm{\nabla_aL_t(\tilde g,\tilde a)-\nabla_aL_t(g,\tilde a)} \\
    &\le
    C\norm{a-\tilde a}\norm{g-\tilde g}_\infty .
\end{aligned}
\]
Combining the preceding two displays yields
\[
    c_\Lambda\norm{a-\tilde a}^2
    \le
    C\norm{a-\tilde a}\norm{g-\tilde g}_\infty .
\]
If $a=\tilde a$, the desired inequality is immediate. Otherwise, divide
by $\norm{a-\tilde a}$ to obtain
\[
    \norm{a-\tilde a}
    \le
    \frac{C}{c_\Lambda}\norm{g-\tilde g}_\infty.
\]
The constants do not depend on $t$, $g$, or $\tilde g$. Taking the maximum
over $t\le T$ proves the result with $C_a:=C/c_\Lambda$.
\end{proof}

% =====================================================================
\section{Sieve, SMD, and Uniform Convergence Details}
\label{app:consistency-details}
\label{app:SMD-details}

This appendix gives primitive sufficient conditions for Assumptions \ref{ass:sieve-approx-main}, \ref{ass:population-id-main}, and \ref{ass:criterion-UC-main}. These are standard in sieve extremum and Ai--Chen SMD arguments, except for two features: the micro block uses the profiled intercept, and the price-side residual contains the profiled object $a_t^*(g)$. The notation below keeps track of the moving criterion induced by the growing micro basis.

\subsection{Function spaces and compactness}
\label{app:function-spaces}

This subsection makes explicit the function-space notation introduced in \eqref{eq:theta0-main}--\eqref{eq:function-spaces-main}. Let
\[
    \mathcal D_g:=\supp(Z_{it},X_t),
    \qquad
    \mathcal D_\psi:=\supp(P_t,X_t),
\]
and suppose these supports are compact. Compact support is not essential, but it keeps the consistency argument focused on the main issue: combining within-market profiling with a price-side conditional moment.

For $s>0$, write $s=m+\alpha$, where $m:=\lfloor s\rfloor$ when $s$ is not an integer and $\alpha:=s-m\in(0,1]$; when $s$ is an integer, take $m=s-1$ and $\alpha=1$. For a compact set $D\subset\R^d$ and a scalar function $f:D\to\R$, use multi-index notation $\ell=(\ell_1,\dots,\ell_d)$, $|\ell|=\sum_{r=1}^d \ell_r$, and $D^\ell f$ for the corresponding partial derivative. The $C^s$ H\"older norm is
\[
    \norm{f}_{C^s(D)}
    :=
    \max_{|\ell|\le m}\sup_{u\in D}|D^\ell f(u)|
    +
    \max_{|\ell|=m}\sup_{u\ne v,\ u,v\in D}
    \frac{|D^\ell f(u)-D^\ell f(v)|}{\norm{u-v}^\alpha}.
\]
The corresponding bounded H\"older ball of radius $B<\infty$ is
\[
    C^s_B(D):=\{f:D\to\R:\norm{f}_{C^s(D)}\le B\}.
\]
Thus a H\"older ball is a class of functions whose derivatives up to order $m$ are uniformly bounded and whose order-$m$ derivatives are uniformly H\"older-continuous with exponent $\alpha$. For a vector-valued function $f=(f_1,\dots,f_J)$, write $f\in C^s_B(D)^J$ when each component belongs to $C^s_B(D)$. For vector-valued functions, the H\"older restriction is imposed
componentwise. Unless otherwise stated, $\norm{\cdot}$ denotes the
Euclidean norm on finite-dimensional vectors, while
$\norm{\cdot}_\infty$ denotes the function sup norm defined above.

A primitive version of the full parameter space is therefore
\[
    \G
    =
    \left\{g:\mathcal D_g\to\R^J:
    g_j\in C^{s_g}_{B_g}(\mathcal D_g),\ j\le J, \ 
    \mathcal N_g(g)(x)=0\ \text{for all }x\in\X
    \right\},
\]
where the normalization operator is either
\[
    \mathcal N_g(g)(x):=\E[g(Z_{it},x)\mid X_t=x] = \int g(z,x)dF_{Z|X=x}(z)
    \qquad\text{or}\qquad
    \mathcal N_g(g)(x):=g(z^0(x),x),
\]
and
\[
    \bfPsi
    =
    \left\{\psi:\mathcal D_\psi\to\R^J:
    \psi_j\in C^{s_\psi}_{B_\psi}(\mathcal D_\psi),\ j\le J
    \right\}.
\]
The location of $\psi$ relative to $h_t$ is not fixed by a separate centering restriction on $\psi$; it is fixed by the shock normalization $\E[h_{0t}\mid X_t]=0$ together with $a_{0t}=\psi_0(P_t,X_t)+h_{0t}$. If one tried to replace $\psi_0$ by $\psi_0+c(X_t)$, the implied shock would be $h_{0t}-c(X_t)$, whose conditional mean given $X_t$ is $-c(X_t)$. The shock normalization therefore forces $c(X_t)=0$.

The following compactness argument is standard for compact infinite-dimensional parameter spaces; see \cite{freyberger_practical_2019}. 
Bounded H\"older balls on compact domains are uniformly bounded and equicontinuous, and hence are relatively compact in the sup norm by the Arzel\`a--Ascoli theorem. The normalization restrictions above are closed under uniform convergence: if $g_n\to g$ uniformly and each $g_n$ satisfies the normalization, then $g$ satisfies the same normalization. For the baseline normalization this is immediate from \[ |g(z^0(x),x)| = |g(z^0(x),x) - g_n(z^0(x),x)| \le \|g-g_n\|_\infty \to 0. \] For the conditional-mean normalization, \[ \left\| \E[g(Z_{it},x)\mid X_t=x] \right\|  \le \E[ \| g(Z_{it}, x) - g_n(Z_{it},x) \| \mid X_t=x]\le \|g_n-g\|_\infty \to 0, \] so the normalization passes to the uniform limit. Therefore the normalized parameter spaces are closed subsets of relatively compact H\"older balls, and hence are compact under the sup norm. The product space $\ThetaSpace=\G\times\bfPsi$ is compact under the product norm in \eqref{eq:theta-norm-main}.

\begin{assumption}[Primitive compact function classes]
\label{ass:app-compact-primitive}
The supports $\mathcal D_g=\supp(Z_{it},X_t)$ and $\mathcal D_\psi=\supp(P_t,X_t)$ are compact. There exist smoothness levels $s_g,s_\psi>0$ and finite constants $B_g,B_\psi<\infty$ such that $\G$ and $\bfPsi$ are closed, normalized subsets of bounded H\"older balls:
\[
    \G\subseteq\{g:g_j\in C^{s_g}_{B_g}(\mathcal D_g),\ j\le J\},
    \qquad
    \bfPsi\subseteq\{\psi:\psi_j\in C^{s_\psi}_{B_\psi}(\mathcal D_\psi),\ j\le J\}.
\]
The normalizations are imposed as closed restrictions under uniform convergence.
\end{assumption}

\begin{assumption}[Primitive sieve approximation and basis growth]
\label{ass:app-sieve-approx-primitive}
The bases $b^{K_g}$ and $r^{K_\psi}$ are rich enough\footnote{Specifically, $\bigcup_{K_g}\G_{K_g}$ is dense in $\G$ and $\bigcup_{K_\psi}\bm{\Psi}_{K_\psi}$ is dense in $\bm{\Psi}$ under the sup norm. Since $(g_0,\psi_0)\in\ThetaSpace$, these density conditions imply the existence of the stated approximants.} that there exist
$\theta_K^0=(g_{0,K_g},\psi_{0,K_\psi})\in\ThetaSpace_K$ satisfying
\[
    e_{g,K_g}+e_{\psi,K_\psi}\to0 \quad \text{as} \quad (K_g, K_\psi) \to \infty,
\]
where we define
\[
    e_{g,K_g}:=\norm{g_{0,K_g}-g_0}_\infty,
    \qquad
    e_{\psi,K_\psi}:=\norm{\psi_{0,K_\psi}-\psi_0}_\infty .
\]
Furthermore, the micro weighting matrices are symmetric and positive semidefinite, and the
micro basis and approximation error satisfy
\[
    \lambda_{\max}\{W_{g,K_g}\}^{1/2}
    \left\{
    \E \Big[\norm{b^{K_g}(Z_{it},X_t)}^2 \Big]
    \right\}^{1/2}
    e_{g,K_g}
    \to0 .
\]
Finally, the population price-side weighting matrix is uniformly nonsingular: there exists a constant $C_\Sigma>0$ such that
\[
    \inf_{r\in\supp(R_t)}
    \lambda_{\min}\! \big\{\Sigma(r)\big\}
    \ge C_\Sigma.
\]
\end{assumption}
The additional rate condition is needed because the micro population criterion
moves with $K_g$. As the dimension of the micro moment increases, a small
sup-norm approximation error in $g_{0,K_g}$ can be amplified by the size of
$b^{K_g}(Z_{it},X_t)$ and by the weighting matrix $W_{g,K_g}$. In the common
normalization where
$\E\norm{b^{K_g}(Z_{it},X_t)}^2=O(K_g)$ and  $\lambda_{\max}(W_{g,K_g})=O(1),$
the displayed rate condition reduces to
\[
    \sqrt{K_g}\norm{g_{0,K_g}-g_0}_\infty\to0.
\]
For example, if $d_g$ denotes the dimension of $\mathcal D_g$ and a standard
tensor-product spline or wavelet sieve delivers
\[
    \norm{g_{0,K_g}-g_0}_\infty
    =
    O(K_g^{-s_g/d_g}),
\]
then $s_g>d_g/2$ is sufficient. No analogous dimension-dependent rate is
needed for the price-side population criterion as it is defined here:
$\mathcal Q_\psi$ is the full conditional-mean criterion, rather than a
finite-$K_q$ projection. Thus $K_q$ affects estimation of the sample
conditional mean but does not enter the population approximation argument.
These are standard sieve-approximation restrictions strengthened only to
account for the growing micro moment; see \cite{newey_chapter_1994},
\cite{newey_convergence_1997}, \cite{ai_efficient_2003}, and
\cite{chen_large_2007}.

\begin{proposition}[Primitive conditions imply compactness, approximation, and criterion preservation]
\label{prop:app-sieve-compact-approx}
Suppose the profile-Lipschitz conclusion of Proposition
\ref{prop:app-profile-lipschitz} holds. Then Assumptions
\ref{ass:app-compact-primitive} and
\ref{ass:app-sieve-approx-primitive} imply Assumption
\ref{ass:sieve-approx-main}.
\end{proposition}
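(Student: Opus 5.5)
Assumption \ref{ass:sieve-approx-main} has three parts, and I will check them in order: compactness of $\ThetaSpace$, existence of sup-norm approximants, and criterion preservation $\mathcal Q_K(\theta_K^0)\to0$.

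\emph{Compactness.} This is the Arzel\`a--Ascoli argument already given in Appendix \ref{app:function-spaces}. Under Assumption \ref{ass:app-compact-primitive}, bounded H\"older balls on the compact supports are uniformly bounded and equicontinuous, so they are relatively compact in the sup norm. The normalizations are closed under uniform limits, so $\G$ and $\bfPsi$ are compact. Compactness of the product under the sum norm \eqref{eq:theta-norm-main} then follows.

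\emph{Approximation.} Membership $\theta_0\in\ThetaSpace$ is maintained. The approximants $\theta_K^0$ with $e_{g,K_g}+e_{\psi,K_\psi}\to0$ are supplied directly by Assumption \ref{ass:app-sieve-approx-primitive}, which is exactly \eqref{eq:sieve-approx-main}.

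\emph{Criterion preservation.} Since $\mathcal Q_K(\theta_0)=0$, it suffices to bound each block at $\theta_K^0$.

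For the micro block, Proposition \ref{prop:app-micro-valid-restated} gives $m_{g,K_g}(g_0)=0$. Hence
\[
m_{g,K_g}(g_{0,K_g})=\E\!\left[b^{K_g}\otimes\{\sigma(g_0+a_t^*(g_0))-\sigma(g_{0,K_g}+a_t^*(g_{0,K_g}))\}\right].
\]
The logit map is Lipschitz with constant $L_\sigma$, since its Jacobian $\Lambda$ is bounded. The profile is Lipschitz by Proposition \ref{prop:app-profile-lipschitz}, with constant $C_a$. Together these bound the bracketed difference pointwise by $L_\sigma(1+C_a)e_{g,K_g}$. Cauchy--Schwarz then gives
\[
\norm{m_{g,K_g}(g_{0,K_g})}\le L_\sigma(1+C_a)\sqrt J\,\{\E\norm{b^{K_g}}^2\}^{1/2}e_{g,K_g}.
\]
Consequently $\mathcal Q_{g,K_g}(g_{0,K_g})\le\lambda_{\max}(W_{g,K_g})\norm{m_{g,K_g}}^2$, and this tends to zero by the rate condition in Assumption \ref{ass:app-sieve-approx-primitive}.

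For the price-side block, $m_\psi(R_t;g_0,\psi_0)=0$ by Proposition \ref{prop:price-smd-valid-main}, together with the profile identity of Proposition \ref{prop:app-true-profile}. So $m_\psi(R_t;\theta_K^0)$ is the conditional mean of
\[
\{a_t^*(g_{0,K_g})-a_t^*(g_0)\}-\{\psi_{0,K_\psi}-\psi_0\}(P_t,X_t).
\]
By conditional Jensen, its norm is at most $C_ae_{g,K_g}+\sqrt J e_{\psi,K_\psi}$ almost surely. The uniform bound $\lambda_{\min}\{\Sigma(r)\}\ge C_\Sigma$ gives $\Sigma^{-1}\preceq C_\Sigma^{-1}I$, so
\[
\mathcal Q_\psi(\theta_K^0)\le C_\Sigma^{-1}\bigl(C_ae_{g,K_g}+\sqrt Je_{\psi,K_\psi}\bigr)^2\to0.
\]

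The main obstacle is the moving micro criterion. The number of moments grows with $K_g$, so sup-norm convergence alone does not suffice, and the argument relies on the envelope-weighted rate condition in Assumption \ref{ass:app-sieve-approx-primitive}. A related subtlety is that the profile map must be Lipschitz on $\G_{K_g}^0$, because the comparison evaluates $a_t^*$ at both $g_0$ and the sieve approximant; this is exactly why $g_0$ is included in $\G_{K_g}^0$. I would also note that the population profile $a_t^*$ depends on $\Mcal_t$ rather than on $t$ itself, so the constant $C_a$ is uniform in $t$, as Proposition \ref{prop:app-profile-lipschitz} states.
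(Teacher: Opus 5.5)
Your proposal is correct and follows the paper's proof essentially step for step: Arzel\`a--Ascoli plus closedness of the normalizations for compactness, approximants taken directly from Assumption \ref{ass:app-sieve-approx-primitive}, and criterion preservation via $m_{g,K_g}(g_0)=0$. The micro block then uses the Lipschitz bounds on $\sigma$ and $a_t^*$ over $\G_{K_g}^0$ together with Cauchy--Schwarz and the envelope-weighted rate condition, and the price block uses conditional Jensen with $\lambda_{\min}\{\Sigma(r)\}\ge C_\Sigma$; your extra $\sqrt J$ factors are harmless bookkeeping for passing between the max-over-products sup norm and Euclidean norms, which the paper leaves implicit.
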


\begin{proof}
By the Arzel\`a--Ascoli theorem, bounded H\"older balls on compact
domains are relatively compact in the sup norm. Assumption
\ref{ass:app-compact-primitive} imposes $\G$ and $\bfPsi$ as closed
subsets of these balls, with normalizations that are closed under uniform
convergence. Hence $\G$ and $\bfPsi$ are compact in the sup norm, and
$\ThetaSpace=\G\times\bfPsi$ is compact under the product norm. Assumption
\ref{ass:app-sieve-approx-primitive} gives an approximating sequence
$\theta_K^0\in\ThetaSpace_K$ such that
\[
    \norm{\theta_K^0-\theta_0}_{\ThetaSpace}
    =
    e_{g,K_g}+e_{\psi,K_\psi}
    \to0.
\]
It remains to establish criterion preservation. Define $B_{it,K_g}:=b^{K_g}(Z_{it},X_t)$.
By Proposition \ref{prop:app-profile-lipschitz}, there exists $C_a<\infty$
such that
\[
    \norm{a_t^*(g_{0,K_g})-a_t^*(g_0)}
    \le
    C_a \cdot e_{g,K_g}
\]
uniformly in $t$. Let $L_\sigma<\infty$ be a Lipschitz constant for the
multinomial-logit probability map. The definition of the profiled residual
then gives
\[
\begin{aligned}
    \norm{
    \rho_{it}^*(g_{0,K_g})-\rho_{it}^*(g_0)
    }
    &\le L_\sigma  \left\{
    \norm{
    g_{0,K_g}(Z_{it},X_t)-g_0(Z_{it},X_t)
    +
    a_t^*(g_{0,K_g})-a_t^*(g_0)
    }
    \right\}
    \\
    &\le
    L_\sigma
    \left\{
    \norm{
    g_{0,K_g}(Z_{it},X_t)-g_0(Z_{it},X_t)
    }
    +
    \norm{
    a_t^*(g_{0,K_g})-a_t^*(g_0)
    }
    \right\} \\
    &\le
    L_\sigma(1+C_a)e_{g,K_g}.
\end{aligned}
\]
Since $m_{g,K_g}(g_0)=0$,
\[
\begin{aligned}
    \norm{m_{g,K_g}(g_{0,K_g})} = \norm{m_{g,K_g}(g_{0,K_g}) - m_{g,K_g}(g_0)}
    &=
    \norm{
    \E\!\left[
    B_{it,K_g}\otimes
    \{\rho_{it}^*(g_{0,K_g})-\rho_{it}^*(g_0)\}
    \right]
    } \\
    &\le \E\big[\norm{B_{it, K_g}} \cdot \norm{\rho^*_{it}(g_{0,K_g}) - \rho_{it}^*(g_0)}\big]
    \\
    &\le
    L_\sigma(1+C_a)
    \E\big[\norm{B_{it,K_g}}\big]
    e_{g,K_g} \\
    &\le
    L_\sigma(1+C_a)
    \left\{
    \E \Big[\norm{B_{it,K_g}}^2 \Big]
    \right\}^{1/2}
    e_{g,K_g},
\end{aligned}
\]
where the final inequality follows from Cauchy--Schwarz. Therefore\footnote{For any symmetric positive semidefinite matrix $W$ and vector $x$, $x'Wx \le \lambda_{\max}\{W\} \norm{x}^2$.},
\[
\begin{aligned}
    \mathcal Q_{g,K_g}(g_{0,K_g})
    &=
    m_{g,K_g}(g_{0,K_g})'
    W_{g,K_g}
    m_{g,K_g}(g_{0,K_g}) \\
    &\le
    \lambda_{\max}\{W_{g,K_g}\}
    \norm{m_{g,K_g}(g_{0,K_g})}^2 \\
    &\le
    L_\sigma^2(1+C_a)^2
    \lambda_{\max}\{W_{g,K_g}\}
    \E \Big[\norm{B_{it,K_g}}^2 \Big]
    e_{g,K_g}^2
    \to0
\end{aligned}
\]
by Assumption \ref{ass:app-sieve-approx-primitive}.

For the price-side component, define
\[
    \delta_{t,K}
    :=
    a_t^*(g_{0,K_g})-a_t^*(g_0)
    -
    \{\psi_{0,K_\psi}(P_t,X_t)-\psi_0(P_t,X_t)\}.
\]
Because $m_\psi(R_t;\theta_0)=0$ almost surely,
\[
    m_\psi(R_t;\theta_K^0)
    =
    \E[\delta_{t,K}\mid R_t].
\]
The profile-Lipschitz bound and sup-norm approximation imply
\[
    \norm{\delta_{t,K}}
    \le
    C_a e_{g,K_g}+e_{\psi,K_\psi}
\]
almost surely. Conditional Jensen's inequality therefore gives\footnote{For each realization of $R_t$, $\norm{\E[\delta_{t,K} \mid R_t]}^2 \le \E[\norm{\delta_{t,K}}^2 \mid R_t]$. Then take unconditional expectation of both sides and apply law of iterated expectations.}
\[
\begin{aligned}
    \E\Big[\norm{m_\psi(R_t;\theta_K^0)}^2\Big]
    &=
    \E\Big[\norm{\E[\delta_{t,K}\mid R_t]}^2\Big] \le
    \E \big[\norm{\delta_{t,K}}^2 \big] \le
    \left(
    C_a e_{g,K_g}+e_{\psi,K_\psi}
    \right)^2.
\end{aligned}
\]
It follows that\footnote{The uniform eigenvalue bound in Assumption \ref{ass:app-sieve-approx-primitive} implies
\(
    \lambda_{\max}\!\left(\Sigma(r)^{-1}\right)
    =
    \frac{1}{\lambda_{\min}(\Sigma(r))}
    \le
    \frac{1}{C_\Sigma}
\)
uniformly over $r\in\supp(R_t)$.}
\[
\begin{aligned}
    \mathcal Q_\psi(\theta_K^0)
    &=
    \E\!\left[
        m_\psi(R_t;\theta_K^0)'
        \Sigma(R_t)^{-1}
        m_\psi(R_t;\theta_K^0)
    \right] \\
    &\le
    \frac{1}{C_\Sigma}
    \E\Big[\norm{m_\psi(R_t;\theta_K^0)}^2\Big] \\
    &\le
    \frac{1}{C_\Sigma}
    \left(
        C_a e_{g,K_g}+e_{\psi,K_\psi}
    \right)^2
    \to0.
\end{aligned}
\]
Combining the two blocks yields
\[
    \mathcal Q_K(\theta_K^0)
    =
    \mathcal Q_{g,K_g}(g_{0,K_g})
    +
    \mathcal Q_\psi(\theta_K^0)
    \to0.
\]
Thus, the approximating sequence is criterion-preserving, and Assumption
\ref{ass:sieve-approx-main} follows.
\end{proof}

\begin{assumption}[Primitive conditions for moving population separation]
\label{ass:app-population-id-primitive}
The following conditions hold:
\begin{enumerate}[label=(\roman*),leftmargin=2em]
    \item The micro weighting matrices are uniformly well conditioned:
there exist constants $0<c_g\le C_g<\infty$ such that
\[
    \inf_{K_g} \ 
    \lambda_{\min}\!\big\{W_{g,K_g}\big\}
    \ge c_g,
    \qquad
    \sup_{K_g} \ 
    \lambda_{\max}\!\big\{W_{g,K_g}\big\}
    \le C_g.
\]

    \item The population conditions in Assumption
    \ref{ass:app-profile-primitive}(i), (ii), and (iv) hold with $\G$ in
    place of $\G_{K_g}^0$. Thus $a_t^*(g)$ is unique and interior for every
    $g\in\G$ and every $t\le T$, the index is uniformly bounded over
    $\G\times\Acal$, and the gradient bound in Assumption
    \ref{ass:app-profile-primitive}(iv) holds for every
    $g,\tilde g\in\G$, with constants independent of $t$.
\end{enumerate}
\end{assumption}

In Section \ref{sec:Estimator}, the micro basis is defined as
$b^{K_g}=(b_1,\ldots,b_{K_g})'$. Hence, for every fixed $k$, the moment
generated by $b_k$ is a component of the micro moment vector whenever
$K_g\ge k$. Assumption \ref{ass:app-sieve-approx-primitive} supplies the
required basis integrability and density conditions, as well as uniform
nonsingularity of the population price-side weighting matrix. Condition (i)
serves two purposes: the lower bound prevents
the moving micro criterion from assigning asymptotically vanishing weight to
informative moment directions and is used in the population-separation
argument. The upper bound prevents the weighting matrices from magnifying
sampling or profiling errors and is used in the subsequent uniform-convergence
arguments.

Condition (ii) extends only the \emph{population} profile conditions from the
sieves to their compact function space $\G$; it does not enlarge the
estimator's search space, which remains $\G_{K_g}$. This extension is needed
because a uniformly convergent sequence $g_\ell\in\G_{K_{g,\ell}}$ can have
a limit $g^*\in\G$ that belongs to no fixed sieve. The separation proof must
therefore evaluate $a_t^*(g^*)$ and pass the profiled moments to this limit.
The extension is natural under the maintained function-space restrictions.
Because $\G$ is contained in a bounded H\"older ball and $\Acal$ is compact,
the bounded-index and gradient-Lipschitz conditions hold uniformly on
$\G\times\Acal$ for the logit map. Uniform strong convexity then gives
uniqueness. The remaining interiority requirement amounts to choosing
$\Acal$ large enough to contain the population share-matching intercepts for
all $g\in\G$; this is ensured, for example, when the population inside- and
outside-good shares are uniformly bounded away from zero. Thus condition
(ii) is a standard global well-posedness restriction on the population
profile, not an additional restriction on the finite-dimensional estimator.

\begin{proposition}[Primitive conditions imply moving population separation]
\label{prop:app-population-id}
Suppose the maintained demand model and the identification conditions in
Section \ref{sec:Identification} hold. Then Assumptions
\ref{ass:app-compact-primitive},
\ref{ass:app-sieve-approx-primitive}, and
\ref{ass:app-population-id-primitive} imply Assumption
\ref{ass:population-id-main}.
\end{proposition}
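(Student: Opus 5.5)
I would argue by contradiction combined with compactness. Suppose the separation in Assumption \ref{ass:population-id-main} fails for some $\epsilon>0$. Then there is a subsequence of sieve dimensions $K_\ell\to\infty$ and parameters $\theta_\ell=(g_\ell,\psi_\ell)\in\ThetaSpace_{K_\ell}$ with $\norm{\theta_\ell-\theta_0}_{\ThetaSpace}\ge\epsilon$ and $\mathcal Q_{K_\ell}(\theta_\ell)\to0$. Because $\ThetaSpace_K\subseteq\ThetaSpace$ and $\ThetaSpace$ is compact under the product sup norm (Assumption \ref{ass:app-compact-primitive}, as in the proof of Proposition \ref{prop:app-sieve-compact-approx}), I would pass to a further subsequence with $\theta_\ell\to\theta^*=(g^*,\psi^*)\in\ThetaSpace$ uniformly. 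The limit then satisfies $\norm{\theta^*-\theta_0}_{\ThetaSpace}\ge\epsilon$. Both criterion blocks are nonnegative, so each tends to zero separately. The goal is to show that $\theta^*$ zeroes the population restrictions, and then to invoke identification to obtain $\theta^*=\theta_0$, a contradiction.

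\textbf{Micro block.} By Assumption \ref{ass:app-population-id-primitive}(ii), the argument of Proposition \ref{prop:app-profile-lipschitz} extends to all of $\G$. This gives $\max_t\norm{a_t^*(g_\ell)-a_t^*(g^*)}\le C_a\norm{g_\ell-g^*}_\infty\to0$, and hence $\sup\norm{\rho^*_{it}(g_\ell)-\rho^*_{it}(g^*)}\le L_\sigma(1+C_a)\norm{g_\ell-g^*}_\infty$. Fix a basis index $k$. For $K_{g,\ell}\ge k$, the moment $\E[b_k\otimes\rho^*_{it}(g_\ell)]$ is a subvector of $m_{g,K_{g,\ell}}(g_\ell)$. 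Assumption \ref{ass:app-population-id-primitive}(i) gives $\norm{m_{g,K_{g,\ell}}(g_\ell)}^2\le c_g^{-1}\mathcal Q_{g,K_{g,\ell}}(g_\ell)\to0$. Passing to the limit via the Lipschitz bound and square-integrability of $b_k$ yields $\E[b_k(Z_{it},X_t)\otimes\rho^*_{it}(g^*)]=0$ for every $k$. The density of the basis in Assumption \ref{ass:app-sieve-approx-primitive} then extends this to $\E[\varphi(Z_{it},X_t)\otimes\rho^*_{it}(g^*)]=0$ for all bounded continuous $\varphi$, so $\E[\rho^*_{it}(g^*)\mid Z_{it},X_t]=0$.

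\textbf{From conditional mean zero to $g^*=g_0$.} This is the step I expect to be the main obstacle. The conditioning is on $(Z_{it},X_t)$ only, not on the market, so the within-market log-odds argument of Proposition \ref{prop:app-within-market-id-restated} does not apply verbatim. My plan has three steps.
\begin{enumerate}[label=(\alph*),leftmargin=2em]
\item Write $\E[\sigma(g^*(Z,X)+a_t^*(g^*))-\sigma(g_0(Z,X)+a_{0t})\mid Z,X]=0$.
\item Use the mean-value form of the logit map, with Jacobian $\Lambda$ uniformly positive definite by Lemma \ref{lem:app-inner-separation}, to express this as an averaged linear equation in $\Delta_t(z,x):=g^*(z,x)-g_0(z,x)+a^*_t(g^*)-a_{0t}$.
\item Combine this with the market-level share-matching identity $\E[\rho^*_{it}(g^*)\mid\Mcal_t]=0$ and the common support of $Z$ given $X$ (Assumption \ref{ass:app-BH-basic}) to argue that $\Delta_t$ cannot depend on $z$, so $g^*-g_0$ is an $X$-only function. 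Lemma \ref{lem:app-normalization-equivalence} then forces $g^*=g_0$.
\end{enumerate}
If this direct route proves too weak, I would fall back on strengthening the conditioning by noting that the structural residual has mean zero given $C_{it}$, and on using a market-level profile version of identification. With $g^*=g_0$ established, Proposition \ref{prop:app-true-profile} gives $a_t^*(g^*)=a_{0t}$.

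\textbf{Price block and conclusion.} The price criterion $\mathcal Q_\psi$ is a fixed functional, continuous in $\theta$ under the sup norm. This follows from the Lipschitz profile, conditional Jensen, and $\lambda_{\min}(\Sigma)\ge C_\Sigma$, by the same computation as in Proposition \ref{prop:app-sieve-compact-approx}. Hence $\mathcal Q_\psi(\theta^*)=\lim\mathcal Q_\psi(\theta_\ell)=0$, which gives $\E[a_{0t}-\psi^*(P_t,X_t)\mid W_t,X_t]=0$ almost surely. Subtracting the corresponding moment for $\psi_0$ (Proposition \ref{prop:price-smd-valid-main}) and applying completeness in Assumption \ref{ass:price-id-main}(ii) gives $\psi^*=\psi_0$ almost surely on $\mathcal D_\psi$. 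Continuity of the H\"older functions upgrades this to equality everywhere on the support. Thus $\theta^*=\theta_0$, which contradicts $\norm{\theta^*-\theta_0}_{\ThetaSpace}\ge\epsilon$ and establishes \eqref{eq:moving-separation-main}.
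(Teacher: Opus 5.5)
\textbf{Gap in the step from the micro limit to $g^*=g_0$.} Your skeleton is the paper's: contradiction, compactness of $\ThetaSpace$, passing each fixed-$k$ micro moment and $\mathcal Q_\psi$ to the limit through the extended profile-Lipschitz bound, then completeness. But the step you flag as the main obstacle is a genuine gap, and the route in (a)--(c) does not close it.

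Write $u^0:=g_0(Z_{it},X_t)+a_{0t}$, $u^*:=g^*(Z_{it},X_t)+a_t^*(g^*)$, $\delta_g:=g^*-g_0$ and $\delta_a:=a_t^*(g^*)-a_{0t}$. After the mean-value step, conditioning on $(Z_{it},X_t)$ gives $\delta_g=-\E[\bar\Lambda\mid Z_{it},X_t]^{-1}\E[\bar\Lambda\,\delta_a\mid Z_{it},X_t]$. Here the Jacobian $\bar\Lambda$ depends on both $z$ and $\Mcal_t$, so this expression generally varies with $z$. The device would work only if the Jacobian were constant, and no common-support argument repairs it. Your fallback supplies an ingredient, $\E[\varepsilon^0_{it}\mid C_{it}]=0$, but not an argument, since nothing gives $\rho^*_{it}(g^*)$ mean zero given $C_{it}$.

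The missing idea is to test against the index difference itself. Let $d:=u^0-u^*$.
\begin{itemize}
\item The micro limit gives $\E[\{g_0-g^*\}'\rho^*_{it}(g^*)]=0$.
\item The profile first-order condition at $g^*$ gives $\E[\{a_{0t}-a_t^*(g^*)\}'\rho^*_{it}(g^*)]=0$, because $a_{0t}-a_t^*(g^*)$ is $\Mcal_t$-measurable.
\item The same two identities hold with $\varepsilon^0_{it}$ in place of $\rho^*_{it}(g^*)$, via $\E[\varepsilon^0_{it}\mid C_{it}]=0$ and $\E[\varepsilon^0_{it}\mid\Mcal_t]=0$.
\end{itemize}
Subtracting gives $\E[d'\{\sigma(u^0)-\sigma(u^*)\}]=0$. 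Now $\sigma(u^0)-\sigma(u^*)=\int_0^1\Lambda(u^*+sd)\,ds\,d$ with $\Lambda\succeq c_\Lambda I$ on the bounded index set, so $c_\Lambda\E\norm{d}^2\le 0$ and $d=0$ almost surely. Only then does within-market identification give $g^*=g_0$ and $a_t^*(g^*)=a_{0t}$; this is Proposition \ref{prop:app-within-market-id-restated}, where common support and the normalization enter. This is exactly how the paper argues.

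\textbf{Overreach in the density step.} You claim density extends the limiting micro moments to every bounded continuous $\varphi$, giving $\E[\rho^*_{it}(g^*)\mid Z_{it},X_t]=0$. The assumptions do not license this. Assumption \ref{ass:app-sieve-approx-primitive} only makes $\bigcup_{K_g}\G_{K_g}$ dense in the normalized class $\G$. With a centered basis, which the paper explicitly allows, $\mathrm{span}\{b_k\}$ contains no nonzero $X$-only function. Under the conditional-mean centering, for example, the limit pins down $\E[\rho^*_{it}(g^*)\mid Z_{it},X_t]$ only up to a function of $X_t$.

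This is harmless once you adopt the pairing argument. The only micro direction you need is $g_0-g^*$, which is a sup-norm limit of elements of the sieve spans because $g_0,g^*\in\G$. Boundedness of $\rho^*_{it}(g^*)$ then lets you pass the orthogonality to that limit.
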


\begin{proof}
Suppose, to the contrary, that Assumption
\ref{ass:population-id-main} fails. Since $\mathcal Q_K\ge0$, there exist
$\epsilon>0$, a subsequence $K_\ell = (K_{g,\ell}, K_{\psi,\ell}, K_{q,\ell})\to\infty$, and
$\theta_\ell=(g_\ell,\psi_\ell)\in\ThetaSpace_{K_\ell}$ such that
\[
    \norm{\theta_\ell-\theta_0}_{\ThetaSpace}\ge\epsilon,
    \qquad
    \mathcal Q_{K_\ell}(\theta_\ell)\to0.
\]
Indeed, along a subsequence on which the infimum in
\eqref{eq:moving-separation-main} converges to zero, one may choose a
$1/\ell$-approximate minimizer. By Assumption
\ref{ass:app-compact-primitive}, $\ThetaSpace$ is compact under the product
sup norm. Hence, along a further subsequence and retaining the same notation,
\[
    \norm{\theta_\ell-\theta^*}_{\ThetaSpace}\to0
\]
for some $\theta^*:=(g^*,\psi^*)\in\ThetaSpace$. The reverse triangle
inequality then yields
\begin{equation}
\label{eq:app-limit-away}
    \norm{\theta^*-\theta_0}_{\ThetaSpace}\ge\epsilon.
\end{equation}

Because the two components of $\mathcal Q_{K_\ell}$ are nonnegative,
\[
    c_g\norm{m_{g,K_{g,\ell}}(g_\ell)}^2
    \le
    \mathcal Q_{g,K_{g,\ell}}(g_\ell)
    \le
    \mathcal Q_{K_\ell}(\theta_\ell)
    \to0,
    \qquad
    \mathcal Q_\psi(\theta_\ell)\to0,
\]
where the first inequality uses Assumption
\ref{ass:app-population-id-primitive}(i). Thus
$\norm{m_{g,K_{g,\ell}}(g_\ell)}\to0$.

For each fixed $k\ge1$, let
\[
    \mu_k(g)
    :=
    \E\!\left[
        b_k(Z_{it},X_t)\otimes\rho_{it}^*(g)
    \right].
\]
Since $b^{K_g}=(b_1,\ldots,b_{K_g})'$ and
$K_{g,\ell}\to\infty$, $\mu_k(g_\ell)$ is a subvector of
$m_{g,K_{g,\ell}}(g_\ell)$ for all sufficiently large $\ell$. Thus
$\mu_k(g_\ell)\to0$. Assumption
\ref{ass:app-population-id-primitive}(ii) extends the profile-Lipschitz
argument in Proposition \ref{prop:app-profile-lipschitz} to all of $\G$.
Therefore, for a constant $C_a<\infty$,
\[
\begin{aligned}
    \norm{\mu_k(g_\ell)-\mu_k(g^*)}
    &\le
    L_\sigma(1+C_a)
    \E\!\left[\abs{b_k(Z_{it},X_t)}\right]
    \norm{g_\ell-g^*}_\infty
    \to0.
\end{aligned}
\]
Here $L_\sigma$ is a Lipschitz constant for the logit map, and the
expectation is finite by Assumption
\ref{ass:app-sieve-approx-primitive}. Consequently,
\[
    \mu_k(g^*)=0
    \qquad\text{for every }k\ge1.
\]

For positive integer tending to infinity $M$, use the density of the sieve space to choose vector-valued finite basis expansions
$\tilde g_{0,M}$ and $\tilde g_M^*$ that converge uniformly to $g_0$ and
$g^*$, respectively. Then
$v_M:=\tilde g_{0,M}-\tilde g_M^*$ has components in
$\operatorname{span}\{b_1,\ldots,b_M\}$ and
$v_M\to g_0-g^*$ uniformly. The equalities $\mu_k(g^*)=0$ imply
$\E[v_M(Z_{it},X_t)'\rho_{it}^*(g^*)]=0$ for every $M$. Since
$\rho_{it}^*(g^*)$ is bounded, uniform convergence gives
\begin{equation}
\label{eq:app-limit-score-direction}
    \E\!\left[
        \{g_0(Z_{it},X_t)-g^*(Z_{it},X_t)\}'
        \rho_{it}^*(g^*)
    \right]
    =0.
\end{equation}

The global interiority condition in Assumption
\ref{ass:app-population-id-primitive}(ii) and Lemma
\ref{lem:app-inner-gradient} give the profile first-order condition
\[
    \E\!\left[\rho_{it}^*(g^*)\mid\Mcal_t\right]=0.
\]
where $\Mcal_t := (P_t, X_t, W_t, \Xi_t).$ Because $a_{0t}-a_t^*(g^*)$ is measurable with respect to the $\sigma$-algebra generated by $\Mcal_t$,
\begin{equation}
\label{eq:app-limit-profile-direction}
    \E\!\left[
        \{a_{0t}-a_t^*(g^*)\}'
        \rho_{it}^*(g^*)
    \right]
    =0.
\end{equation}

Define
\[
    u_{it}^*
    :=
    g^*(Z_{it},X_t)+a_t^*(g^*),
    \qquad
    u_{it}^0
    :=
    g_0(Z_{it},X_t)+a_{0t},
    \qquad
    d_{it}:=u_{it}^0-u_{it}^*.
\]
Equations \eqref{eq:app-limit-score-direction} and
\eqref{eq:app-limit-profile-direction} imply
\[
    \E[d_{it}'\rho_{it}^*(g^*)]=0.
\]
Let $\varepsilon_{it}^0:=Q_{it}-\sigma(u_{it}^0)$. The maintained demand
model gives $\E[\varepsilon_{it}^0\mid C_{it}]=0$, and hence
\[
    \E\!\left[
        \{g_0(Z_{it},X_t)-g^*(Z_{it},X_t)\}'
        \varepsilon_{it}^0
    \right]=0.
\]
Moreover, Proposition \ref{prop:app-true-profile} gives
$a_t^*(g_0)=a_{0t}$. The profile first-order condition at $g_0$ therefore
implies $\E[\varepsilon_{it}^0\mid\Mcal_t]=0$, so
\[
    \E\!\left[
        \{a_{0t}-a_t^*(g^*)\}'\varepsilon_{it}^0
    \right]=0.
\]
Combining the last three displays yields
\[
    0
    =
    \E\!\left[
        d_{it}'\{\rho_{it}^*(g^*)-\varepsilon_{it}^0\}
    \right]
    =
    \E\!\left[
        d_{it}'\{\sigma(u_{it}^0)-\sigma(u_{it}^*)\}
    \right].
\]
By Assumption \ref{ass:app-population-id-primitive}(ii), the logit
Jacobian has the same uniform lower eigenvalue bound $c_\Lambda>0$ as in
Lemma \ref{lem:app-inner-separation}. The mean-value formula therefore gives
\[
    d_{it}'\{\sigma(u_{it}^0)-\sigma(u_{it}^*)\}
    \ge
    c_\Lambda\norm{d_{it}}^2.
\]
It follows that $d_{it}=0$ almost surely, or equivalently
$u_{it}^*=u_{it}^0$ almost surely. Proposition
\ref{prop:app-within-market-id-restated} and the normalization on $g$ then
imply
\begin{equation}
\label{eq:app-limit-within-id}
    g^*=g_0,
    \qquad
    a_t^*(g^*)=a_{0t}.
\end{equation}

It remains to pass the price-side moment to the limit. Proposition
\ref{prop:app-profile-lipschitz}, extended to $\G$ by Assumption
\ref{ass:app-population-id-primitive}(ii), and conditional Jensen's inequality give
\[
\begin{aligned}
    \Delta_\ell
    &:=
    \operatorname*{ess\,sup}_{r\in\supp(R_t)}
    \norm{
        m_\psi(r;\theta_\ell)-m_\psi(r;\theta^*)
    } \le
    C_a\norm{g_\ell-g^*}_\infty
    +
    \norm{\psi_\ell-\psi^*}_\infty
    \to0.
\end{aligned}
\]
where the essential supremum is taken with respect to the distribution of
$R_t$. The conditional moments are uniformly essentially bounded because $\Acal$ and
$\bfPsi$ are uniformly bounded, while Assumption
\ref{ass:app-sieve-approx-primitive} implies
$\sup_r\norm{\Sigma(r)^{-1}}<\infty$. Hence, for some finite constant $C$,
\[
    \abs{
        \mathcal Q_\psi(\theta_\ell)
        -
        \mathcal Q_\psi(\theta^*)
    }
    \le C\Delta_\ell\to0.
\]
Since $\mathcal Q_\psi(\theta_\ell)\to0$, we obtain
$\mathcal Q_\psi(\theta^*)=0$. Nonnegativity of its integrand and positive
definiteness of $\Sigma(R_t)^{-1}$ then imply
\[
    m_\psi(R_t;\theta^*)=0
    \qquad\text{a.s.}.
\]

Using \eqref{eq:app-limit-within-id}, this last equality becomes
\[
    \E[a_{0t}-\psi^*(P_t,X_t)\mid W_t,X_t]=0.
\]
Subtracting the corresponding moment for $\psi_0$ and applying completeness
from Assumption \ref{ass:price-id-main} yields
$\psi^*(P_t,X_t)=\psi_0(P_t,X_t)$ almost surely. Because both functions are
continuous and $\mathcal D_\psi=\supp(P_t,X_t)$, this equality holds on
$\mathcal D_\psi$. Together with \eqref{eq:app-limit-within-id}, it follows
that $\theta^*=\theta_0$ under the product sup norm, contradicting
\eqref{eq:app-limit-away}. Therefore Assumption
\ref{ass:population-id-main} holds.
\end{proof}

\begin{assumption}[Oracle micro-block uniform law]
\label{ass:app-micro-UC}
Define the oracle micro moment
\[
    \hat m_{g,K_g}^{\circ}(g)
    :=
    \frac{1}{N}\sum_{t=1}^T\sum_{i=1}^{n_t}
    b^{K_g}(Z_{it},X_t)\otimes
    \{Q_{it}-\sigma(g(Z_{it},X_t)+a_t^*(g))\}.
\]
Then
\[
    \sup_{g\in\G_{K_g}}
    \norm{\hat m_{g,K_g}^{\circ}(g)-m_{g,K_g}(g)}
    =
    o_p(1).
\]
In addition, the oracle moment convergence is fast enough relative to the
size of the population moment that
\[
    \lambda_{\max}(W_{g,K_g})
    \left[
    \sup_{g\in\G_{K_g}}
    \norm{\hat m_{g,K_g}^{\circ}(g)-m_{g,K_g}(g)}
    \right]
    \left[
    \sup_{g\in\G_{K_g}}
    \norm{m_{g,K_g}(g)}
    \right]
    =
    o_p(1).
\]
\end{assumption}

This assumption is a high-level uniform law of large numbers for the oracle micro block. It is reasonable because, after replacing the estimated profile $\hat a_t(g)$ by the population profile $a_t^*(g)$, the micro moment is an ordinary sample average indexed by the finite-dimensional sieve parameter $g\in\G_{K_g}$. The residual term is uniformly bounded: $Q_{it}$ is a vector of choice indicators and $\sigma(\cdot)$ is a vector of choice probabilities. Thus the only part of the summand whose size can grow with $K_g$ is the basis vector $b^{K_g}(Z_{it},X_t)$. The Lipschitz continuity of the logit map and of the population profile implies that small changes in $g$ produce small changes in the profiled residual. Therefore the oracle moment class has the same basic finite-dimensional structure as the sieve class for $g$, with no additional nonparametric object introduced by profiling. Under bounded-index conditions, square-integrability of the basis functions, and sufficiently slow growth of $K_g$ relative to the market and within-market sample sizes, standard uniform laws for sieve extremum and sieve GMM estimators imply the displayed convergence; see \cite{newey_chapter_1994}, \cite{andrews_chapter_1994}, \cite{newey_convergence_1997}, \cite{ai_efficient_2003}, and \cite{chen_large_2007}.

\begin{proposition}[Oracle micro criterion convergence]
\label{prop:app-oracle-micro-criterion-UC}
Under Assumptions \ref{ass:app-population-id-primitive}(i) and \ref{ass:app-micro-UC},
\[
    \sup_{g\in\G_{K_g}}
    \abs{
    \hat{\mathcal Q}_{g,K_g}^{\circ}(g)
    -
    \mathcal Q_{g,K_g}(g)
    }
    =
    o_p(1).
\]
\end{proposition}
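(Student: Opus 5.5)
The plan is to use the elementary identity for the difference of two quadratic forms sharing the same weighting matrix. Here $\hat{\mathcal Q}_{g,K_g}^{\circ}(g):=\hat m_{g,K_g}^{\circ}(g)'W_{g,K_g}\hat m_{g,K_g}^{\circ}(g)$ is the oracle analogue of \eqref{eq:calQg-main}. Fix $g\in\G_{K_g}$ and write $\hat m^\circ:=\hat m_{g,K_g}^{\circ}(g)$, $m:=m_{g,K_g}(g)$ and $\Delta:=\hat m^\circ-m$. Because $W_{g,K_g}$ is symmetric, expanding $(m+\Delta)'W_{g,K_g}(m+\Delta)$ gives
\[
    \hat{\mathcal Q}_{g,K_g}^{\circ}(g)-\mathcal Q_{g,K_g}(g)
    =
    \Delta'W_{g,K_g}\Delta+2\,m'W_{g,K_g}\Delta .
\]
The whole argument reduces to bounding these two terms uniformly in $g$.

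For the quadratic term, I would use $x'Wx\le\lambda_{\max}\{W\}\norm{x}^2$, which gives
\[
    \abs{\Delta'W_{g,K_g}\Delta}\le\lambda_{\max}\{W_{g,K_g}\}\norm{\Delta}^2 .
\]
Assumption \ref{ass:app-population-id-primitive}(i) bounds $\lambda_{\max}\{W_{g,K_g}\}$ by $C_g$ uniformly in $K_g$. The first part of Assumption \ref{ass:app-micro-UC} gives $\sup_g\norm{\Delta}=o_p(1)$. Together these make $\sup_g\abs{\Delta'W_{g,K_g}\Delta}\le C_g\,o_p(1)^2=o_p(1)$.

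For the cross term, I would apply Cauchy--Schwarz in the $W$-geometry, or simply operator-norm submultiplicativity, to get
\[
    \abs{m'W_{g,K_g}\Delta}\le\lambda_{\max}\{W_{g,K_g}\}\norm{m}\norm{\Delta}.
\]
Taking suprema over $g$ separately in $\norm{m}$ and $\norm{\Delta}$ bounds this by
\[
    \lambda_{\max}\{W_{g,K_g}\}\Big[\sup_g\norm{\Delta}\Big]\Big[\sup_g\norm{m}\Big].
\]
This is exactly the quantity assumed to be $o_p(1)$ in the second display of Assumption \ref{ass:app-micro-UC}.

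Combining the two bounds through the triangle inequality and taking the supremum over $\G_{K_g}$ yields the claim. The only point needing care is the cross term. The population moment $m_{g,K_g}(g)$ has dimension $JK_g$ and its norm need not stay bounded as $K_g$ grows, so $\sup_g\norm{\Delta}=o_p(1)$ alone does not suffice. This is why the product rate condition in Assumption \ref{ass:app-micro-UC} is invoked directly rather than a bound on $\norm{m}$.
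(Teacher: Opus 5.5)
Your proposal is correct and matches the paper's proof essentially line by line. Both expand the difference as $2\Delta'W_{g,K_g}m+\Delta'W_{g,K_g}\Delta$, control the quadratic remainder with $\lambda_{\max}(W_{g,K_g})\le C_g$ from Assumption \ref{ass:app-population-id-primitive}(i) together with $\sup_g\norm{\Delta}=o_p(1)$, and handle the cross term directly through the product-rate condition in the second display of Assumption \ref{ass:app-micro-UC}.
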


\begin{proof}
Let
\[
    \Delta_g:=\hat m_{g,K_g}^{\circ}(g)-m_{g,K_g}(g).
\]
Since $\hat m_{g,K_g}^{\circ}(g)=m_{g,K_g}(g)+\Delta_g$,
\[
\hat{\mathcal Q}_{g,K_g}^{\circ}(g)-\mathcal Q_{g,K_g}(g)
=
2\Delta_g'W_{g,K_g}m_{g,K_g}(g)
+
\Delta_g'W_{g,K_g}\Delta_g .
\]
Using $\abs{x'Wy}\le \lambda_{\max}(W)\norm{x}\norm{y}$ for symmetric positive semidefinite $W$,
\[
\begin{aligned}
\sup_{g\in\G_{K_g}}
\abs{
\hat{\mathcal Q}_{g,K_g}^{\circ}(g)
-
\mathcal Q_{g,K_g}(g)
}
&\le
2\lambda_{\max}(W_{g,K_g})
\sup_{g\in\G_{K_g}}\norm{\Delta_g}
\sup_{g\in\G_{K_g}}\norm{m_{g,K_g}(g)}
\\
&\quad+
\lambda_{\max}(W_{g,K_g})
\left(
\sup_{g\in\G_{K_g}}\norm{\Delta_g}
\right)^2 .
\end{aligned}
\]
The first term is $o_p(1)$ by the second display in Assumption
\ref{ass:app-micro-UC}. The second term is $o_p(1)$ because
$\sup_g\norm{\Delta_g}=o_p(1)$ and
$\lambda_{\max}(W_{g,K_g})=O(1)$. Therefore
\[
    \sup_{g\in\G_{K_g}}
    \abs{
    \hat{\mathcal Q}_{g,K_g}^{\circ}(g)
    -
    \mathcal Q_{g,K_g}(g)
    }
    =
    o_p(1).
\]
\end{proof}

\begin{assumption}[Micro feasible-profile negligibility]
\label{ass:app-micro-profile-negligible}
Let
\[
    e_T:=\sup_{g\in\G_{K_g}}\max_{t\le T}
    \norm{\hat a_t(g)-a_t^*(g)} .
\]
The profile error is small relative to the empirical second moment of the growing basis
\[
    e_T
    \left\{
    \frac1N\sum_{t=1}^T\sum_{i=1}^{n_t}
    \norm{b^{K_g}(Z_{it},X_t)}^2
    \right\}^{1/2}
    =
    o_p(1).
\]
In addition, the feasible-oracle moment difference is small enough relative to the size of the oracle moment that
\[
    \lambda_{\max}(W_{g,K_g})
    \left[
    \sup_{g\in\G_{K_g}}
    \norm{\hat m_{g,K_g}(g)-\hat m_{g,K_g}^{\circ}(g)}
    \right]
    \left[
    \sup_{g\in\G_{K_g}}
    \norm{\hat m_{g,K_g}^{\circ}(g)}
    \right]
    =
    o_p(1).
\]
\end{assumption}
This assumption ensures that replacing the population profile $a_t^*(g)$ by the estimated profile $\hat a_t(g)$ does not affect the micro block asymptotically. Proposition \ref{prop:app-profile-consistency-restated} gives uniform consistency of the profile, so $e_T=o_p(1)$. The additional requirement above accounts for the fact that the profile error enters the micro moment multiplied by the growing basis vector $b^{K_g}(Z_{it},X_t)$. Since the logit map is Lipschitz on bounded index sets,
\[
\norm{
\sigma(g(Z_{it},X_t)+\hat a_t(g))
-
\sigma(g(Z_{it},X_t)+a_t^*(g))
}
\le C e_T
\]
uniformly in $g,i,t$. Hence the feasible-oracle difference in the micro moment is bounded by the profile error times the empirical size of the basis. The first displayed condition requires this product to vanish. The bounded-eigenvalue condition on $W_{g,K_g}$ then prevents the weighting matrix from magnifying a negligible moment difference into a non-negligible criterion difference.

The final display is needed only because the micro criterion is quadratic in the moment. When the criterion is expanded around the oracle moment, a cross term appears:
\[
    2d_g'W_{g,K_g}\hat m_{g,K_g}^{\circ}(g),
    \qquad
    d_g:=\hat m_{g,K_g}(g)-\hat m_{g,K_g}^{\circ}(g).
\]
The final condition requires the feasible-oracle moment error to vanish fast enough that this cross term is negligible. This avoids imposing the stronger and potentially unnatural requirement that the growing-dimensional oracle moment itself be uniformly $O_p(1)$.

\begin{proposition}[Micro feasible-oracle equivalence]
\label{prop:app-micro-profile-negligible}
Under Assumptions \ref{ass:app-population-id-primitive}(i) and \ref{ass:app-micro-profile-negligible},
\[
    \sup_{g\in\G_{K_g}}
    \norm{\hat m_{g,K_g}(g)-\hat m_{g,K_g}^{\circ}(g)}
    =
    o_p(1).
\]
Consequently,
\[
    \sup_{g\in\G_{K_g}}
    \abs{
    \hat{\mathcal Q}_{g,K_g}(g)
    -
    \hat{\mathcal Q}_{g,K_g}^{\circ}(g)
    }
    =
    o_p(1),
\]
where
\[
    \hat{\mathcal Q}_{g,K_g}^{\circ}(g)
    :=
    \hat m_{g,K_g}^{\circ}(g)'W_{g,K_g}\hat m_{g,K_g}^{\circ}(g).
\]
\end{proposition}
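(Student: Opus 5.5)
The plan is to bound the feasible--oracle moment gap pointwise in $(i,t)$ and then aggregate with Cauchy--Schwarz, after which the criterion statement follows from the same quadratic expansion used in Proposition \ref{prop:app-oracle-micro-criterion-UC}.

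For the moment gap, the $Q_{it}$ terms cancel, so
\[
    d_g:=\hat m_{g,K_g}(g)-\hat m^{\circ}_{g,K_g}(g)
    =
    -\frac1N\sum_{t=1}^T\sum_{i=1}^{n_t}
    b^{K_g}(Z_{it},X_t)\otimes
    \bigl\{\sigma(g(Z_{it},X_t)+\hat a_t(g))-\sigma(g(Z_{it},X_t)+a_t^*(g))\bigr\}.
\]
By Assumption \ref{ass:app-profile-primitive}(ii) both indices lie in a bounded set, on which the logit map is Lipschitz with some constant $L_\sigma$. Since $\hat a_t(g),a_t^*(g)\in\Acal$, the bracket has norm at most $L_\sigma e_T$ uniformly in $g,i,t$. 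Using $\norm{u\otimes v}=\norm{u}\norm{v}$, the triangle inequality, and Cauchy--Schwarz over the empirical measure, we get
\[
    \sup_{g\in\G_{K_g}}\norm{d_g}
    \le
    L_\sigma e_T\,\frac1N\sum_{t,i}\norm{b^{K_g}(Z_{it},X_t)}
    \le
    L_\sigma e_T\Bigl\{\frac1N\sum_{t,i}\norm{b^{K_g}(Z_{it},X_t)}^2\Bigr\}^{1/2}.
\]
The right-hand side is $o_p(1)$ by the first display of Assumption \ref{ass:app-micro-profile-negligible}. Note that $e_T$ is not assumed to vanish on its own in that display; only its product with the basis size must vanish, and that product is exactly what appears here.

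For the criterion statement, write $\hat m=\hat m^\circ+d_g$ and expand:
\[
    \hat{\mathcal Q}_{g,K_g}(g)-\hat{\mathcal Q}^{\circ}_{g,K_g}(g)
    =
    2d_g'W_{g,K_g}\hat m^{\circ}_{g,K_g}(g)+d_g'W_{g,K_g}d_g.
\]
Applying $\abs{x'Wy}\le\lambda_{\max}(W)\norm{x}\norm{y}$ and taking suprema over $g$, the cross term is bounded by $2\lambda_{\max}(W_{g,K_g})\sup_g\norm{d_g}\sup_g\norm{\hat m^\circ_{g,K_g}(g)}$. This is $o_p(1)$ by the second display of Assumption \ref{ass:app-micro-profile-negligible}. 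The quadratic term is bounded by $\lambda_{\max}(W_{g,K_g})(\sup_g\norm{d_g})^2$. It is $o_p(1)$ because $\sup_g\lambda_{\max}(W_{g,K_g})\le C_g$ by Assumption \ref{ass:app-population-id-primitive}(i) and $\sup_g\norm{d_g}=o_p(1)$ from the first step.

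The main obstacle is the cross term. The oracle moment lives in a space of growing dimension $JK_g$, so it need not be $O_p(1)$ uniformly in $g$. The naive bound ``small $\times$ bounded'' is therefore unavailable, and the argument relies on the second display of Assumption \ref{ass:app-micro-profile-negligible}, which exists precisely to control this term. The other point to check is that the Lipschitz constant $L_\sigma$ is uniform. This holds because the bounded-index condition applies to every $a\in\Acal$, and hence covers the random intercepts $\hat a_t(g)$ as well as $a_t^*(g)$.
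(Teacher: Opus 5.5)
Your proposal is correct and follows the paper's proof step for step. You cancel the $Q_{it}$ terms, apply the logit Lipschitz bound with $\norm{u\otimes v}=\norm{u}\norm{v}$, pass to $e_T$, and use Cauchy--Schwarz to invoke the first display of Assumption \ref{ass:app-micro-profile-negligible}; you then expand the quadratic form, controlling the cross term by its second display and the remainder by the eigenvalue bound in Assumption \ref{ass:app-population-id-primitive}(i). As a minor aside, the uniformity of $L_\sigma$ you flag is automatic: the logit map is globally $1$-Lipschitz, since $\norm{\diag(\sigma(u))-\sigma(u)\sigma(u)'}\le 1$ for every $u$.
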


\begin{proof}
Fix $g\in\G_{K_g}$. Subtracting the oracle micro moment from the feasible micro moment cancels the $Q_{it}$ term, so
\[
\begin{aligned}
\hat m_{g,K_g}(g)-\hat m_{g,K_g}^{\circ}(g)
&=
\frac1N\sum_{t=1}^T\sum_{i=1}^{n_t}
b^{K_g}(Z_{it},X_t)\otimes
\Big[
\sigma(g(Z_{it},X_t)+a_t^*(g))
-
\sigma(g(Z_{it},X_t)+\hat a_t(g))
\Big].
\end{aligned}
\]
Let $L_\sigma<\infty$ be a Lipschitz constant for the logit map on the bounded index set. Using $\norm{u\otimes v}=\norm{u}\norm{v}$ and the triangle inequality,
\[
\begin{aligned}
\norm{\hat m_{g,K_g}(g)-\hat m_{g,K_g}^{\circ}(g)}
&\le
\frac{L_\sigma}{N}
\sum_{t=1}^T\sum_{i=1}^{n_t}
\norm{b^{K_g}(Z_{it},X_t)}
\norm{\hat a_t(g)-a_t^*(g)} .
\end{aligned}
\]
Taking the supremum over $g$ and using the definition of $e_T$ gives
\[
\sup_{g\in\G_{K_g}}
\norm{\hat m_{g,K_g}(g)-\hat m_{g,K_g}^{\circ}(g)}
\le
L_\sigma e_T
\frac1N\sum_{t=1}^T\sum_{i=1}^{n_t}
\norm{b^{K_g}(Z_{it},X_t)} .
\]
By Cauchy--Schwarz, the average basis norm is bounded by the square root of its empirical second moment. Hence
\[
\sup_{g\in\G_{K_g}}
\norm{\hat m_{g,K_g}(g)-\hat m_{g,K_g}^{\circ}(g)}
\le
L_\sigma e_T
\left\{
\frac1N\sum_{t=1}^T\sum_{i=1}^{n_t}
\norm{b^{K_g}(Z_{it},X_t)}^2
\right\}^{1/2}
=o_p(1)
\]
by Assumption \ref{ass:app-micro-profile-negligible}. This proves the moment equivalence.

For the criterion equivalence, write
\[
d_g:=\hat m_{g,K_g}(g)-\hat m_{g,K_g}^{\circ}(g),
\qquad
\hat m_g^\circ:=\hat m_{g,K_g}^{\circ}(g).
\]
Since $\hat m_{g,K_g}(g)=\hat m_g^\circ+d_g$,
\[
\hat{\mathcal Q}_{g,K_g}(g)-\hat{\mathcal Q}_{g,K_g}^{\circ}(g)
=
2d_g'W_{g,K_g}\hat m_g^\circ+d_g'W_{g,K_g}d_g .
\]
Using $\abs{x'Wy}\le \lambda_{\max}(W)\norm{x}\norm{y}$ for symmetric positive semidefinite $W$,
\[
\begin{aligned}
\sup_{g\in\G_{K_g}}
\abs{\hat{\mathcal Q}_{g,K_g}(g)-\hat{\mathcal Q}_{g,K_g}^{\circ}(g)}
&\le
2\lambda_{\max}(W_{g,K_g})
\sup_{g\in\G_{K_g}}\norm{d_g}
\sup_{g\in\G_{K_g}}\norm{\hat m_g^\circ} +
\lambda_{\max}(W_{g,K_g})
\left(\sup_{g\in\G_{K_g}}\norm{d_g}\right)^2 .
\end{aligned}
\]
The first term is $o_p(1)$ by the final display in Assumption
\ref{ass:app-micro-profile-negligible}. For the second term,
the moment equivalence gives $\sup_g\norm{d_g}=o_p(1)$, and the
bounded-eigenvalue condition gives $\lambda_{\max}(W_{g,K_g})=O(1)$.
Thus the quadratic remainder is also $o_p(1)$. Therefore
\[
    \sup_{g\in\G_{K_g}}
    \abs{
    \hat{\mathcal Q}_{g,K_g}(g)
    -
    \hat{\mathcal Q}_{g,K_g}^{\circ}(g)
    }
    =
    o_p(1),
\]
which proves the criterion equivalence.
\end{proof}

Before stating the Ai--Chen conditions, define the series design matrix
\[
    \bm Q_T :=
    \begin{pmatrix}
    q^{K_q}(R_1)' \\
    \vdots \\
    q^{K_q}(R_T)'
    \end{pmatrix}
    \in\mathbb R^{T\times K_q}.
\]
The population and empirical Gram matrices are
\[
    Q_K:=\E[q^{K_q}(R_t)q^{K_q}(R_t)'],
    \qquad
    \hat Q_K:=\frac1T\bm Q_T'\bm Q_T.
\]
For a matrix $A$, $\norm{A}_F$ denotes the Frobenius norm; in particular, if $A$ is $T\times J$ with $t$-th row $A_t'$, then
\[
    \norm{A}_F^2=\sum_{t=1}^T\norm{A_t}^2.
\]

\begin{assumption}[Ai-Chen Conditions]
\label{ass:app-AC-primitive}
Let $R_t=(W_t,X_t)$ and $\mathcal R=\supp(R_t)$. Write
$\theta=(g,\psi)$, $h_t^*(\theta):=a_t^*(g)-\psi(P_t,X_t),$ and $m_\psi(r;\theta):=\E[h_t^*(\theta)\mid R_t=r].$
The following conditions hold.
\begin{enumerate}[label=(\roman*),leftmargin=2em]
    \item Assumption \ref{ass:sampling-main} holds, and
    $\mathcal R=\supp(R_t)$ is compact.

    \item The basis $q^{K_q}(R_t)$ has dense linear span in $L_2(R_t)$.
    Its population Gram matrix $Q_K$ has eigenvalues bounded away from zero and infinity uniformly in
    $K_q$.

    \item The series dimension satisfies $K_q\to\infty$, $K_q/T\to0$, and the
    empirical Gram matrix, $\hat{Q}_K$, is well-conditioned with probability approaching
    one: $\lambda_{\min}\{\hat{Q}_K\}\ge c$ and $\lambda_{\max} \{\hat{Q}_K\}\le C$
    for constants $0<c<C<\infty$.

    \item The conditional-mean class is uniformly approximable by the
    series basis at the realized market points. Let $\Pi_{K_q}$ denote
    projection onto $\mathrm{span}\{q^{K_q}\}$. Then
    \[
        \sup_{\theta\in\ThetaSpace_K}
        \frac1T\sum_{t=1}^T
        \norm{
        \Pi_{K_q}m_\psi(R_t;\theta)-m_\psi(R_t;\theta)
        }^2
        =
        o_p(1).
    \]

    \item The implied shocks $h_t^*(\theta)$ are uniformly
    square-integrable and Lipschitz in $\theta$ under
    $\norm{\cdot}_{\ThetaSpace}$. The centered residuals $\eta_t(\theta):=
        h_t^*(\theta)-m_\psi(R_t;\theta)$
    satisfy the standard uniform stochastic projection bound
    \[
        \sup_{\theta\in\ThetaSpace_K}
        \frac1T\norm{P_{q,T}\eta(\theta)}_F^2=o_p(1),
    \]
    where $P_{q,T}:=\bm Q_T(\bm Q_T'\bm Q_T)^{-1}\bm Q_T'$ is the empirical projection matrix onto the span of the series basis and $\eta(\theta)$ is the $T\times J$ matrix with $t$-th row
    $\eta_t(\theta)'$.

    \item The estimated weighting matrix satisfies
    \[
        \sup_{r\in\mathcal R}\norm{\hat\Sigma(r)-\Sigma(r)}=o_p(1),
    \]
    where $\Sigma(r)$ is uniformly positive definite and uniformly
    bounded.

    \item The population quadratic class satisfies the uniform law \[ \sup_{\theta\in\ThetaSpace_K} \left| \frac1T\sum_{t=1}^T m_\psi(R_t;\theta)'\Sigma(R_t)^{-1}m_\psi(R_t;\theta) - \E\!\left[ m_\psi(R_t;\theta)'\Sigma(R_t)^{-1}m_\psi(R_t;\theta) \right] \right| = o_p(1). \]
\end{enumerate}
\end{assumption}

These conditions are high-level versions of standard restrictions used in sieve conditional-moment and nonparametric IV estimation; see \cite{newey_instrumental_2003}, \cite{ai_efficient_2003}, \cite{donald_empirical_2003}, \cite{hall_nonparametric_2005}, and \cite{chen_sieve_2015}.

Conditions \ref{ass:app-AC-primitive}(i)--(vii) collect the regularity
requirements for the price-side conditional-moment SMD block. Conditions
(i)--(iii) are primitive series-design conditions: markets are independent,
the support of $R_t$ is compact, the basis $q^{K_q}$ is rich in
$L_2(\mathcal R)$, and the population and empirical Gram matrices are well
conditioned as the number of series terms grows. Condition (iv) is the
series approximation requirement. It is stated directly in the empirical
form used in the proof: the projection
$\Pi_{K_q}m_\psi(\cdot;\theta)$ must approximate
$m_\psi(\cdot;\theta)$ uniformly over
$\theta\in\ThetaSpace_K$ at the realized market points. A more primitive
formulation would impose uniform approximation in $L_2(\mathcal R)$
together with a uniform law of large numbers for the squared
approximation-error class.\footnote{A sufficient primitive route to
condition (iv) is the following. Let
$e_K(r;\theta):=\Pi_{K_q}m_\psi(r;\theta)-m_\psi(r;\theta)$. If
$\sup_{\theta\in\ThetaSpace_K}\E\norm{e_K(R_t;\theta)}^2\to0$ and
\[
    \sup_{\theta\in\ThetaSpace_K}
    \left|
    T^{-1}\sum_{t=1}^T\norm{e_K(R_t;\theta)}^2
    -
    \E\norm{e_K(R_t;\theta)}^2
    \right|=o_p(1),
\]
then
$\sup_{\theta\in\ThetaSpace_K}T^{-1}\sum_{t=1}^T
\norm{e_K(R_t;\theta)}^2=o_p(1)$. This is exactly condition (iv).}
Condition (vi) is the regularity condition for the estimated weighting
matrix.

Conditions (v) and (vii) are more high-level. Condition (v) controls the
stochastic part of the series regression. Writing
$h_t^*(\theta)=m_\psi(R_t;\theta)+\eta_t(\theta)$, the residual class
is $\{\eta_t(\theta):\theta\in\ThetaSpace_K\}$, and each residual is
conditionally mean zero given $R_t$. Condition (v) requires that the
fitted component obtained by projecting this centered noise onto the
series basis be asymptotically negligible, uniformly over $\theta$.
Condition (vii) is a uniform law of large numbers for the population
quadratic class appearing in the SMD objective. These conditions isolate
the stochastic projection and sample-average pieces from which convergence
of the SMD criterion follows. They could be derived from more primitive
entropy, moment, and growth restrictions on the residual and quadratic
classes, as in standard sieve minimum-distance arguments. I state them
directly to keep the appendix focused on the additional issue created by
profiling. In the present model, profiling enters through the generated
shock $h_t^*(\theta)=a_t^*(g)-\psi(P_t,X_t)$, and the Lipschitz
continuity of $g\mapsto a_t^*(g)$ ensures that this generated shock
varies smoothly with the structural functions.

\begin{proposition}[Validity of the price-side SMD moment, restated]
\label{prop:app-price-smd-valid-restated}
Under the maintained demand model, the profile identity $a_t^*(g_0)=a_{0t}$, and Assumptions \ref{ass:normalizations-main} and \ref{ass:price-id-main},
\[
    m_\psi(R_t;g_0,\psi_0)=0
    \qquad\text{a.s.}
\]
\end{proposition}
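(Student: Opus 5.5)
The argument is a direct unwinding of the definition of $m_\psi$ at the truth, followed by the same two conditioning facts used in Proposition \ref{prop:app-population-restrictions-restated}.

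First, evaluate the implied shock at $\theta_0=(g_0,\psi_0)$. By definition,
\[
    h_t^*(g_0,\psi_0)=a_t^*(g_0)-\psi_0(P_t,X_t).
\]
The profile identity $a_t^*(g_0)=a_{0t}$ is assumed in the statement; it can also be obtained from Proposition \ref{prop:app-true-profile}. Substituting it gives
\[
    h_t^*(g_0,\psi_0)=a_{0t}-\psi_0(P_t,X_t)=h_{0t},
\]
where the last equality is the definition \eqref{eq:a0-def-main}. Hence
\[
    m_\psi(R_t;g_0,\psi_0)=\E[h_{0t}\mid R_t]=\E[h_{0t}\mid W_t,X_t].
\]
This conditional expectation is well defined: $h_{0t}=a_{0t}-\psi_0(P_t,X_t)$ is bounded, since $a_{0t}$ lies in the compact set $\Acal$ and $\psi_0$ lies in a bounded H\"older ball.

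Second, apply Assumption \ref{ass:price-id-main}(i) to obtain $\E[h_{0t}\mid W_t,X_t]=\E[h_{0t}\mid X_t]$ almost surely. Then apply Assumption \ref{ass:normalizations-main}(i) to get $\E[h_{0t}\mid X_t]=0$ almost surely. Chaining these two equalities yields $m_\psi(R_t;g_0,\psi_0)=0$ almost surely, exactly as in the last part of Proposition \ref{prop:app-population-restrictions-restated}.

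I do not expect a genuine obstacle here. The only point needing care is to read $m_\psi(r;\theta)$ as a version of the conditional expectation evaluated at $R_t$, so that all equalities hold almost surely rather than for every $r\in\mathcal R$. The conclusion is stated almost surely, so this reading suffices.
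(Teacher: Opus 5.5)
Your proposal is correct and takes essentially the same route as the paper. Both substitute the profile identity into the definition of $m_\psi$ to reduce it to $\E[a_{0t}-\psi_0(P_t,X_t)\mid W_t,X_t]=\E[h_{0t}\mid W_t,X_t]$, which vanishes by the instrument condition combined with the normalization $\E[h_{0t}\mid X_t]=0$. The only difference is cosmetic: the paper cites Proposition~\ref{prop:population-restrictions-main} for the last step, whereas you write out its two conditioning facts directly. Your boundedness remark is an extra integrability check that the paper leaves implicit.
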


\begin{proof}
By definition,
\[
    m_\psi(R_t;g_0,\psi_0)
    =
    \E[a_t^*(g_0)-\psi_0(P_t,X_t)\mid R_t].
\]
Since $R_t=(W_t,X_t)$ and $a_t^*(g_0)=a_{0t}$,
\[
    m_\psi(R_t;g_0,\psi_0)
    =
    \E[a_{0t}-\psi_0(P_t,X_t)\mid W_t,X_t].
\]
The right-hand side is zero by Proposition \ref{prop:population-restrictions-main}. Hence the price-side SMD moment is centered at the truth.
\end{proof}

Define the oracle implied shock $h_t^*(g,\psi):=a_t^*(g)-\psi(P_t,X_t)$ and the oracle series estimate $\hat m_{\psi,K_q}^{\circ}$ by replacing $\hat a_t(g)$ with $a_t^*(g)$ in \eqref{eq:mpsi-hat-main}. Let $\hat{\mathcal Q}_{\psi,K_q}^{\circ}$ denote the corresponding oracle SMD criterion.

\begin{proposition}[Oracle price-side SMD convergence]
\label{prop:app-oracle-smd}
Under Assumption \ref{ass:app-AC-primitive},
\[
    \sup_{(g,\psi)\in\ThetaSpace_K}
    \abs{\hat{\mathcal Q}_{\psi,K_q}^{\circ}(g,\psi)-\mathcal Q_\psi(g,\psi)}=o_p(1),
\]
where $\mathcal Q_\psi(g,\psi):=\E[m_\psi(R_t;g,\psi)'\Sigma(R_t)^{-1}m_\psi(R_t;g,\psi)]$.
\end{proposition}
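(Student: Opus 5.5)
I will write the oracle series estimate in matrix form and decompose it into a projected conditional-mean part and a projected noise part. Let $H(\theta)$ be the $T\times J$ matrix with $t$-th row $h_t^*(\theta)'$, let $M(\theta)$ be the matrix with rows $m_\psi(R_t;\theta)'$, and let $\eta(\theta)=H(\theta)-M(\theta)$. By \eqref{eq:mpsi-hat-main}, the vector of fitted values $\hat m^\circ_{\psi,K_q}(R_t;\theta)$, $t\le T$, stacked as rows, equals $P_{q,T}H(\theta)=P_{q,T}M(\theta)+P_{q,T}\eta(\theta)$. I then compare this with $M(\theta)$ in the empirical norm $T^{-1}\norm{\cdot}_F^2$:
\[
    \tfrac1T\norm{P_{q,T}H(\theta)-M(\theta)}_F^2
    \le
    \tfrac2T\norm{P_{q,T}M(\theta)-M(\theta)}_F^2
    +\tfrac2T\norm{P_{q,T}\eta(\theta)}_F^2 .
\]
The second term is $o_p(1)$ uniformly over $\ThetaSpace_K$ by Assumption \ref{ass:app-AC-primitive}(v). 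For the first term, $P_{q,T}$ is the empirical least-squares projection, so it minimizes the empirical distance over the span of $q^{K_q}$. Evaluating at the candidate rows $\Pi_{K_q}m_\psi(R_t;\theta)$, which lie in that span, bounds the first term by $T^{-1}\sum_t\norm{\Pi_{K_q}m_\psi(R_t;\theta)-m_\psi(R_t;\theta)}^2$. This is $o_p(1)$ uniformly by condition (iv). Hence $\sup_\theta T^{-1}\sum_t\norm{\hat m^\circ_t(\theta)-m_t(\theta)}^2=o_p(1)$, working on the event from (iii) on which $\bm Q_T'\bm Q_T$ is invertible, which has probability approaching one.

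Next I pass to the criterion in three comparisons. First, I replace $\hat\Sigma$ by $\Sigma$. Condition (vi) together with uniform positive definiteness gives $\sup_r\norm{\hat\Sigma(r)^{-1}-\Sigma(r)^{-1}}=o_p(1)$, with $\hat\Sigma^{-1}$ uniformly bounded with probability approaching one. Second, I replace $\hat m^\circ$ by $m$ inside the quadratic form. With $a=\hat m^\circ_t$ and $b=m_t$, I use $|a'Sa-b'Sb|\le\norm{S}\,\norm{a-b}(\norm{a-b}+2\norm{b})$ and Cauchy--Schwarz over $t$. This bounds the difference by $C\{\delta_T+2\delta_T^{1/2}(T^{-1}\sum_t\norm{m_t}^2)^{1/2}\}$, where $\delta_T$ is the $o_p(1)$ quantity from the first paragraph. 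The moments $m_\psi(R_t;\theta)$ are uniformly bounded because $\Acal$ and $\bfPsi$ are bounded, so this is $o_p(1)$. The same bound handles the weighting error term, using that $T^{-1}\sum_t\norm{\hat m^\circ_t}^2$ is $O_p(1)$ uniformly, which follows from the display and boundedness of $m$. Third, I replace the empirical average $T^{-1}\sum_t m_t'\Sigma^{-1}m_t$ by its expectation $\mathcal Q_\psi$, which is exactly condition (vii).

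Summing the three uniform errors gives the claim. The step needing the most care is the projection-approximation bound. The difficulty is that $P_{q,T}$ projects in the empirical inner product while $\Pi_{K_q}$ is the population projection; the least-squares optimality argument above avoids ever comparing the two directly. If condition (iv) is read instead as using the empirical projection, that term is immediate.
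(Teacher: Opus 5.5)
Your proof is correct and follows essentially the same route as the paper. It uses the decomposition $P_{q,T}H=P_{q,T}M+P_{q,T}\eta$ with condition (v) for the noise. It invokes (iv) via the fact that the rows $\Pi_{K_q}m_\psi(R_t;\theta)$ lie in the column space of $\bm Q_T$; your least-squares-optimality bound gives constant $1$ where the paper's triangle-inequality bound gives $4$. It then uses the same expansion of the quadratic form with (vi), and (vii) at the end. The only substantive difference is how you get $\sup_\theta T^{-1}\sum_t\norm{m_\psi(R_t;\theta)}^2=O_p(1)$: you use boundedness of $\Acal$ and of the $\psi$-sieve, which is valid in the paper's maintained setup but lies outside Assumption~\ref{ass:app-AC-primitive}. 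The paper instead derives this bound from (v) and (vii) via conditional Jensen.
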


\begin{proof} 
Write $\theta=(g,\psi)$. Let $H^*(\theta)$ be the $T\times J$ matrix with $t$-th row $h_t^*(\theta)'$, and let $M(\theta)$ be the $T\times J$ matrix with $t$-th row $m_\psi(R_t;\theta)'$. Thus $H^*(\theta)$ stacks the oracle implied shocks, while $M(\theta)$ stacks their population conditional means at the observed values of $R_t$. The oracle fitted conditional mean is the matrix of fitted values from the series regression of $H^*(\theta)$ on $q^{K_q}(R_t)$: \[ \hat M^{\circ}(\theta):=P_{q,T}H^*(\theta), \qquad P_{q,T}:=\bm Q_T(\bm Q_T'\bm Q_T)^{-1}\bm Q_T' . \] Decompose \[ H^*(\theta)=M(\theta)+\eta(\theta), \] where $\eta(\theta)$ has $t$-th row \( \eta_t(\theta)'= h_t^*(\theta)'-m_\psi(R_t;\theta)'. \) Then \[ \begin{aligned} \hat M^{\circ}(\theta)-M(\theta) &=P_{q,T}H^*(\theta)-M(\theta) \\ &=P_{q,T}\{M(\theta)+\eta(\theta)\}-M(\theta) \\ &=P_{q,T}\eta(\theta)+\{P_{q,T}M(\theta)-M(\theta)\}. \end{aligned} \] The first term is the fitted component obtained by projecting centered residual noise onto the series basis, and Assumption \ref{ass:app-AC-primitive}(v) gives \[ \sup_{\theta\in\ThetaSpace_K} \frac1T\norm{P_{q,T}\eta(\theta)}_F^2=o_p(1). \] It remains to bound the approximation term $P_{q,T}M(\theta)-M(\theta)$. Let $M_K(\theta)$ be the $T\times J$ matrix with $t$-th row $\Pi_{K_q}m_\psi(R_t;\theta)'$. Since $\Pi_{K_q}m_\psi(\cdot;\theta)$ belongs to the span of $q^{K_q}$, there exists a $K_q\times J$ matrix $\Gamma_K(\theta)$ such that \[ M_K(\theta)=\bm Q_T\Gamma_K(\theta). \] Hence each column of $M_K(\theta)$ lies in the column space of $\bm Q_T$. Since $P_{q,T}$ is the orthogonal projection onto that column space, \[ P_{q,T}M_K(\theta)=M_K(\theta). \] Therefore \[ \begin{aligned} P_{q,T}M(\theta)-M(\theta) &= P_{q,T}\{M(\theta)-M_K(\theta)\} - \{M(\theta)-M_K(\theta)\}. \end{aligned} \] 
Because $P_{q,T}$ is an orthogonal projection, it is a contraction in Euclidean norm: \[ \norm{P_{q,T}}_{op} := \sup_{v\neq 0}\frac{\norm{P_{q,T}v}_2}{\norm{v}_2} \le 1 . \] Thus\footnote{Let $A(\theta):=M(\theta)-M_K(\theta)$. Since $M_K(\theta)$ lies in the column space of $\bm Q_T$, $P_{q,T}M_K(\theta)=M_K(\theta)$. Therefore \[ P_{q,T}M(\theta)-M(\theta) = P_{q,T}\{M_K(\theta)+A(\theta)\} - \{M_K(\theta)+A(\theta)\} = P_{q,T}A(\theta)-A(\theta). \] By the triangle inequality, \[ \norm{P_{q,T}A(\theta)-A(\theta)}_F \le \norm{P_{q,T}A(\theta)}_F+\norm{A(\theta)}_F . \] The operator-norm bound $\norm{P_{q,T}}_{op}\le1$ implies $\norm{P_{q,T}A(\theta)}_F\le\norm{A(\theta)}_F$, because left multiplication by $P_{q,T}$ applies the projection separately to each column of $A(\theta)$. Hence \[ \norm{P_{q,T}A(\theta)-A(\theta)}_F \le 2\norm{A(\theta)}_F . \] Squaring both sides and dividing by $T$ yields the stated inequality.}
\[ \frac1T\norm{P_{q,T}M(\theta)-M(\theta)}_F^2 \le \frac{4}{T}\norm{M(\theta)-M_K(\theta)}_F^2. \] 
By Assumption \ref{ass:app-AC-primitive}(iv) and the definition of $M_K(\theta)$, \[ \sup_{\theta\in\ThetaSpace_K} \frac1T\norm{M(\theta)-M_K(\theta)}_F^2 = \sup_{\theta\in\ThetaSpace_K} \frac1T\sum_{t=1}^T \norm{\Pi_{K_q}m_\psi(R_t;\theta)-m_\psi(R_t;\theta)}^2 = o_p(1). \] Together with the previous bound, we can see that \[ \sup_{\theta\in\ThetaSpace_K} \frac1T\norm{P_{q,T}M(\theta)-M(\theta)}_F^2=o_p(1). \] Now recall $\hat M^{\circ}(\theta)-M(\theta) = P_{q,T}\eta(\theta)+\{P_{q,T}M(\theta)-M(\theta)\}.$ Using $\norm{A+B}_F^2\le 2\norm{A}_F^2+2\norm{B}_F^2$, Assumption \ref{ass:app-AC-primitive}(v), and the preceding display imply \[ \sup_{\theta\in\ThetaSpace_K} \frac1T\norm{\hat M^{\circ}(\theta)-M(\theta)}_F^2=o_p(1). \] Equivalently, \begin{equation} \label{eq:oracle-fitted-L2-new} \sup_{\theta\in\ThetaSpace_K} \frac1T\sum_{t=1}^T \norm{\hat m_{\psi,K_q}^{\circ}(R_t;\theta)-m_\psi(R_t;\theta)}^2=o_p(1). \end{equation}

It remains to pass from fitted-mean convergence to criterion convergence. Define \[ \Delta_t(\theta):=\hat m_{\psi,K_q}^{\circ}(R_t;\theta)-m_\psi(R_t;\theta), \qquad m_t(\theta):=m_\psi(R_t;\theta), \qquad S_t:=\Sigma(R_t)^{-1}, \qquad \hat S_t:=\hat\Sigma(R_t)^{-1}. \] By Assumption \ref{ass:app-AC-primitive}(vi), uniform positive definiteness of $\Sigma$ and uniform convergence of $\hat\Sigma$ imply \[ \sup_{t\le T}\norm{\hat S_t-S_t}=o_p(1), \qquad \sup_{t\le T}\norm{\hat S_t}=O_p(1). \] Since $\hat m_{\psi,K_q}^{\circ}(R_t;\theta)=m_t(\theta)+\Delta_t(\theta)$, \[ \begin{aligned} &\hat m_{\psi,K_q}^{\circ}(R_t;\theta)'\hat S_t\hat m_{\psi,K_q}^{\circ}(R_t;\theta) - m_t(\theta)'S_t m_t(\theta) \\ &\quad = 2\Delta_t(\theta)'\hat S_t m_t(\theta) + \Delta_t(\theta)'\hat S_t\Delta_t(\theta) + m_t(\theta)'(\hat S_t-S_t)m_t(\theta). \end{aligned} \] We bound the three terms after averaging over $t$. First, by Cauchy--Schwarz\footnote{Specifically, \begin{align*}
    \left| \frac{1}{T} \sum_{t=1}^T 2 \Delta_t(\theta)' \hat{S}_t m_t(\theta) \right| & \le \frac{2}{T} \sum_{t=1}^T |\Delta_t(\theta)' \hat{S}_t m_t(\theta)| \le \frac{2}{T} \sum_{t=1}^T \| \Delta_t(\theta)\| \| \hat{S}_t\| \|m_t(\theta) \| \le 2 \sup_{s \le T} \|\hat{S}_s\| \frac{1}{T} \sum_{t=1}^T \|\Delta_t(\theta) \| \|m_t(\theta)\|
\end{align*}
then apply Cauchy--Schwarz to the remaining scalar sum and take the supremum over $\ThetaSpace_K$ of both sides.}
, \[ \begin{aligned} &\sup_{\theta\in\ThetaSpace_K} \left| \frac1T\sum_{t=1}^T 2\Delta_t(\theta)'\hat S_t m_t(\theta) \right| \\ &\quad\le 2\sup_{t\le T}\norm{\hat S_t} \left\{ \sup_{\theta\in\ThetaSpace_K} \frac1T\sum_{t=1}^T\norm{\Delta_t(\theta)}^2 \right\}^{1/2} \left\{ \sup_{\theta\in\ThetaSpace_K} \frac1T\sum_{t=1}^T\norm{m_t(\theta)}^2 \right\}^{1/2} =o_p(1). \end{aligned} \] Here the first bracketed term is $o_p(1)$ by \eqref{eq:oracle-fitted-L2-new}. 
The second bracketed term is $O_p(1)$ by Assumptions \ref{ass:app-AC-primitive}(v)--(vii).\footnote{To see this, write $m_t(\theta):=m_\psi(R_t;\theta)$ and $S_t:=\Sigma(R_t)^{-1}$. Since $\Sigma(r)$ is uniformly positive definite, there exists $c_S>0$ such that \[ m_t(\theta)'S_t m_t(\theta)\ge c_S\norm{m_t(\theta)}^2 \] for all $t$ and $\theta$. Hence \[ \sup_{\theta\in\ThetaSpace_K} \frac1T\sum_{t=1}^T\norm{m_t(\theta)}^2 \le c_S^{-1} \sup_{\theta\in\ThetaSpace_K} \frac1T\sum_{t=1}^T m_t(\theta)'S_t m_t(\theta). \] By Assumption \ref{ass:app-AC-primitive}(vii), the right-hand side is bounded by the corresponding population quadratic plus $o_p(1)$. The population quadratic is uniformly bounded because $S_t$ is uniformly bounded and, by conditional Jensen's inequality, \[ \E\norm{m_\psi(R_t;\theta)}^2 = \E\norm{\E[h_t^*(\theta)\mid R_t]}^2 \le \E\norm{h_t^*(\theta)}^2, \] which is uniformly bounded by Assumption \ref{ass:app-AC-primitive}(v). Therefore \( \sup_{\theta\in\ThetaSpace_K} T^{-1}\sum_{t=1}^T\norm{m_t(\theta)}^2=O_p(1). \)}
Second, \[ \begin{aligned} \sup_{\theta\in\ThetaSpace_K} \left| \frac1T\sum_{t=1}^T \Delta_t(\theta)'\hat S_t\Delta_t(\theta) \right| &\le \sup_{t\le T}\norm{\hat S_t} \sup_{\theta\in\ThetaSpace_K} \frac1T\sum_{t=1}^T\norm{\Delta_t(\theta)}^2 \\ &=o_p(1). \end{aligned} \] Third, Assumption \ref{ass:app-AC-primitive}(vi) implies $\sup_{s \le T} \|\hat{S}_s - S_s\| = o_p(1)$. So, \[ \begin{aligned} \sup_{\theta\in\ThetaSpace_K} \left| \frac1T\sum_{t=1}^T m_t(\theta)'(\hat S_t-S_t)m_t(\theta) \right| &\le \sup_{t\le T}\norm{\hat S_t-S_t} \sup_{\theta\in\ThetaSpace_K} \frac1T\sum_{t=1}^T\norm{m_t(\theta)}^2 \\ &=o_p(1). \end{aligned} \] Combining the three bounds yields \[ \sup_{\theta\in\ThetaSpace_K} \left| \hat{\mathcal Q}_{\psi,K_q}^{\circ}(\theta) - \frac1T\sum_{t=1}^T m_\psi(R_t;\theta)'S_t m_\psi(R_t;\theta) \right| = o_p(1). \] Finally, Assumption \ref{ass:app-AC-primitive}(vii) gives \[ \sup_{\theta\in\ThetaSpace_K} \left| \frac1T\sum_{t=1}^T m_\psi(R_t;\theta)'S_t m_\psi(R_t;\theta) - \mathcal Q_\psi(\theta) \right| = o_p(1). \] 
Combining the preceding display with Assumption
\ref{ass:app-AC-primitive}(vii), we obtain by triangle inequality
\[
\begin{aligned}
&\sup_{\theta\in\ThetaSpace_K}
\abs{\hat{\mathcal Q}_{\psi,K_q}^{\circ}(\theta)-\mathcal Q_\psi(\theta)} \\
&\le
\sup_{\theta\in\ThetaSpace_K}
\left|
\hat{\mathcal Q}_{\psi,K_q}^{\circ}(\theta)
-
\frac1T\sum_{t=1}^T
m_\psi(R_t;\theta)'S_t m_\psi(R_t;\theta)
\right| 
+
\sup_{\theta\in\ThetaSpace_K}
\left|
\frac1T\sum_{t=1}^T
m_\psi(R_t;\theta)'S_t m_\psi(R_t;\theta)
-
\mathcal Q_\psi(\theta)
\right| \\
&=o_p(1).
\end{aligned}
\]
This proves the desired uniform convergence.
\end{proof}

\begin{assumption}[Feasible-profile regularity] \label{ass:app-profile-error-negligible} The profile error is uniformly negligible: \[ e_T:=\sup_{g\in\G_{K_g}}\max_{t\le T} \norm{\hat a_t(g)-a_t^*(g)} = o_p(1). \]
\end{assumption}
Assumption \ref{ass:app-profile-error-negligible} is the uniform profile consistency condition established in Proposition \ref{prop:profile-consistency-main}. It is restated here only to keep the feasible-oracle comparison self-contained. 

\begin{proposition}[Profile-error negligibility] \label{prop:app-profile-negligible} Under Assumptions \ref{ass:app-AC-primitive} and \ref{ass:app-profile-error-negligible}, \[ \sup_{\theta\in\ThetaSpace_K} \abs{\hat{\mathcal Q}_{\psi,K_q}(\theta)-\hat{\mathcal Q}_{\psi,K_q}^{\circ}(\theta)} = o_p(1). \] Consequently, \[ \sup_{\theta\in\ThetaSpace_K} \abs{\hat{\mathcal Q}_{\psi,K_q}(\theta)-\mathcal Q_\psi(\theta)} = o_p(1). \] 
\end{proposition}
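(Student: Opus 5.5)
The plan is to compare the feasible and oracle series fits directly, exploiting that they differ only through the profile error. For $\theta=(g,\psi)\in\ThetaSpace_K$, let $\hat H(\theta)$ be the $T\times J$ matrix with $t$-th row $\hat h_t(\theta)'$, and let $H^*(\theta)$ be its oracle analogue. The $\psi$ terms cancel, so
\[
    \hat H(\theta)-H^*(\theta)=D(g),
\]
where $D(g)$ has $t$-th row $\{\hat a_t(g)-a_t^*(g)\}'$. In particular, $D$ depends only on $g$, and $\norm{D(g)}_F^2\le T e_T^2$.

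\emph{Step 1 (fitted-value bound).} The feasible fitted conditional means are $\hat M(\theta)=P_{q,T}\hat H(\theta)$, and the oracle ones are $\hat M^\circ(\theta)=P_{q,T}H^*(\theta)$. Hence $\hat M(\theta)-\hat M^\circ(\theta)=P_{q,T}D(g)$. Since $P_{q,T}$ is an orthogonal projection, it is a contraction in the Frobenius norm, and so
\[
    \sup_{\theta\in\ThetaSpace_K}\frac1T\sum_{t=1}^T\norm{\hat m_{\psi,K_q}(R_t;\theta)-\hat m^\circ_{\psi,K_q}(R_t;\theta)}^2
    \le
    \frac1T\norm{D(g)}_F^2\le e_T^2=o_p(1),
\]
by Assumption \ref{ass:app-profile-error-negligible}. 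Note that this bound needs no growth condition on $K_q$, because the projection cannot inflate the empirical $L_2$ norm. This is the key point that makes the price side easier than the micro block.

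\emph{Step 2 (criterion bound).} Write $\hat m_t=\hat m^\circ_t+d_t$ and $\hat S_t=\hat\Sigma(R_t)^{-1}$. Expanding the quadratic form gives
\[
    \hat{\mathcal Q}_{\psi,K_q}(\theta)-\hat{\mathcal Q}^\circ_{\psi,K_q}(\theta)=\frac1T\sum_{t=1}^T\left(2d_t'\hat S_t\hat m^\circ_t+d_t'\hat S_td_t\right).
\]
As in the proof of Proposition \ref{prop:app-oracle-smd}, Assumption \ref{ass:app-AC-primitive}(vi) gives $\sup_t\norm{\hat S_t}=O_p(1)$. The quadratic term is therefore at most $O_p(1)e_T^2$. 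For the cross term, apply Cauchy--Schwarz; it is bounded by
\[
    2\sup_t\norm{\hat S_t}\,e_T\left\{\sup_\theta T^{-1}\textstyle\sum_t\norm{\hat m^\circ_t(\theta)}^2\right\}^{1/2}.
\]

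\emph{Step 3 (the remaining factor).} The main obstacle is showing that $\sup_\theta T^{-1}\sum_t\norm{\hat m^\circ_t(\theta)}^2=O_p(1)$. This is not stated directly as an assumption; I will obtain it from the triangle inequality in the empirical $L_2$ norm. First, display \eqref{eq:oracle-fitted-L2-new} gives that $\sup_\theta T^{-1}\sum_t\norm{\hat m^\circ_t-m_t}^2=o_p(1)$. Second, the footnote argument in the proof of Proposition \ref{prop:app-oracle-smd}, which uses (v)--(vii), gives $\sup_\theta T^{-1}\sum_t\norm{m_t(\theta)}^2=O_p(1)$. Combining the two yields the required $O_p(1)$ bound, so the cross term is $O_p(1)\cdot e_T=o_p(1)$. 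This proves the first display.

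\emph{Step 4 (conclusion).} The second display follows from the triangle inequality, combining the first display with Proposition \ref{prop:app-oracle-smd}.

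As a fallback in case the empirical-norm control in Step 3 were unavailable, the oracle fitted values could instead be bounded directly. Since $\norm{P_{q,T}H^*}_F\le\norm{H^*}_F$, and $h_t^*(\theta)$ is uniformly bounded (because $\Acal$ is compact and $\bfPsi$ is bounded), we get $T^{-1}\norm{\hat M^\circ(\theta)}_F^2\le\sup_{t,\theta}\norm{h_t^*(\theta)}^2<\infty$. This alternative is actually simpler, and I would use it as the primary route if boundedness is taken from the function-space assumptions.
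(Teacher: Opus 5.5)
Your proposal is correct and follows the paper's proof essentially step by step. The steps match: the contraction bound $\norm{P_{q,T}D(g)}_F^2\le Te_T^2$, the quadratic expansion controlled by $\sup_t\norm{\hat S_t}=O_p(1)$, empirical-$L_2$ boundedness of the oracle fitted values from \eqref{eq:oracle-fitted-L2-new} together with the $O_p(1)$ bound on $m_\psi$ established in the proof of Proposition \ref{prop:app-oracle-smd}, and finally the triangle inequality. Your fallback bound $T^{-1}\norm{P_{q,T}H^*(\theta)}_F^2\le\sup_{t,\theta}\norm{h_t^*(\theta)}^2$ is also valid and more elementary, but it relies on uniform boundedness of $\Acal$ and $\bfPsi$ (with $\ThetaSpace_K\subseteq\ThetaSpace$), which is not among the two assumptions the proposition cites; the paper's route is the one that fits the stated hypotheses.
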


\begin{proof} Write $\theta=(g,\psi)$ and define \[ e_t(g):=\hat a_t(g)-a_t^*(g), \qquad e_T:=\sup_{g\in\G_{K_g}}\max_{t\le T}\norm{e_t(g)}. \] Let $E_g$ be the $T\times J$ matrix with $t$-th row $e_t(g)'$. The feasible and oracle implied shocks differ only because $\hat a_t(g)$ replaces $a_t^*(g)$. The matrices $\hat M(\theta)$ and $\hat M^{\circ}(\theta)$ collect the feasible and oracle fitted conditional moments evaluated at the sample values $R_1,\ldots,R_T$. Since both fitted moments are obtained by projecting the corresponding generated shocks onto the same series space, their difference is \[ \hat M(\theta)-\hat M^{\circ}(\theta)=P_{q,T}E_g. \]
where $\hat M(\theta)$ and $\hat M^{\circ}(\theta)$ are the $T\times J$ matrices with $t$-th rows $\hat m_{\psi,K_q}(R_t;\theta)'$ and $\hat m_{\psi,K_q}^{\circ}(R_t;\theta)'$, respectively. Since $P_{q,T}$ is an orthogonal projection, \[ \norm{P_{q,T}E_g}_F^2 \le \norm{E_g}_F^2 = \sum_{t=1}^T\norm{e_t(g)}^2 \le T e_T^2. \] Hence \begin{equation} \label{eq:profile-error-L2-new} \sup_{\theta\in\ThetaSpace_K} \frac1T\sum_{t=1}^T \norm{\hat m_{\psi,K_q}(R_t;\theta)-\hat m_{\psi,K_q}^{\circ}(R_t;\theta)}^2 \le e_T^2 = o_p(1). \end{equation} Next define \[ d_t(\theta):=\hat m_{\psi,K_q}(R_t;\theta)-\hat m_{\psi,K_q}^{\circ}(R_t;\theta), \qquad m_t^{\circ}(\theta):=\hat m_{\psi,K_q}^{\circ}(R_t;\theta), \qquad \hat S_t:=\hat\Sigma(R_t)^{-1}. \] By Assumption \ref{ass:app-AC-primitive}(vi), uniform positive definiteness of $\hat\Sigma$ implies \[ \sup_{t\le T}\norm{\hat S_t}=O_p(1). \] 
Since $\hat m_{\psi,K_q}(R_t;\theta)=m_t^{\circ}(\theta)+d_t(\theta)$, we have \[ \begin{aligned} \hat{\mathcal Q}_{\psi,K_q}(\theta) &= \frac1T\sum_{t=1}^T \{m_t^{\circ}(\theta)+d_t(\theta)\}'\hat S_t \{m_t^{\circ}(\theta)+d_t(\theta)\}, \\ \hat{\mathcal Q}_{\psi,K_q}^{\circ}(\theta) &= \frac1T\sum_{t=1}^T m_t^{\circ}(\theta)'\hat S_t m_t^{\circ}(\theta). \end{aligned} \] Therefore, \[ \begin{aligned} \hat{\mathcal Q}_{\psi,K_q}(\theta)-\hat{\mathcal Q}_{\psi,K_q}^{\circ}(\theta) &= \frac1T\sum_{t=1}^T \Big[ \{m_t^{\circ}(\theta)+d_t(\theta)\}'\hat S_t \{m_t^{\circ}(\theta)+d_t(\theta)\} - m_t^{\circ}(\theta)'\hat S_t m_t^{\circ}(\theta) \Big] \\ &= \frac1T\sum_{t=1}^T \left[ d_t(\theta)'\hat S_t m_t^{\circ}(\theta) + m_t^{\circ}(\theta)'\hat S_t d_t(\theta) + d_t(\theta)'\hat S_t d_t(\theta) \right]. \end{aligned} \] Since $\hat S_t=\hat\Sigma(R_t)^{-1}$ is symmetric, $m_t^{\circ}(\theta)'\hat S_t d_t(\theta)=d_t(\theta)'\hat S_t m_t^{\circ}(\theta)$. Hence \[ \begin{aligned} \hat{\mathcal Q}_{\psi,K_q}(\theta)-\hat{\mathcal Q}_{\psi,K_q}^{\circ}(\theta) &= \frac1T\sum_{t=1}^T \left[ 2d_t(\theta)'\hat S_t m_t^{\circ}(\theta) + d_t(\theta)'\hat S_t d_t(\theta) \right]. \end{aligned} \]
Taking absolute values, using $|x'Ay|\le \norm{A}\norm{x}\norm{y}$, and then applying Cauchy--Schwarz yields \[ \begin{aligned} &\sup_{\theta\in\ThetaSpace_K} \abs{\hat{\mathcal Q}_{\psi,K_q}(\theta)-\hat{\mathcal Q}_{\psi,K_q}^{\circ}(\theta)} \\ &\le 2\sup_{t\le T}\norm{\hat S_t} \left\{ \sup_{\theta\in\ThetaSpace_K} \frac1T\sum_{t=1}^T\norm{d_t(\theta)}^2 \right\}^{1/2} \left\{ \sup_{\theta\in\ThetaSpace_K} \frac1T\sum_{t=1}^T\norm{m_t^{\circ}(\theta)}^2 \right\}^{1/2} + \sup_{t\le T}\norm{\hat S_t} \sup_{\theta\in\ThetaSpace_K} \frac1T\sum_{t=1}^T\norm{d_t(\theta)}^2. \end{aligned} \] The empirical $L_2$ norm of $d_t(\theta)$ is $o_p(1)$ uniformly by \eqref{eq:profile-error-L2-new}. The oracle fitted moments are uniformly bounded in empirical $L_2$.\footnote{By \eqref{eq:oracle-fitted-L2-new}, \[ \sup_{\theta\in\ThetaSpace_K} \frac1T\sum_{t=1}^T \norm{\hat m_{\psi,K_q}^{\circ}(R_t;\theta)-m_\psi(R_t;\theta)}^2=o_p(1). \] Moreover, the proof of Proposition \ref{prop:app-oracle-smd} established \[ \sup_{\theta\in\ThetaSpace_K} \frac1T\sum_{t=1}^T \norm{m_\psi(R_t;\theta)}^2=O_p(1). \] Therefore, using $\norm{x+y}^2\le 2\norm{x}^2+2\norm{y}^2$, \[ \begin{aligned} \sup_{\theta\in\ThetaSpace_K} \frac1T\sum_{t=1}^T \norm{\hat m_{\psi,K_q}^{\circ}(R_t;\theta)}^2 &\le 2\sup_{\theta\in\ThetaSpace_K} \frac1T\sum_{t=1}^T \norm{\hat m_{\psi,K_q}^{\circ}(R_t;\theta)-m_\psi(R_t;\theta)}^2 + 2\sup_{\theta\in\ThetaSpace_K} \frac1T\sum_{t=1}^T \norm{m_\psi(R_t;\theta)}^2 = O_p(1). \end{aligned} \]} Thus the first term in the preceding bound is \[ O_p(1)\cdot o_p(1)^{1/2}\cdot O_p(1)^{1/2}=o_p(1), \] and the second term is \[ O_p(1)\cdot o_p(1)=o_p(1). \] Hence \[ \sup_{\theta\in\ThetaSpace_K} \abs{\hat{\mathcal Q}_{\psi,K_q}(\theta)-\hat{\mathcal Q}_{\psi,K_q}^{\circ}(\theta)} = o_p(1). \] Finally, by the triangle inequality and Proposition \ref{prop:app-oracle-smd}, \[ \begin{aligned} \sup_{\theta\in\ThetaSpace_K} \abs{\hat{\mathcal Q}_{\psi,K_q}(\theta)-\mathcal Q_\psi(\theta)} &\le \sup_{\theta\in\ThetaSpace_K} \abs{\hat{\mathcal Q}_{\psi,K_q}(\theta)-\hat{\mathcal Q}_{\psi,K_q}^{\circ}(\theta)} + \sup_{\theta\in\ThetaSpace_K} \abs{\hat{\mathcal Q}_{\psi,K_q}^{\circ}(\theta)-\mathcal Q_\psi(\theta)} = o_p(1). \end{aligned} \] This proves the result. 
\end{proof}

\begin{proposition}[Uniform convergence of the full moving criterion, restated]
\label{prop:app-full-UC}
Under Assumption \ref{ass:criterion-UC-main} (implied by \ref{ass:app-population-id-primitive}(i), \ref{ass:app-micro-UC},
\ref{ass:app-micro-profile-negligible},
\ref{ass:app-AC-primitive}, and
\ref{ass:app-profile-error-negligible}),
\[
    \sup_{(g,\psi)\in\ThetaSpace_K}
    \abs{\hat{\mathcal Q}_K(g,\psi)-\mathcal Q_K(g,\psi)}=o_p(1).
\]
\end{proposition}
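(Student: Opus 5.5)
The full criterion is additive, $\hat{\mathcal Q}_K=\hat{\mathcal Q}_{g,K_g}+\hat{\mathcal Q}_{\psi,K_q}$ and $\mathcal Q_K=\mathcal Q_{g,K_g}+\mathcal Q_\psi$. So I will bound the two blocks separately and combine them with the triangle inequality:
\[
    \sup_{\theta\in\ThetaSpace_K}\abs{\hat{\mathcal Q}_K(\theta)-\mathcal Q_K(\theta)}
    \le
    \sup_{g\in\G_{K_g}}\abs{\hat{\mathcal Q}_{g,K_g}(g)-\mathcal Q_{g,K_g}(g)}
    +
    \sup_{\theta\in\ThetaSpace_K}\abs{\hat{\mathcal Q}_{\psi,K_q}(\theta)-\mathcal Q_\psi(\theta)}.
\]
The micro term depends on $\theta$ only through $g$, so the supremum over $\ThetaSpace_K$ reduces to one over $\G_{K_g}$. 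Each piece then only needs to be shown $o_p(1)$.

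For the micro block I will insert the oracle criterion $\hat{\mathcal Q}^{\circ}_{g,K_g}$ and split once more:
\[
    \abs{\hat{\mathcal Q}_{g,K_g}-\mathcal Q_{g,K_g}}
    \le
    \abs{\hat{\mathcal Q}_{g,K_g}-\hat{\mathcal Q}^{\circ}_{g,K_g}}
    +
    \abs{\hat{\mathcal Q}^{\circ}_{g,K_g}-\mathcal Q_{g,K_g}}.
\]
\begin{itemize}
\item The first term, uniformly in $g$, is $o_p(1)$ by Proposition~\ref{prop:app-micro-profile-negligible}, using Assumptions~\ref{ass:app-population-id-primitive}(i) and~\ref{ass:app-micro-profile-negligible}.
\item The second term is $o_p(1)$ by Proposition~\ref{prop:app-oracle-micro-criterion-UC}, using Assumptions~\ref{ass:app-population-id-primitive}(i) and~\ref{ass:app-micro-UC}.
\end{itemize}
For the price-side block, Proposition~\ref{prop:app-profile-negligible} already delivers $\sup_{\theta}\abs{\hat{\mathcal Q}_{\psi,K_q}(\theta)-\mathcal Q_\psi(\theta)}=o_p(1)$ under Assumptions~\ref{ass:app-AC-primitive} and~\ref{ass:app-profile-error-negligible}. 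That proposition itself chains the feasible-to-oracle comparison with Proposition~\ref{prop:app-oracle-smd}.

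The only real point to check is that the primitive assumptions listed in the parenthetical are exactly what Assumption~\ref{ass:criterion-UC-main} abbreviates, and that they are mutually consistent. In particular, Assumption~\ref{ass:app-profile-error-negligible} is not an extra restriction. It is the conclusion of Proposition~\ref{prop:app-profile-consistency-restated}, which holds under Assumption~\ref{ass:app-profile-primitive} together with the sampling scheme. I will note this explicitly, so that the only genuinely high-level inputs are the oracle uniform laws and the growth-rate conditions.

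I do not expect a substantive obstacle, since the argument is bookkeeping. The one care point is that the micro-block results use $\lambda_{\max}(W_{g,K_g})=O(1)$ uniformly in $K_g$, which is supplied by Assumption~\ref{ass:app-population-id-primitive}(i). Likewise, the price-side results use uniform boundedness of $\hat\Sigma^{-1}$, which follows from Assumption~\ref{ass:app-AC-primitive}(vi).
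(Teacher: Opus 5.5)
Your proposal is correct and follows the paper's proof essentially step for step. Like the paper, you split the full criterion into the micro and price-side blocks and dispatch the price-side block directly via Proposition~\ref{prop:app-profile-negligible}; for the micro block you insert the oracle criterion $\hat{\mathcal Q}^{\circ}_{g,K_g}$ and invoke Propositions~\ref{prop:app-micro-profile-negligible} and~\ref{prop:app-oracle-micro-criterion-UC}, while your side remarks on Assumption~\ref{ass:app-profile-error-negligible}, $\lambda_{\max}(W_{g,K_g})$, and $\hat\Sigma^{-1}$ are accurate but not needed for the argument.
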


\begin{proof}
Write
\[
\begin{aligned}
    \hat{\mathcal Q}_K(g,\psi)-\mathcal Q_K(g,\psi)
    &=
    \{\hat{\mathcal Q}_{g,K_g}(g)-\mathcal Q_{g,K_g}(g)\} +
    \{\hat{\mathcal Q}_{\psi,K_q}(g,\psi)-\mathcal Q_\psi(g,\psi)\}.
\end{aligned}
\]
The price-side term is uniformly $o_p(1)$ by Proposition
\ref{prop:app-profile-negligible}. For the micro term, add and subtract
the oracle micro criterion:
\[
\begin{aligned}
\sup_{g\in\G_{K_g}}
\abs{\hat{\mathcal Q}_{g,K_g}(g)-\mathcal Q_{g,K_g}(g)}
&\le
\sup_{g\in\G_{K_g}}
\abs{\hat{\mathcal Q}_{g,K_g}(g)-\hat{\mathcal Q}_{g,K_g}^{\circ}(g)}
+
\sup_{g\in\G_{K_g}}
\abs{\hat{\mathcal Q}_{g,K_g}^{\circ}(g)-\mathcal Q_{g,K_g}(g)} .
\end{aligned}
\]
The first term is $o_p(1)$ by Proposition
\ref{prop:app-micro-profile-negligible}. The second term is $o_p(1)$ by
Proposition \ref{prop:app-oracle-micro-criterion-UC}. Combining these
bounds gives
\[
    \sup_{(g,\psi)\in\ThetaSpace_K}
    \abs{
    \hat{\mathcal Q}_K(g,\psi)-\mathcal Q_K(g,\psi)
    }
    =
    o_p(1).
\]
\end{proof}

% =====================================================================
\section{Proof of the Main Consistency Theorem}
\label{app:main-consistency-proof}

\begin{theorem}[Consistency of the profiled SMD estimator, restated]
\label{thm:app-consistency-restated}
Suppose the maintained demand model holds. Suppose further that the identification conditions in Section \ref{sec:Identification}, the sampling condition in Assumption \ref{ass:sampling-main}, Assumptions \ref{ass:sieve-approx-main}--\ref{ass:profile-main} and Assumptions \ref{ass:population-id-main}--\ref{ass:criterion-UC-main} hold. Let $(\hat g,\hat\psi)$ satisfy \eqref{eq:estimator-main}. Then
\[
    \norm{\hat g-g_0}_{\infty}+\norm{\hat\psi-\psi_0}_{\infty}=o_p(1).
\]
Moreover,
\[
    \max_{t\le T}\norm{\hat a_t-a_{0t}}=o_p(1),
    \qquad
    \max_{t\le T}\norm{\hat h_t-h_{0t}}=o_p(1).
\]
Consequently,
\[
    \frac1T\sum_{t=1}^T\norm{\hat a_t-a_{0t}}^2=o_p(1),
    \qquad
    \frac1T\sum_{t=1}^T\norm{\hat h_t-h_{0t}}^2=o_p(1).
\]
\end{theorem}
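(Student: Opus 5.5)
The argument is the standard sieve extremum-estimation consistency argument, adapted to the moving criterion $\mathcal Q_K$. I would show $\mathcal Q_K(\hat\theta)=o_p(1)$ and then use the separation condition in Assumption \ref{ass:population-id-main} to convert this into sup-norm closeness to $\theta_0$.

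\textbf{Step 1: the criterion at the estimate.} Write $\hat\theta=(\hat g,\hat\psi)\in\ThetaSpace_K$ and let $\theta_K^0$ be the criterion-preserving approximant from Assumption \ref{ass:sieve-approx-main}. Let $\Delta_T:=\sup_{\theta\in\ThetaSpace_K}|\hat{\mathcal Q}_K(\theta)-\mathcal Q_K(\theta)|$, which is $o_p(1)$ by Proposition \ref{prop:full-UC-main}. Then, using \eqref{eq:estimator-main},
\[
\mathcal Q_K(\hat\theta)\le \hat{\mathcal Q}_K(\hat\theta)+\Delta_T\le \hat{\mathcal Q}_K(\theta_K^0)+o_p(1)+\Delta_T\le \mathcal Q_K(\theta_K^0)+2\Delta_T+o_p(1).
\]
By \eqref{eq:criterion-preservation-main}, $\mathcal Q_K(\theta_K^0)\to0$. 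Hence $\mathcal Q_K(\hat\theta)=o_p(1)$.

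\textbf{Step 2: separation.} Fix $\epsilon>0$. By \eqref{eq:moving-separation-main}, there are $\delta>0$ and $K^*$ such that for all $K$ beyond $K^*$ along the admissible sequence,
\[
\inf_{\theta\in\ThetaSpace_K:\norm{\theta-\theta_0}_\ThetaSpace\ge\epsilon}\mathcal Q_K(\theta)\ge\delta.
\]
Since $K\to\infty$ as $T,n_{\min}\to\infty$, the event $\{\norm{\hat\theta-\theta_0}_\ThetaSpace\ge\epsilon\}$ is eventually contained in $\{\mathcal Q_K(\hat\theta)\ge\delta\}$. The probability of the latter tends to zero by Step 1. This gives $\norm{\hat g-g_0}_\infty+\norm{\hat\psi-\psi_0}_\infty=o_p(1)$.

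\textbf{Step 3: intercepts and shocks.} I would decompose
\[
\hat a_t-a_{0t}=\{\hat a_t(\hat g)-a_t^*(\hat g)\}+\{a_t^*(\hat g)-a_t^*(g_0)\},
\]
using $a_t^*(g_0)=a_{0t}$ from Proposition \ref{prop:app-true-profile}.
\begin{itemize}
\item Since $\hat g\in\G_{K_g}$, the first term is bounded, uniformly in $t$, by $\sup_{g}\max_t\norm{\hat a_t(g)-a_t^*(g)}=o_p(1)$ from Proposition \ref{prop:profile-consistency-main}.
\item The second term is at most $C\norm{\hat g-g_0}_\infty=o_p(1)$ by the Lipschitz condition in Assumption \ref{ass:profile-main}, which is stated on $\G_{K_g}^0$ and so covers $g_0$.
\end{itemize}
For the shocks,
\[
\hat h_t-h_{0t}=(\hat a_t-a_{0t})-\{\hat\psi(P_t,X_t)-\psi_0(P_t,X_t)\}.
\]
The maximum over $t$ is therefore at most $\max_t\norm{\hat a_t-a_{0t}}+\sqrt J\norm{\hat\psi-\psi_0}_\infty=o_p(1)$, since $(P_t,X_t)\in\mathcal D_\psi$. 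The averaged squared statements follow because an average is bounded by the maximum.

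\textbf{Main obstacle.} The only delicate point is Step 2: the separation is a liminf over a moving criterion, and $K$ depends on the sample. I would handle it by noting that the dimension sequence is deterministic, so "eventually" in $K$ translates into "for all sufficiently large samples." Everything else follows from results already stated.
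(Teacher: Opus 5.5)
Your proposal is correct and follows the paper's proof essentially step for step. You first obtain $\mathcal Q_K(\hat\theta)=o_p(1)$ from uniform convergence, approximate minimization and the criterion-preserving approximant $\theta_K^0$, then convert this into sup-norm consistency via the liminf separation condition, and finally bound $\hat a_t-a_{0t}$ using uniform profile consistency, the Lipschitz bound on $\G_{K_g}^0$ and $a_t^*(g_0)=a_{0t}$; your $\sqrt J$ factor in the shock bound is even slightly more careful than the paper's, since $\norm{\cdot}_\infty$ is a maximum over components while the market-level errors are measured in the Euclidean norm.
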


\begin{proof}
The proof has two parts\footnote{The proof follows the standard extremum-estimation consistency logic; see \cite{newey_chapter_1994}.}. First, we show that the population criterion, evaluated at the estimated parameters, vanishes. Second, we use separation of the population criterion to conclude that any sequence with vanishing criterion must be close to the true parameter.

Towards this, define
\[
    \hat\theta:=(\hat g,\hat\psi),
    \qquad
    \theta_0:=(g_0,\psi_0),
    \qquad
    \theta_K^0:=(g_{0,K_g},\psi_{0,K_\psi}),
\]
where $\theta_K^0\in\ThetaSpace_K$ is the approximating sequence from Assumption \ref{ass:sieve-approx-main}. Thus
\begin{equation}
\label{eq:app-main-sieve-approx}
    \norm{\theta_K^0-\theta_0}_{\ThetaSpace}
    \to0,
    \qquad
    \mathcal Q_K(\theta_K^0)\to0.
\end{equation}
Let
\[
    \delta_T:=
    \sup_{\theta\in\ThetaSpace_K}
    \abs{\hat{\mathcal Q}_K(\theta)-\mathcal Q_K(\theta)}.
\]
By Proposition \ref{prop:app-full-UC}, $\delta_T=o_p(1)$.

Since $\hat\theta$ satisfies the approximate-minimization condition
\eqref{eq:estimator-main} and $\theta_K^0\in\ThetaSpace_K$,
\[
    \hat{\mathcal Q}_K(\hat\theta)
    \le
    \inf_{\theta\in\ThetaSpace_K}\hat{\mathcal Q}_K(\theta)+o_p(1)
    \le
    \hat{\mathcal Q}_K(\theta_K^0)+o_p(1).
\]
Using the definition of $\delta_T$ at both $\hat\theta$ and $\theta_K^0$, \[ \mathcal Q_K(\hat\theta) \le \hat{\mathcal Q}_K(\hat\theta)+\delta_T, \qquad \hat{\mathcal Q}_K(\theta_K^0) \le \mathcal Q_K(\theta_K^0)+\delta_T. \] Combining these inequalities with approximate minimization gives \( \mathcal Q_K(\hat\theta) \le \mathcal Q_K(\theta_K^0)+2\delta_T+o_p(1). \)
By criterion preservation in \eqref{eq:app-main-sieve-approx} and $\delta_T=o_p(1)$,
\begin{equation}
\label{eq:app-main-QKhat-vanishes}
    \mathcal Q_K(\hat\theta)=o_p(1).
\end{equation}
This is the step where the moving-criterion structure matters. We do not use ordinary continuity of a fixed population criterion; instead, we use the criterion-preserving approximating sequence required by Assumption \ref{ass:sieve-approx-main}.

We now use moving population separation to convert convergence of the criterion into convergence of the parameter. Fix $\epsilon>0$ and define
\[
    c_{\epsilon,K}
    :=
    \inf_{\theta\in\ThetaSpace_K:
    \norm{\theta-\theta_0}_{\ThetaSpace}\ge\epsilon}
    \mathcal Q_K(\theta).
\]
By Assumption \ref{ass:population-id-main},
\[
    L_\epsilon := \liminf_{K\to\infty}c_{\epsilon,K}>0.
\]
Choose $c_\epsilon \in (0, L_\epsilon)$. Then there exists $K_\epsilon < \infty$ such that 
$c_{\epsilon,K}\ge c_\epsilon$ for all $K \ge K_\epsilon$. Since $\hat\theta\in\ThetaSpace_K$,
\[
    \left\{ 
    \norm{\hat\theta-\theta_0}_{\ThetaSpace}\ge\epsilon
    \right\}
    \subseteq
    \left\{
    \mathcal Q_K(\hat\theta)\ge c_{\epsilon,K}
    \right\}
    \subseteq
    \left\{
    \mathcal Q_K(\hat\theta)\ge c_\epsilon
    \right\}
\]
for all sufficiently large $K$\footnote{The braces denote events, i.e. sets of sample outcomes. The first event is the event that the realized estimator is at least $\epsilon$ away from the truth. On that event, the realized value $\hat\theta$ belongs to the parameter set \[ \{\theta\in\ThetaSpace_K:\norm{\theta-\theta_0}_{\ThetaSpace}\ge\epsilon\}. \] Since $c_{\epsilon,K}$ is the infimum of $\mathcal Q_K(\theta)$ over that parameter set, it follows that $\mathcal Q_K(\hat\theta)\ge c_{\epsilon,K}$. The second inclusion follows from $c_{\epsilon,K}\ge c_\epsilon$ for all sufficiently large $K$.}. Therefore, by \eqref{eq:app-main-QKhat-vanishes},
\[
    \Prob\!\left(
    \norm{\hat\theta-\theta_0}_{\ThetaSpace}\ge\epsilon
    \right)
    \le
    \Prob\!\left(
    \mathcal Q_K(\hat\theta)\ge c_\epsilon
    \right)
    \to0.
\]
Since $\epsilon>0$ was arbitrary,
\[
    \norm{\hat\theta-\theta_0}_{\ThetaSpace}
    =
    \norm{\hat g-g_0}_{\infty}
    +
    \norm{\hat\psi-\psi_0}_{\infty}
    =
    o_p(1).
\]

It remains to prove consistency of the market-level objects. By definition, $\hat a_t=\hat a_t(\hat g)$. For each $t$, \[ \begin{aligned} \norm{\hat a_t-a_{0t}} &\le \norm{\hat a_t(\hat g)-a_t^*(\hat g)} + \norm{a_t^*(\hat g)-a_t^*(g_0)} + \norm{a_t^*(g_0)-a_{0t}} . \end{aligned} \] Taking the maximum over $t\le T$ gives \[ \begin{aligned} \max_{t\le T}\norm{\hat a_t-a_{0t}} &\le \sup_{g\in\G_{K_g}}\max_{t\le T} \norm{\hat a_t(g)-a_t^*(g)} + \max_{t\le T} \norm{a_t^*(\hat g)-a_t^*(g_0)} \\ &\quad+ \max_{t\le T} \norm{a_t^*(g_0)-a_{0t}} . \end{aligned} \] The first term is $o_p(1)$ by Proposition \ref{prop:profile-consistency-main}. The second term is bounded by $C_a\norm{\hat g-g_0}_\infty=o_p(1)$ by Proposition \ref{prop:app-profile-lipschitz}, which applies on $\G_{K_g}^0=\G_{K_g}\cup\{g_0\}$. The third term is zero by Proposition \ref{prop:app-true-profile}. Hence \[ \max_{t\le T}\norm{\hat a_t-a_{0t}}=o_p(1). \] Next, by the definitions $\hat h_t=\hat a_t-\hat\psi(P_t,X_t)$ and $h_{0t}=a_{0t}-\psi_0(P_t,X_t)$, \[ \hat h_t-h_{0t} = (\hat a_t-a_{0t}) - \{\hat\psi(P_t,X_t)-\psi_0(P_t,X_t)\}. \] Therefore,\[
\begin{aligned}
    \max_{t\le T}\norm{\hat h_t-h_{0t}}
    &\le
    \max_{t\le T}\norm{\hat a_t-a_{0t}}
    +
    \max_{t\le T}
    \norm{\hat\psi(P_t,X_t)-\psi_0(P_t,X_t)}  \\
    &\le
    \max_{t\le T}\norm{\hat a_t-a_{0t}}
    +
    \norm{\hat\psi-\psi_0}_\infty
    =
    o_p(1).
\end{aligned}
\] Thus \[ \max_{t\le T}\norm{\hat h_t-h_{0t}}=o_p(1). \] The average $L_2$ statements follow immediately from the uniform statements: \[ \frac1T\sum_{t=1}^T\norm{\hat a_t-a_{0t}}^2 \le \max_{t\le T}\norm{\hat a_t-a_{0t}}^2 = o_p(1), \] and \[ \frac1T\sum_{t=1}^T\norm{\hat h_t-h_{0t}}^2 \le \max_{t\le T}\norm{\hat h_t-h_{0t}}^2 = o_p(1). \] This proves all claims. 
\end{proof}

\clearpage

% =====================================================================

% =====================================================================

\clearpage

\section{Post-estimation Industrial Organization Objects}\label{app:io-objects}
% ---------------------------------------------------------------------

This appendix describes how the estimated demand system can be used to compute the standard IO objects that enter counterfactual analysis. Once the structural demand system is consistently recovered, the same estimated functions can be inserted into the formulas for elasticities, diversion ratios, markups, and pass-through.

Fix a market \(t\). For a candidate structural parameter \(\theta=(g,\psi)\) and a market shock \(h_t\), define the individual choice-probability vector
\begin{equation}
\label{eq:io-individual-share}
    s_t(z,p;\theta,h_t)
    :=
    \sigma\!\left(g(z,X_t)+\psi(p,X_t)+h_t\right)
    \in(0,1)^J.
\end{equation}
The corresponding aggregate inside-good share is
\begin{equation}
\label{eq:io-aggregate-share}
    S_t(p;\theta,h_t)
    :=
    \E\!\left[
    s_t(Z_{it},p;\theta,h_t)\mid \Mcal_t
    \right].
\end{equation}
In applications, the conditional expectation can be replaced by the empirical distribution of observed consumers in market \(t\):
\begin{equation}
\label{eq:io-share-estimator}
    \hat S_t(p)
    :=
    \frac1{n_t}\sum_{i=1}^{n_t}
    \sigma\!\left(
    \hat g(Z_{it},X_t)+\hat\psi(p,X_t)+\hat h_t
    \right).
\end{equation}
At the observed price vector \(P_t\), this is the fitted market share. At a counterfactual price vector \(p\), it is the predicted demand holding the recovered structural shock \(\hat h_t\) fixed.

The local price derivatives are obtained by differentiating through the logit map. Let
\[
    D_\psi(p,x)
    :=
    \nabla_p\psi(p,x)
    \in\R^{J\times J},
    \qquad
    [D_\psi(p,x)]_{\ell k}
    =
    \frac{\partial \psi_\ell(p,x)}{\partial p_k}.
\]
For \(u\in\R^J\), recall the multinomial-logit Jacobian
\[
    \Lambda(u):=\diag(\sigma(u))-\sigma(u)\sigma(u)'.
\]
Then the \(J\times J\) matrix of aggregate price derivatives is
\begin{equation}
\label{eq:io-aggregate-derivative}
    G_t(p)
    :=
    \frac{\partial S_t(p;\theta_0,h_{0t})}{\partial p'}
    =
    \E\!\left[
    \Lambda\!\left(g_0(Z_{it},X_t)+\psi_0(p,X_t)+h_{0t}\right)
    D_{\psi,0}(p,X_t)
    \mid \Mcal_t
    \right],
\end{equation}
where the \((j,k)\)-entry is \(\partial S_{jt}(p)/\partial p_k\). The plug-in estimate is
\begin{equation}
\label{eq:io-aggregate-derivative-hat}
    \hat G_t(p)
    :=
    \frac1{n_t}\sum_{i=1}^{n_t}
    \Lambda\!\left(
    \hat g(Z_{it},X_t)+\hat\psi(p,X_t)+\hat h_t
    \right)
    \nabla_p\hat\psi(p,X_t).
\end{equation}

The own- and cross-price elasticities are then
\begin{equation}
\label{eq:io-elasticity}
    \mathcal E_{jk,t}(p)
    :=
    \frac{p_k}{S_{jt}(p)}
    \frac{\partial S_{jt}(p)}{\partial p_k}
    =
    \frac{p_k}{S_{jt}(p)}[G_t(p)]_{jk},
\end{equation}
whenever \(S_{jt}(p)>0\). The diversion ratio from product \(j\) to product \(k\) is
\begin{equation}
\label{eq:io-diversion}
    \mathcal D_{j\to k,t}(p)
    :=
    -
    \frac{\partial S_{kt}(p)/\partial p_j}
    {\partial S_{jt}(p)/\partial p_j}
    =
    -
    \frac{[G_t(p)]_{kj}}{[G_t(p)]_{jj}},
    \qquad k\ne j,
\end{equation}
whenever the own-price derivative in the denominator is nonzero. These formulas are standard IO objects; the difference is that utility function entering these formulas is no longer parametrically specified. 

The same derivative matrix gives markups under static Bertrand-Nash pricing. Let \(\Omega_t\in\{0,1\}^{J\times J}\) be the ownership matrix, where \((\Omega_t)_{jk}=1\) if products \(j\) and \(k\) are owned by the same firm and zero otherwise. Let \(mc_t\) denote the vector of marginal costs. The first-order conditions for multi-product Bertrand pricing can be written as
\begin{equation}
\label{eq:io-bertrand-foc}
    S_t(P_t)+\{\Omega_t\circ G_t(P_t)\}'(P_t-mc_t)=0,
\end{equation}
where \(\circ\) denotes elementwise multiplication. If \(\{\Omega_t\circ G_t(P_t)\}'\) is nonsingular, the implied markup vector is
\begin{equation}
\label{eq:io-markup}
    P_t-mc_t
    =
    -
    [\left\{\Omega_t\circ G_t(P_t)\right\}^{\prime}]^{-1}S_t(P_t).
\end{equation}
Replacing \(S_t\) and \(G_t\) by \(\hat S_t\) and \(\hat G_t\) gives the plug-in markup estimate.

Finally, pass-through can be computed by differentiating the same first-order conditions. Define
\[
    F_t(p,c)
    :=
    S_t(p)+\{\Omega_t\circ G_t(p)\}'(p-c).
\]
At an interior equilibrium \(F_t(P_t,mc_t)=0\). If the Jacobian \(F_{p,t}:=\partial F_t(p,c)/\partial p'|_{(P_t,mc_t)}\) is nonsingular, the local cost pass-through matrix is
\begin{equation}
\label{eq:io-pass-through}
    \Pi_t
    :=
    \frac{\partial P_t}{\partial mc_t'}
    =
    -
    F_{p,t}^{-1}F_{c,t}
    =
    F_{p,t}^{-1}\{\Omega_t\circ G_t(P_t)\}'.
\end{equation}
This object requires second-order price derivatives of demand because \(F_{p,t}\) differentiates the demand derivative matrix \(G_t(p)\). Since \(\psi_0\) is estimated by a smooth sieve, these derivatives can be computed analytically from the estimated basis expansion whenever the basis is differentiable.

The dependence of pass-through on demand curvature can be seen directly from the Jacobian of the pricing first-order conditions. For each component,
\[
    [F_{p,t}]_{j\ell}
    =
    [G_t(P_t)]_{j\ell}
    +
    (\Omega_t)_{\ell j}[G_t(P_t)]_{\ell j}
    +
    \sum_{k=1}^J
    (\Omega_t)_{kj}(P_{kt}-mc_{kt})
    \frac{\partial^2 S_{kt}(P_t)}{\partial p_j\partial p_\ell}.
\]
Thus local pass-through depends on both the slope and the curvature of demand. For a small marginal-cost shock, \eqref{eq:io-pass-through} gives the first-order price response. For a large marginal-cost shock \(\Delta c_t\), the counterfactual price vector solves the nonlinear system
\[
    F_t(p,mc_t+\Delta c_t)=0.
\]
The resulting price change depends on the behavior of \(S_t(p)\), \(G_t(p)\), and the second derivatives of demand along the equilibrium path. This is why flexible estimation of the price-side function matters for counterfactuals. A misspecified linear or low-order demand curve may approximate the observed market shares but still impose the wrong curvature, and hence the wrong pass-through, when costs or taxes change. The estimator developed here is designed to recover these derivative objects consistently under additional smoothness conditions; inference for them is left to future work.

\clearpage

% =====================================================================
\section{Monte Carlo Details}\label{app:simulation-details}

This appendix describes the simulation design underlying Section \ref{sec:Simulations}. It records the exact data-generating process, the basis functions, the numerical implementation, the norm definitions, and the auxiliary Monte Carlo exercises used to audit the two pieces of the estimator separately.

\subsection{Consistency simulation: data-generating process}\label{app:simulation-dgp}

The baseline design has $J=2$ inside goods\footnote{The simulation design uses a scalar consumer covariate, \(d_z=1\), with
\(J=2\) inside goods. This differs from the high-level Berry--Haile
primitive condition \(d_z\ge J\). The Monte Carlo is therefore not intended
as a simulation of the full Berry--Haile nonparametric identification
environment. Rather, it studies the finite-sample behavior of the proposed
estimator under the maintained semi-nonparametric logit specialization.
Because the logit link is known, the product-specific log odds identify
\(g_{0j}(z,x)+a_{0jt}\) directly, and a scalar covariate can shift multiple
product-specific indices through the vector-valued function
\(g_0(z,x)=(g_{01}(z,x),g_{02}(z,x))'\).}. For each market $t=1,\ldots,T$, draw
\[
    X_t\sim N(0,1),
    \qquad
    W_t\sim N(0,I_2),
    \qquad
    h_{0t}\sim N(0,\Sigma_h),
    \qquad
    \nu_t\sim N(0,\Sigma_\nu),
\]
mutually independently, where
\[
    \Sigma_h=\begin{pmatrix}0.25&0.08\\0.08&0.22\end{pmatrix},
    \qquad
    \Sigma_\nu=\begin{pmatrix}0.20&0.03\\0.03&0.20\end{pmatrix}.
\]
Prices are generated by
\begin{equation}
\label{eq:app-mc-price-equation}
    P_t
    =
    2\mathbf 1_2
    +BW_t
    +\lambda\Lambda h_{0t}
    +\gamma_x X_t
    +\nu_t,
\end{equation}
where
\[
    B=\begin{pmatrix}1.00&0.25\\0.20&0.95\end{pmatrix},
    \qquad
    \Lambda=\begin{pmatrix}0.85&0.10\\0.10&0.80\end{pmatrix},
    \qquad
    \gamma_x=(0.35,-0.25)',
    \qquad
    \lambda=1.5.
\]
The term $\lambda\Lambda h_{0t}$ makes prices endogenous. At the same time, $W_t$ is independent of $h_{0t}$ conditional on $X_t$, so the simulated instruments satisfy the price-side validity condition used by the estimator.

Conditional on the market-level variables, consumers are independent. For each consumer $i$ in market $t$, draw $Z_{it}\sim N(0,1)$. Define
\[
    a_{0t}=\psi_0(P_t,X_t)+h_{0t},
    \qquad
    \eta_{it}=g_0(Z_{it},X_t)+a_{0t}.
\]
The inside-good choice probabilities are $\sigma(\eta_{it})$, with the outside option normalized to zero. Choices are drawn from the multinomial distribution over the outside option and the two inside goods. The simulation uses a scalar consumer covariate even though the high-level Berry--Haile primitive conditions allow for richer within-market variation. This is mainly to keep the setting simple, but it also qualitatively demonstrates that limited information does not automatically preclude my main consistency result. Additionally, the Monte Carlo studies the maintained semi-nonparametric logit specialization, not the full primitive nonparametric identification environment.

\subsection{Consistency simulation: basis functions and true coefficients}\label{app:simulation-bases}

The baseline simulation uses correctly specified finite sieves. Let
\[
    H_z(z)=(z,\ z^2-1,\ z^3-3z)'
    \qquad\text{and}\qquad
    L_x(x)=(1,\ x,\ x^2-1)'.
\]
The $g$ basis is the tensor product
\[
    b(z,x)
    =
    \bigl(H_z(z)',\ xH_z(z)',\ (x^2-1)H_z(z)'\bigr)'
    \in\mathbb R^9.
\]
Thus $g_{0j}(z,x)=b(z,x)'\beta_{0j}$ for $j=1,2$, with
\[
    \beta_0
    =
    \begin{pmatrix}
        0.95&-0.32&0.08&0.25&-0.08&0.025&-0.06&0.030&-0.012\\
       -0.80& 0.24&-0.06&-0.18& 0.07&-0.020& 0.05&-0.025& 0.010
    \end{pmatrix}.
\]
The price-side basis is
\[
    r(p,x)
    =
    \bigl(p_1-2,\ p_2-2,\ x,\ (p_1-2)^2,\ (p_2-2)^2,\ (p_1-2)(p_2-2)\bigr)'
    \in\mathbb R^6,
\]
and $\psi_{0j}(p,x)=r(p,x)'\vartheta_{0j}$, where
\[
    \vartheta_0
    =
    \begin{pmatrix}
        -1.10&0.35&0.25&-0.15&0.05&0.10\\
         0.30&-1.00&-0.20&0.04&-0.12&0.08
    \end{pmatrix}.
\]
The conditional-moment basis is
\[
    q(w,x)
    =
    \bigl(1,\ w_1,\ w_2,\ x,\ w_1^2-1,\ w_2^2-1,\ w_1w_2,\ w_1x,\ w_2x\bigr)'
    \in\mathbb R^9.
\]
The $g$ basis excludes a pure constant and uses mean-zero Hermite terms in $z$, which is consistent with the normalization that removes pure $X$-only components from $g$.

\subsection{Consistency simulation: implementation}\label{app:simulation-implementation}

For a candidate coefficient matrix $\beta$, the inner step profiles each market's composite intercept by solving the share-matching system
\[
    \frac{1}{n_t}\sum_{i=1}^{n_t}
    \{Q_{it}-\sigma(g_\beta(Z_{it},X_t)+a)\}=0.
\]
The implementation first attempts Newton's method using the analytic multinomial-logit Jacobian. If this root solve fails or returns a point outside the compact box $[-8,8]^2$, the code falls back to bounded minimization of the convex log-sum-exp objective function. This fallback is the numerical analogue of imposing a compact parameter space $\mathcal A$ for the inner profile.

The outer optimizer searches over the $J\times 9$ coefficients of $g_\beta$. For each trial value of $\beta$, it profiles all $a_t$, forms the micro residual moment, and computes the closed-form price-side coefficient estimate. The reported baseline uses the identity-weighted unconditional-moment version of the price-side SMD/GMM block,
\[
    \frac{1}{T}\sum_{t=1}^T
    q(W_t,X_t)\otimes\{\hat a_t(g)-\psi(P_t,X_t)\}.
\]
Equivalently, if $A$ is the $T\times J$ matrix of profiled intercepts, $R$ is the $T\times K_\psi$ matrix with rows $r(P_t,X_t)'$, and $Q$ is the $T\times K_q$ matrix with rows $q(W_t,X_t)'$, the price step solves
\[
    \hat\vartheta'
    =
    (R'QQ'R)^{-1}R'QQ'A.
\]
The code can also implement the projected SMD form (as seen in the main text), but the reported baseline uses this identity-weighted version for comparability across the simulation designs. No second-step estimate of $\Sigma(R_t)$ is used.

The full outer criterion is minimized by nonlinear least squares. The micro residual vector and the price-side residual vector are each divided by the square root of their respective moment counts. This block scaling affects finite-sample weighting but does not change the population zero of the moment system.

\subsection{Consistency simulation: evaluation metrics}\label{app:simulation-metrics}

The evaluation grids are fixed across designs and replications and are not used by the estimator. The grid for $g$ contains 41 equally spaced points in $z\in[-2,2]$ and 7 equally spaced points in $x\in[-1,1]$. Denote the resulting grid by $\mathcal E_g$. The grid for $\psi$ contains 600 fixed pseudo-random points: $p_1$ and $p_2$ are drawn uniformly from $[1,3]$, and $x$ is drawn uniformly from $[-1,1]$, using a fixed seed. Denote this grid by $\mathcal E_\psi$.

For the common structural functions, the code computes both root-mean-squared grid errors and grid sup-norm errors. For $f\in\{g,\psi\}$, let $u=(z,x)$ and $\mathcal E_f=\mathcal E_g$ when $f=g$, and let $u=(p,x)$ and $\mathcal E_f=\mathcal E_\psi$ when $f=\psi$. The generic metrics are
\[
    \widehat{\mathrm{RMSE}}_f
    =
    \left\{
    \frac{1}{J|\mathcal E_f|}
    \sum_{j=1}^J\sum_{u\in\mathcal E_f}
    \bigl(\hat f_j(u)-f_{0j}(u)\bigr)^2
    \right\}^{1/2},
    \qquad
    \widehat{\mathrm{SUP}}_f
    =
    \max_{j\le J}\max_{u\in\mathcal E_f}
    |\hat f_j(u)-f_{0j}(u)|.
\]
The main figure uses $\widehat{\mathrm{SUP}}_g$ and $\widehat{\mathrm{SUP}}_\psi$, because these are the finite-grid analogues of the uniform consistency statement for the common functions.

For the market-level generated objects $v\in\{a,h\}$, the reported metric is the average coordinate-level $\ell_2$ error
\[
    \widehat{\mathrm{RMSE}}_v
    =
    \left\{
    \frac{1}{TJ}\sum_{t=1}^T\sum_{j=1}^J
    (\hat v_{jt}-v_{0jt})^2
    \right\}^{1/2}.
\]
This equals $\{J^{-1}T^{-1}\sum_{t=1}^T\|\hat v_t-v_{0t}\|_2^2\}^{1/2}$ and therefore differs from the average market-level squared-error object in Theorem \ref{thm:consistency-main} only by the fixed normalization $J^{-1/2}$.

The normalized curves in Figure \ref{fig:mc-full-consistency} report, for each metric $M$,
\[
    \widetilde M(T,n_t)
    =
    \frac{\mathrm{median}_{r}\{M_r(T,n_t)\}}
    {\mathrm{median}_{r}\{M_r(50,100)\}},
\]
where $r$ indexes Monte Carlo replications. Thus a value below one means that the median error is smaller than the corresponding baseline error.

\subsection{Consistency simulation: Full-estimator design}\label{app:simulation-full-design}

The primary full-estimator design uses the correctly specified finite-sieve DGP and runs the feasible estimator on the nine-point grid
\[
    T\in\{50,100,200\},
    \qquad
    n_t\in\{100,250,500\},
\]
with 50 replications per design point. The figure in the main text summarizes the medians across replications. Since the top panels use sup-norm errors, the plotted values are intentionally stringent: a single poorly estimated grid point can determine the value of $\widehat{\mathrm{SUP}}_g$ or $\widehat{\mathrm{SUP}}_\psi$. This explains why the finite-sample curves need not be perfectly monotone cell-by-cell even though the broad pattern is consistent with the theorem.

\subsection{Consistency simulation: growing-sieve diagnostic}\label{app:simulation-growing-sieve}

The next exercise studies the role of sieve approximation on the price side. To isolate this mechanism, the simulation supplies the true composite intercepts $a_{0t}$ to the price-side estimator. Thus there is no inner-profile error and no outer optimization error. The DGP uses the same baseline price equation, but the true price-side function is no longer contained in the baseline six-dimensional basis. Specifically, for $p_c=(p_1-2,p_2-2)$,
\[
\begin{aligned}
    \psi_{01}(p,x)
    &= r(p,x)'\vartheta_{01}
    +0.30\sin(p_{c1}+0.5x)+0.20p_{c1}p_{c2}x+0.10p_{c1}^3,\\
    \psi_{02}(p,x)
    &= r(p,x)'\vartheta_{02}
    -0.27\sin(p_{c2}-0.4x)+0.18p_{c1}p_{c2}x-0.10p_{c2}^3.
\end{aligned}
\]
The fixed-sieve estimator continues to use the baseline $K_\psi=6$ price basis and the baseline $K_q=9$ instrument basis. The growing-sieve estimator uses polynomial bases in $(p_1-2,p_2-2,x)$ and in $(w_1,w_2,x)$. The dimension schedule is
\[
\begin{array}{c|ccccc}
T & 100 & 200 & 400 & 800 & 1600\\
\midrule
K_\psi \text{ (growing)} & 9 & 19 & 19 & 34 & 34\\
K_q \text{ (growing)} & 10 & 20 & 20 & 35 & 35
\end{array}
\]
with 50 replications at each design point.

\begin{figure}[!t]
    \centering
    \includegraphics[scale=0.78]{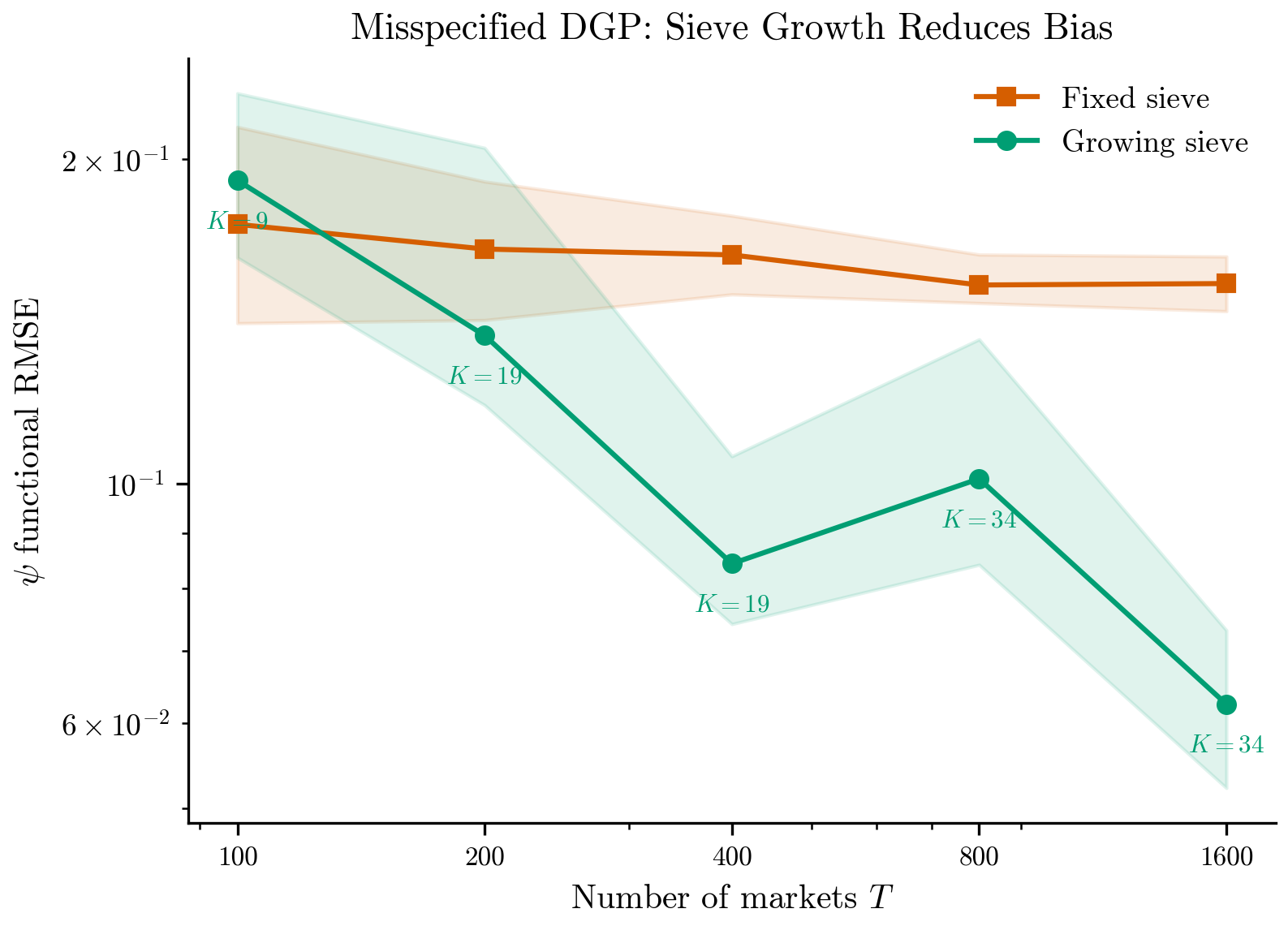}
    \caption{Fixed versus growing price-side sieve under approximation error.}
    \label{fig:mc-growing-sieve}

    \vspace{0.5em}
    \begin{minipage}{0.95\textwidth}
    \footnotesize
    \emph{Notes:}
    The figure reports the oracle-$a_t$ price-side exercise with a smooth truth outside the baseline price-side sieve. The vertical axis is the median $\psi$ grid RMSE across 50 replications. The fixed specification uses the baseline six-dimensional price basis throughout. The growing specification increases the polynomial degree as $T$ increases; the annotations report the corresponding value of $K_\psi$. Because the true composite intercepts are supplied to the price-side estimator, the figure isolates price-side approximation and sampling error rather than inner-profile error.
    \end{minipage}
\end{figure}

Figure \ref{fig:mc-growing-sieve} illustrates the approximation role of the sieve. With a fixed low-dimensional basis, sampling error decreases but approximation error remains, so the curve flattens. With a growing basis, the estimator can reduce approximation bias as $T$ increases. The jumps in performance occur when the polynomial basis becomes rich enough to approximate the additional nonlinear terms in the DGP more closely. This exercise is not the main consistency simulation; rather, it supports the sieve-approximation condition imposed in the consistency theorem.

\subsection{Consistency simulation: price-side robustness}\label{app:simulation-price-robustness}

The final exercise compares the oracle-$a_t$ price-side GMM/SMD estimator to an OLS benchmark. OLS regresses the true composite intercept $a_{0t}$ directly on the price-side basis $r(P_t,X_t)$ and therefore does not use the excluded instruments. Since both estimators are given the true $a_{0t}$, this comparison isolates the price-side endogeneity problem.

The simulation uses the correctly specified baseline price-side basis, $T=200$ markets, $n_t=200$ consumers per market, and 50 replications at each design point. Three one-dimensional sweeps are considered. First, the endogeneity scale $\lambda$ in \eqref{eq:app-mc-price-equation} varies over
\[
    \lambda\in\{0,0.5,1.0,1.5,2.0,2.5\}.
\]
Second, instrument strength varies by replacing $B$ with $\rho B$, where
\[
    \rho\in\{0.05,0.20,0.50,1.00,1.50,2.00\}.
\]
Third, the scale of the demand shock varies by replacing $\Sigma_h$ with $s^2\Sigma_h$, where
\[
    s\in\{0.25,0.50,1.00,1.50,2.00,2.50\}.
\]

\begin{figure}[!t]
    \centering
    \includegraphics[scale=0.65]{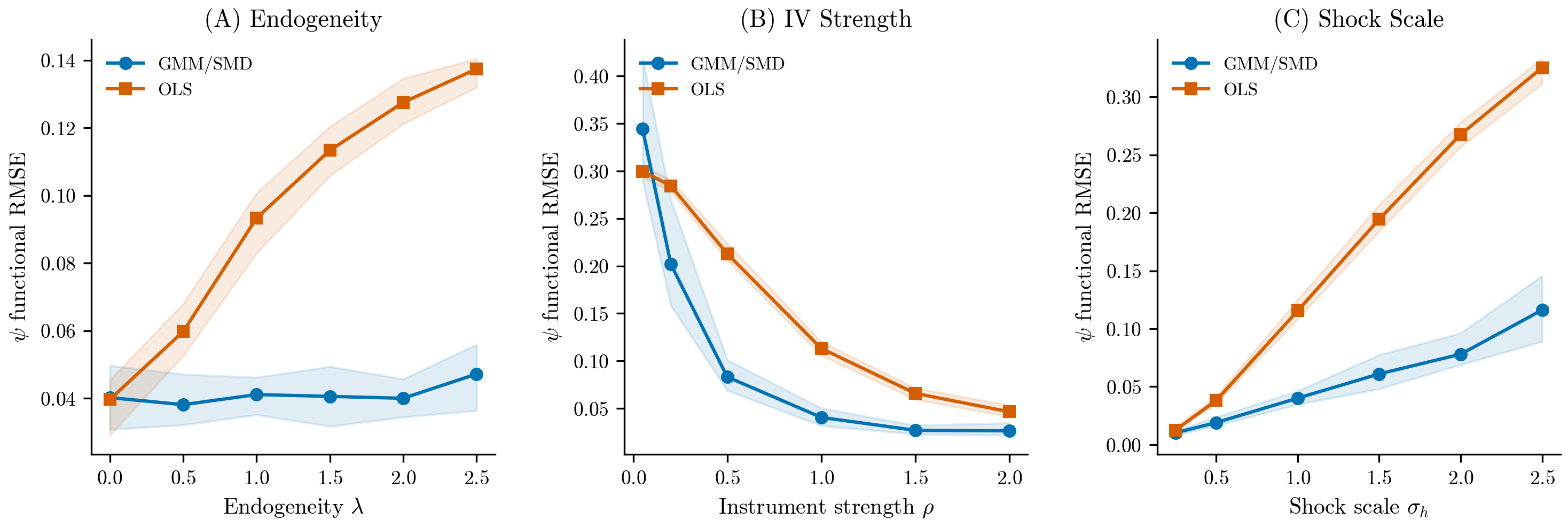}
    \caption{Oracle price-side robustness: GMM/SMD versus OLS.}
    \label{fig:mc-price-robustness}

    \vspace{0.5em}
    \begin{minipage}{0.95\textwidth}
    \footnotesize
    \emph{Notes:}
    The figure compares the oracle-$a_t$ price-side GMM/SMD estimator with an OLS benchmark. Each panel reports the median $\psi$ grid RMSE across 50 replications. The left panel varies the endogeneity scale $\lambda$ in the price equation. The middle panel varies instrument strength by scaling the price-instrument loading matrix by $\rho$. The right panel varies the scale of the unobserved demand shock by replacing $\Sigma_h$ with $s^2\Sigma_h$. Since the true composite intercepts are supplied to both estimators, the comparison isolates the price-side effect of using instruments rather than the generated-object error from profiling.
    \end{minipage}
\end{figure}

Figure \ref{fig:mc-price-robustness} shows the expected qualitative pattern. When prices are exogenous, OLS and GMM/SMD are close because the omitted demand shock is not correlated with prices. As endogeneity increases, the OLS error rises while the instrumented estimator remains comparatively stable. When instruments become stronger, the GMM/SMD estimator improves because the price-side moments become more informative. Finally, increasing the scale of the unobserved demand shock makes the OLS benchmark worse because the endogenous component of prices becomes more important. These patterns confirm that the price-side block is doing the intended IV work: it separates $\psi_0(P_t,X_t)$ from $h_{0t}$ using variation in prices induced by excluded instruments.

\subsection{Counterfactual simulation details}\label{app:counterfactual-details}

The counterfactual simulation is a separate design from the consistency simulation above. It has $J=3$ inside goods and one outside option. Individual utility is
\[
    U_{ijt}
    =
    g z_i+\psi_{0j}(P_t,X_t)+h_{jt}+\varepsilon_{ijt},
    \qquad
    U_{i0t}=\varepsilon_{i0t},
\]
where $\varepsilon_{ijt}$ is i.i.d. Type-I extreme value, $z_i\sim N(0,1)$, and $g=0.3$. The heterogeneity term is common across inside goods and acts as a random coefficient on the inside-good constant. The systematic price-side function $\psi_0$ is the object that differs across models. Depending on the counterfactual, $\psi_0$ is specified to be cubic, quadratic, or to contain cross-price nonlinearities.

Prices are endogenous. They depend on excluded cost shifters, the structural demand shock, and noise. The excluded cost shifters are valid instruments because they shift prices but are independent of the demand shock. Each simulated dataset contains $T=300$ markets and $n_t=500$ consumers per market. The Monte Carlo uses 40 independent replications. In every replication, the simulation records first choices and second choices. The second-choice data are used only by Micro-BLP. The sieve degree is chosen by cross-validation using out-of-sample fit of the composite intercept. It is not chosen using the true counterfactual errors.

% Four estimators are compared.
% \begin{enumerate}[label=(\roman*),leftmargin=2em]
%     \item \emph{Logit-IV} is an aggregate logit demand model with an instrumented linear price term.
%     \item \emph{BLP} is a random-coefficients logit benchmark. It includes random coefficients on the inside-good constant and on price, uses cost-shifter instruments, and is estimated with multi-start numerical optimization.
%     \item \emph{Micro-BLP} is the same random-coefficients logit benchmark augmented with a second-choice micro moment. The moment is the share of inside-good buyers whose second choice is the outside option.
%     \item \emph{Sieve-$\psi$} is the proposed flexible price-side estimator. In the counterfactual exercise, the heterogeneity loading is held fixed at the DGP value so that the comparison focuses on recovery of the flexible price-side function and the counterfactual objects depending on it.
% \end{enumerate}

The simulation has two regimes. In the \emph{oracle} regime, every estimator is supplied with the true counterfactual demand shock. For BLP, this means evaluating counterfactuals using the true unobserved quality. For the sieve and logit estimators, it means using the true $h_t$ when predicting counterfactual demand. This regime isolates functional-form error. In the \emph{full} regime, each estimator uses its own recovered demand shock. This is the realistic case and is the regime shown in Figure \ref{fig:mc-counterfactual-full}.

For a price perturbation $\delta$, let $s_{jt}^{0}(\delta)$ denote the true counterfactual share of product $j$ in market $t$, and let $\hat s_{jt}^{m}(\delta)$ denote the corresponding prediction from model $m$. Let $\delta=0$ denote the observed-price baseline. The share-change RMSE plotted in panel (a) of Figure \ref{fig:mc-counterfactual-full} is
\[
    \mathrm{RMSE}^{\Delta s}_m(\delta)
    =
    \left\{
    \frac{1}{TJ}\sum_{t=1}^T\sum_{j=1}^J
    \left[
    \{\hat s_{jt}^{m}(\delta)-\hat s_{jt}^{m}(0)\}
    -
    \{s_{jt}^{0}(\delta)-s_{jt}^{0}(0)\}
    \right]^2
    \right\}^{1/2}.
\]
Thus the object is not the level of market shares but the predicted change in market shares under the counterfactual price movement.

For diversion, let $D_{jk,t}^{0}(\delta)$ denote the true diversion ratio from product $j$ to product $k$ at the evaluation point indexed by $\delta$, and let $\hat D_{jk,t}^{m}(\delta)$ be the corresponding model prediction. In differential form,
\[
    D_{jk,t}(\delta)
    =
    -\frac{\partial s_{kt}(\delta)/\partial p_{jt}}
    {\partial s_{jt}(\delta)/\partial p_{jt}},
    \qquad k\ne j.
\]
The reported diversion error is the average absolute error over the evaluated markets and product pairs:
\[
    \mathrm{AE}^{D}_m(\delta)
    =
    \frac{1}{|\mathcal I_D|}
    \sum_{(t,j,k)\in\mathcal I_D}
    \left|\hat D_{jk,t}^{m}(\delta)-D_{jk,t}^{0}(\delta)\right|,
\]
where $\mathcal I_D$ is the set of market-product-product triples used in the evaluation.

For welfare, let
\[
    CS_t(\delta)
    =
    \E_Z\left[
    \log\left\{1+\sum_{j=1}^J
    \exp\bigl(gZ+\psi_{0j}(P_t^\delta,X_t)+h_{jt}\bigr)
    \right\}
    \right]
\]
denote expected inclusive value in market $t$ under the counterfactual price vector $P_t^\delta$. The welfare change is $\Delta CS_t(\delta)=CS_t(\delta)-CS_t(0)$. Each model produces its own estimate $\widehat{\Delta CS}^{m}_t(\delta)$. The reported welfare error is
\[
    \mathrm{AE}^{CS}_m(\delta)
    =
    \frac{1}{T}\sum_{t=1}^T
    \left|
    \widehat{\Delta CS}^{m}_t(\delta)
    -
    \Delta CS_t^{0}(\delta)
    \right|.
\]
Welfare is measured in utility units, i.e. as a change in expected inclusive value.

\begin{table}[!t]
\centering
\caption{Counterfactuals: oracle regime.}
\label{tab:counterfactual-oracle-scorecard}
\resizebox{\textwidth}{!}{%
\begin{tabular}{lccccc}
\toprule
Front & Logit-IV & BLP & Micro-BLP & Sieve-$\psi$ & Winner\\
\midrule
Share-change RMSE & 0.0183 & 0.0212 & 0.0124 & 0.0074 & Sieve-$\psi$\\
Elasticity & 1.5503 & 1.6948 & 1.7289 & 0.1440 & Sieve-$\psi$\\
Diversion & 0.3320 & 0.1834 & 0.3353 & 0.1163 & Sieve-$\psi$\\
Pass-through & 0.1029 & 0.0400 & 0.1343 & 0.0107 & Sieve-$\psi$\\
Merger & 0.3288 & 0.1153 & 0.3300 & 0.1786 & BLP\\
Welfare & 0.0328 & 0.0348 & 0.0425 & 0.0208 & Sieve-$\psi$\\
\bottomrule
\end{tabular}%
}

\vspace{0.5em}
\begin{minipage}{0.95\textwidth}
\footnotesize
\emph{Notes:}
Each entry is the mean over evaluation points of the median counterfactual error across 40 replications. Lower values are better. In the oracle regime, each estimator is supplied with the true counterfactual demand shock, so the comparison isolates functional-form error.
\end{minipage}
\end{table}

\begin{table}[!t]
\centering
\caption{Counterfactuals: full regime.}
\label{tab:counterfactual-full-scorecard}
\resizebox{\textwidth}{!}{%
\begin{tabular}{lccccc}
\toprule
Front & Logit-IV & BLP & Micro-BLP & Sieve-$\psi$ & Winner\\
\midrule
Share-change RMSE & 0.0183 & 0.0211 & 0.0120 & 0.0088 & Sieve-$\psi$\\
Elasticity & 1.7187 & 2.2089 & 1.8093 & 5.1783 & Logit-IV\\
Diversion & 0.3299 & 0.2037 & 0.3308 & 0.1239 & Sieve-$\psi$\\
Pass-through & 0.1523 & 0.1175 & 0.1695 & 0.1913 & BLP\\
Merger & 0.3305 & 0.1409 & 0.3299 & 0.1753 & BLP\\
Welfare & 0.0319 & 0.0361 & 0.0328 & 0.0226 & Sieve-$\psi$\\
\bottomrule
\end{tabular}%
}

\vspace{0.5em}
\begin{minipage}{0.95\textwidth}
\footnotesize
\emph{Notes:}
Each entry is the mean over evaluation points of the median counterfactual error across 40 replications. Lower values are better. In the full regime, each estimator recovers its own demand shock from the simulated data. The main text focuses on share changes, diversion, and welfare because these are central counterfactual objects and because the full-regime plots for these fronts are visually stable. The table also reports the derivative and equilibrium objects, where the flexible sieve can be more sensitive to extrapolation and generated-profile error.
\end{minipage}
\end{table}

The tables above clarify the scope of the counterfactual results. In the oracle regime, where functional form is isolated, the sieve estimator wins five of the six fronts. This is consistent with the design: the price side of the DGP is nonlinear, and the sieve is the only model allowed to approximate that nonlinear price surface flexibly. In the full regime, the sieve wins three of the six fronts: share changes, diversion, and welfare. BLP performs best on merger and pass-through, while Logit-IV performs best on elasticity. These exceptions are informative in that they show that the flexible estimator is not uniformly dominant in the presence of generated-object error and extrapolation. In particular, derivative and equilibrium objects can amplify noise in the recovered price surface. The economically important message is therefore not that flexibility always dominates a linear model, but that flexibility can materially improve central counterfactual objects when the true price side is nonlinear.

\clearpage

\section{Coefficient Counts and Dimensionality}
\label{app:coefficient-counts}

This appendix clarifies the dimensions of the estimator. There are three objects to count: structural coefficients, profiled market intercepts, and auxiliary coefficients used in the price-side conditional-moment projection.

Let \(d_z\), \(d_x\), and \(d_w\) denote the dimensions of \(Z_{it}\), \(X_t\), and \(W_t\), respectively, and let \(J\) be the number of inside goods. Suppose total-degree polynomial sieves are used. The number of polynomial terms in \(d\) variables of degree at most \(L\), including the constant, is
\[
    K(d,L)=\binom{d+L}{L},
\]
where \(\binom{d+L}{L}\) is the binomial coefficient. For example, with two variables and degree \(2\), the terms are \(\{1,u_1,u_2,u_1^2,u_1u_2,u_2^2\}\), so \(K(2,2)=6\).

The heterogeneity function \(g_0\) is a \(J\)-vector of functions of \((Z_{it},X_t)\). Since \(g_0\) is normalized to remove terms that depend only on \(X_t\), the number of \(g\)-basis terms per product is
\[
    K_g^{\mathrm{norm}}
    =
    \binom{d_z+d_x+L_g}{L_g}
    -
    \binom{d_x+L_g}{L_g}.
\]
The price-side function \(\psi_0\) is a \(J\)-vector of functions of \((P_t,X_t)\), where \(P_t\in\mathbb R^J\). Thus the number of \(\psi\)-basis terms per product is
\[
    K_\psi
    =
    \binom{J+d_x+L_\psi}{L_\psi}.
\]
Therefore the number of common structural demand coefficients is
\begin{equation}
\label{eq:app-coeff-structural-short}
    d_{\mathrm{struct}}
    =
    J\left(K_g^{\mathrm{norm}}+K_\psi\right)
    =
    J\left[
    \binom{d_z+d_x+L_g}{L_g}
    -
    \binom{d_x+L_g}{L_g}
    +
    \binom{J+d_x+L_\psi}{L_\psi}
    \right].
\end{equation}
These are the coefficients that parameterize the common demand system.

The market intercepts add \(J\) nuisance components per market, or \(JT\) total. The estimator profiles these objects market by market rather than optimizing over them jointly with the structural coefficients. Profiling reduces the dimension of the outer optimization, but the intercepts are still estimated objects; this is why the asymptotic framework requires \(n_{\min}\to\infty\).

The price-side SMD basis \(q^{K_q}(W_t,X_t)\) does not parameterize utility; rather, it approximates the conditional moment
\[
    \E\!
    \left[
    a_{0t}-\psi_0(P_t,X_t)
    \mid W_t,X_t
    \right]
    =0.
\]
With a total-degree polynomial basis of order \(L_q\),
\[
    K_q
    =
    \binom{d_w+d_x+L_q}{L_q},
\]
so the auxiliary price-side projection dimension is \(J K_q\). Thus \(K_q\) matters for the number of moments and for the statistical complexity of the SMD block, but it is not a structural demand coefficient.

If one counts every coefficient or nuisance object computed inside the estimator, the total is
\begin{equation}
\label{eq:app-coeff-total-short}
    d_{\mathrm{computed}}
    =
    d_{\mathrm{struct}}+JT+JK_q.
\end{equation}

As a simple example, suppose \(J=3\), \(d_z=2\), \(d_x=1\), \(d_w=2\), and \(L_g=L_\psi=L_q=2\). Then \(K_g^{\mathrm{norm}}=\binom{5}{2}-\binom{3}{2}=7\), \(K_\psi=\binom{6}{2}=15\), and \(K_q=\binom{5}{2}=10\). The model has \(3(7+15)=66\) common structural coefficients. If \(T=100\), the profile step computes \(300\) market-intercept components, and the price-side SMD block uses \(30\) auxiliary projection coefficients.

These formulas show where the curse of dimensionality enters. The heterogeneity side grows with \(d_z+d_x\) and \(L_g\), while the price side grows with \(J+d_x\) and \(L_\psi\). The price-side block is especially demanding because it is learned from market-level variation, so its effective sample size is \(T\), not \(N=\sum_t n_t\). The estimator mitigates the dimensionality problem by using individual-level observations to learn \(g_0\) and by profiling the \(JT\) market intercepts rather than including them in the outer optimization. It does not eliminate the curse of dimensionality in \(\psi_0\). Future work could impose additional economic structure, use regularization or model selection, exploit symmetries across products, or target lower-dimensional counterfactual objects directly.

\clearpage

\bibliographystyle{chicago}
\nocite{*}
\bibliography{Metrics2ndyearpaper}

\end{document}